\newcommand{\algname}{ExaLT}
\newcommand{\R}{\texttt{R}~}
\newtheorem{theorem}{Theorem}
\begin{document}

\title{Accurate Computation of Survival Statistics in Genome-wide Studies}

\author{Fabio Vandin\thanks{Department of Computer Science, Brown University, Providence, RI.}$^{~,\dag}$\\\texttt{\small vandinfa@cs.brown.edu} \and Alexandra Papoutsaki$^{*,\dag}$\\\texttt{\small alexpap@cs.brown.edu} \and Benjamin J. Raphael$^{*\text{,}}$\thanks{Center for Computational Molecular Biology, Brown University, Providence, RI.}$^{~,\ddagger}$\\\texttt{\small braphael@cs.brown.edu} \and Eli Upfal$^{*,}$\thanks{Corresponding authors.}\\\texttt{\small eli@cs.brown.edu}}

\date{}
\maketitle

\begin{abstract}
A key challenge in genomics is to identify genetic variants that distinguish patients with different \emph{survival time} following diagnosis or treatment.
While the log-rank test is widely used 
for this purpose, nearly all implementations of the log-rank test rely on 
an asymptotic approximation that is 
not appropriate in many genomics applications.  This is because: the two populations determined by a genetic variant may have very different sizes; and the evaluation of many possible variants demands highly accurate computation of very small $p$-values.
We demonstrate this problem for 
cancer genomics data where the standard log-rank test leads to many false positive associations between somatic mutations and survival time.
We develop and analyze a novel algorithm, \underline{Exa}ct \underline{L}og-rank \underline{T}est (\algname), that accurately computes the $p$-value of the log-rank statistic under an exact distribution that  is appropriate for any size populations.
We demonstrate the advantages of \algname\ on data from published cancer genomics studies, finding significant differences from the reported $p$-values.  We analyze somatic mutations in six cancer types from The Cancer Genome Atlas (TCGA), finding mutations with known association to survival as well as several novel associations.  In contrast, standard implementations of the log-rank test report dozens-hundreds of likely false positive associations as more significant than these known associations.
\end{abstract}

\onehalfspacing

\section*{Introduction}
\label{sec:intro}
Next-generation DNA sequencing technologies are now enabling the measurement of exomes, genomes, and mRNA expression in many samples.  The next challenge is to interpret these large quantities of DNA and RNA sequence data.  In many human and cancer genomics studies, a major goal is to find associations between an observed phenotype and a particular variable (e.g., a single nucleotide polymorphism (SNP), somatic mutation, or gene expression) from genome-wide measurements of many such variables.  For example, many cancer sequencing studies aim to find somatic mutations that distinguish patients with fast-growing  tumors that require aggressive treatment from patients with better prognosis.  Similarly, many human disease studies aim to find genetic alleles that distinguish patients who respond to particular treatments, i.e. live longer.  In both of these examples one tests the association between a DNA sequence variant and the \emph{survival time}, or length of time that patients live following diagnosis or 
treatment.  

The most widely approach to determine the statistical significance of an observed difference in survival time between two groups is the log-rank test \cite{pmid5910392,citeulike:7445010}.  
An important feature of this test, and related tests in survival analysis \cite{Kalbfleisch:2002fk}, is their handling of \emph{censored} data:  in clinical studies, patients may leave the study prematurely or the study may end before the deaths of all patients.  Thus, a lower bound on the survival time of these patients is known.
Importantly, many studies are designed to test survival differences between two pre-selected populations that differ by one characteristic; e.g. a clinical trial of the effectiveness of a drug.  These populations are selected to be approximately equal in size with a suitable number of patients to achieve appropriate statistical power  (Fig.~\ref{fig:idea}A). In this setting, the null distribution of the (normalized) log-rank statistic is asymptotically (standard) normal; i.e. follows the (standard) normal distribution in the limit of infinite sample size.
 Thus, nearly every available implementation (e.g., the \texttt{LIFETEST} procedure in \texttt{SAS}, and the \texttt{survdiff} package in \R and \texttt{SPlus}) of the log-rank test computes $p$-values from the normal distribution, an approximation that is accurate asymptotically (see Supporting Information).

The design of a genomics study is typically very different from the traditional clinical trials setting.  In a genomics study, high-throughput measurement of many genomics features (e.g. whole-genome sequence or gene expression) in a cohort of patients is performed, and the goal is to \emph{discover} those features that distinguish survival time. Thus, the measured individuals are repeatedly partitioned into two populations determined by a genomic variable (e.g. a SNP) and the log-rank test, or related survival test, is performed (Fig.~\ref{fig:idea}B).  Depending on the variable the sizes of the two populations may be very different: e.g. most somatic mutations identified in cancer sequencing studies, including those in driver genes, are present in $<20$\% of patients \cite{Cancer-Genome-Atlas-Research-Network:2011uq,Cancer-Genome-Atlas-Network:2012uq,Cancer-Genome-Atlas-Network:2012ys,Cancer-Genome-Atlas-Research-Network:2013zr,Garraway:2013uq,Vogelstein:2013fk}.  Unfortunately, in the setting of unbalanced populations, the normal approximation of the log-rank statistic gives poor results.  

While this fact has been noted in the statistics literature~\cite{Latta1981,KellererC1983}, it is not widely known, and indeed the normal approximation to the log-rank test is routinely used to test the association of somatic mutations and survival time (e.g. \cite{Jiao:2011qy,Gaidzik:2011fk} and numerous other publications).
A second issue in genomics setting is that the repeated application of the log-rank test demands the  accurate calculation of very small $p$-values, as the computed $p$-value for a single test must be corrected for the large number of tests; e.g. through a Bonferroni or other multiple-hypothesis correction.
An inaccurate approximation of $p$-values will result in an unacceptable number of false positives/negative associations of genomic features with survival.
These defining characteristics of genomics applications, unbalanced populations and necessity of highly-accurate $p$-values for multiple-hypothesis correction, indicate that standard implementations of the log-rank test are inadequate.

We propose to compute the $p$-value for the log-rank test using an exact distribution determined by the observed number of individuals in each population.  
Perhaps the most famous use of an exact distribution is Fisher's exact test for testing the independence of two categorical variables arranged in a $2 \times 2$ contingency table.  
When the counts in the cells of the table are small, the exact test is preferred to the asymptotic approximation given by the $\chi^2$ test~\cite{1922}.
Exact tests for comparing two survival distributions have received scant attention in the literature.
There are three major difficulties in developing such a test. First, there are multiple observed features that determine the exact distribution including the number of patients in each population and the observed censoring times.
With so many combinations of parameters it is infeasible to pre-compute distribution tables for the test. Thus, we need an efficient algorithm that computes the $p$-value for any given combination of observed parameters.  
Second, we cannot apply a standard Monte-Carlo permutation test to this problem since we are interested in very small $p$-values that are expensive to accurately estimate with such an approach\footnote{The user manual of \texttt{StatXact}~\cite{statxact}, a popular software package for exact inference, reports that, in general, a few hours might be necessary to obtain $p$-values accurate to $10^{-3}$ using  Monte-Carlo permutation tests.}.
Third, there are two possible null distributions for the log-rank test, the conditional and the permutational~\cite{citeulike:7445010,citeulike:3343260,Mantel85,HeimannN1998}.  While both of these distributions are asymptotically normal, the permutational distribution is more appropriate for genomics settings~\cite{citeulike:7445010,Brown84}, as we detail below.  Yet no efficient algorithm is known to compute the $p$-value of the log-rank test under the exact permutational distribution.

We introduce an efficient and mathematically sound algorithm, called \algname~(for \underline{Exa}ct \underline{L}og-rank \underline{T}est), for computing the $p$-value under the exact permutational distribution. 
The run-time of \algname\ is not function of the $p$-value, enabling the accurate calculation of small $p$-values. In contrast to other heuristic approaches, \algname\ returns a conservative estimate of the $p$-value, thus  guaranteeing rigorous control on the number of false discoveries. 
We test \algname\ on 
data from two published cancer studies~\cite{Huang:2009fk,Yan:2009uq}, finding substantial differences between the $p$-values obtained by our exact test and the approximate $p$-values obtained by standard tools in survival analysis. In addition, we run \algname\ on
somatic mutation and survival data from The Cancer Genome Atlas (TCGA) and find a number of mutations with significant association with survival time.  Some of these such as \textit{IDH1} mutations in glioblastoma are widely known; for others such as \textit{BRCA2} and \textit{NCOA3} mutations in ovarian cancer there is some evidence in the literature; while the remaining are genuinely novel.  Most of these are identified only using the exact permutational test of \algname.  In contrast,
the genes reported as highly significant using standard implementations of the log-rank test are not supported by biological evidence; moreover,  these methods report dozens-hundreds of such likely false positive associations as more significant than known genes associated with survival.
These results show that our algorithm is practical, efficient, and avoids a number of false positives, while allowing the identification of genes known to be associated with survival and the discovery of novel, potentially prognostic biomarkers.

\section*{Materials and Methods}

\subsection*{Background: The Log-Rank Test}
We focus here on the two-sample log-rank test of comparing the survival distribution of two groups, $P_0$ and $P_1$.
Let $t_1 < t_2 < \dots <t_k$ be the times of observed, uncensored events. Let $R_j$ be the number of patients \emph{at risk} at time $t_j$, i.e. the number of patients that survived (and were not censored) up to this time, and let $R_{j,1}$ be the number of $P_1$ patients at risk at that time. Let $O_j$ be the number of observed uncensored events in the interval $[t_{j-1},t_j]$, and let $O_{j,1}$ be the number of these events in group $P_1$. If the survival distributions of $P_0$ and $P_1$ are the same, then the expected value  $E[O_{j,1}]=O_j \frac{R_{j,1}}{R_j}$. The log-rank statistic~\cite{pmid5910392,citeulike:7445010} measures the sum of the deviations of $O_{j,1}$  from the expectation,\footnote{In some clinical applications one is more interested in either earlier or later events. In that case the statistic is a weighted sum of the deviations. Our results easily translate to the weighted version of the test.} 
$V= \sum_{j = 1}^{k} \left(O_{j,1} - O_j \frac{R_{j,1}}{R_{j}} \right)$.

Under the null hypothesis of no difference in the survival distributions of the two groups, $E[V]=0$,  and $Pr(|V|\geq |v|)$ is the $p$-value of an observed value $v$.
Two possible null distributions are considered in the literature, the permutational distribution and the conditional distribution (see Figure~S1).

\subsection*{Permutational Log-Rank Test}
In the \emph{permutational} log-rank test~\cite{citeulike:7445010}, the null distribution is obtained by assigning each patient to population $P_0$ or $P_1$ independently of the survival time. Let $n$ be the total number of patients, and $n_1$ the number of patients in group $P_1$. We consider the sample space of all ${n \choose n_1}$ possible selections of survival times and censoring information from the observed data for the $n_1$ patients of group $P_1$.
Each such selection is assigned equal probability ${n \choose n_1}^{-1}$. 

\subsection*{Conditional Log-Rank Test}
In the conditional log-rank test \cite{citeulike:3343260}, the null distribution is defined by conditioning on $O_{j}$ and $R_{j,1}$ for  $j=1,\dots,k$. If at time $t_j$ there are a total of $R_j$ patients at risk, including $R_{j,1}$ patients in $P_1$, then under the assumption of no difference in the survival of $P_0$ and $P_1$ the $O_j$ events at time $t_j$ are split between $P_0$ and $P_1$ according to a hypergeometric distribution with parameters $R_j$, $R_{j,1}$, and $O_{j}$.  

\subsection*{Choice of Null Distribution and Estimation of $p$-values}
We considered the two versions of the log-rank test, the conditional~\cite{citeulike:3343260} and the permutational~\cite{citeulike:7445010}.  The conditional null distribution for the log-rank test is preferred in clinical trials because it does not assume equal distribution of censoring in the two populations. This is important in clinical trials when patients in the two groups are subject to different treatments that may affect their probability of leaving the trial.  However, unequal censoring is not a concern in genome studies, since we do not expect a DNA sequence variant to influence a patient's decision to leave the trial. 
In both  null distributions the prefix sums of the log-rank statistic define a martingale, and by the martingale central limit theorem~\cite{Kalbfleisch:2002fk}, the normalized log-rank statistic 
has an asymptotic $\mathcal{N}(0,1)$ distribution. The normalizing variance is different
in the two null distributions,  but asymptotically the two variances are the same~\cite{Mantel85}, leading to the same $p$-values in the two versions of the test for large balanced populations. Therefore, the distinction between the two versions of the test is largely ignored in practice, where most papers that use the log-rank test or software packages that implement the test do not document the specific test. 
This can be explained, in part, by the widespread use of the log-rank in other scenarios, like clinical trials, where the issues specific to genomic settings do not arise.  The differences between the tests are also rarely discussed in the literature, although there is some discussion \cite{Brown84,Mantel85} on which variance is the appropriate one to use  to compute $p$-values from the asymptotic approximation.

In the case of small and unbalanced populations, the two null distributions yield different $p$-values, and the normal approximation gives poor estimates of both (Fig. S2).  On simulations of cancer data, we found that the $p$-values from the permutational exact test are significantly closer to the empirical $p$-values than the $p$-values obtained from the conditional exact test (Fig.~S3 and Supplemental Text).  Moreover, we prefer the permutational null distribution because it better models the null hypothesis for mutation data and has greater power.

While the exact computation of $p$-values in the conditional null distribution can be computed in polynomial time using a dynamic programming algorithm~\cite{pmid14969496}, no polynomial time algorithm is known for the exact computation of the $p$-value in the permutational null distribution: current implementations are based on a complete enumeration algorithm, making its use impractical for large number of patients\footnote{The \texttt{StatXact} manual recommends using the enumeration algorithm only when the number of samples is at most 20.}.
Several heuristics have been developed for related computations 
including: saddlepoint methods to approximate the mid-$p$-values~\cite{CJS:CJS10002}, methods based on the Fast Fourier Transform (FFT) \cite{Keich:2005fk,Nagarajan:2005uq,Pagano1983}, and branch and bound methods~\cite{Bejerano:2004uq}. 
Such heuristics are shown to be asymptotically correct, converging to the correct $p$-value as the number of samples and computation time grows to infinity.
However, no explicit bounds are known for the accuracy of the computed $p$-value when these heuristics are applied to a fixed sample size and under a bounded computation time.  Given
the systematic error we report below for the
standard implementation of the log-rank test, we argue that such guarantees are essential in this and many similar settings.

\subsection*{\algname}
We developed an algorithm, \underline{Exa}ct \underline{L}og-rank \underline{T}est (\algname), to compute the $p$-value of the log-rank statistic under the exact permutational distribution.
In particular, we designed a fully polynomial time approximation scheme (FPTAS) for computing the $p$-value under the permutational distribution. Our algorithm gives an explicit bound on the error in approximating the true $p$-value, for any given sample size, in polynomial time. Furthermore, the output of our scheme is always a conservative or valid $p$-value estimate.

Since the log-rank statistic depends only on the \emph{order} of the events and not on their actual times, we can w.l.o.g. treat the survival data (including the censored times) as an ordered sequence of events, with no two patients having identical survival times.
Let $n_j=|P_j|$, for $j=0,1$,  be the number of patients in each population and let  $n= n_0 + n_1$ be the total number of patients.
We represent the data by two binary vectors $\mathbf{x} \in \{0,1\}^n$ and $\mathbf{c} \in \{0,1\}^n$, where $x_i=1$ if the $i$th event was in $P_1$ and $x_i=0$ otherwise; $c_i=0$ if the $i$th event was censored and $c_i=1$ otherwise.  Note that $n_1=\sum_{i=1}^n x_i$.
In this notation the log-rank statistic is
\begin{equation}
V=V({\mathbf{x}, \mathbf{c}}) =\sum_{j=1}^n c_j \left(x_j -  \frac{n_1 -\sum_{i=0}^{j-1} x_i}{n-j+1}\right).
\end{equation}

Let $V_t(\mathbf{x}) =\sum_{j=1}^t c_j (x_j -  \frac{n_1 -\sum_{i=0}^{j-1} x_i}{n-j+1})$ be the test statistic $V(\mathbf{x})$ at time $t$.  Note that since $n$, $n_1$, and $\mathbf{c}$ are fixed, the statistic depends only on the value of $\mathbf{x}$.
Assume the observed log-rank statistic has value $v$. The $p$-value of the observation $v$ is the probability $Pr(|V(\mathbf{x})|\geq |v|)$ computed in the probability space in which the $n_1$
events of $P_1$ are uniformly distributed among the $n$ events.

For any $0\leq t \leq n$ and $0\leq r \leq n_1$, let 
$P(t,r,v)=Pr (V_t(\mathbf{x}) \le v \text{ and } \sum_{i=1}^t x_i =r )$
denote the joint probability of $V_t(\mathbf{x}) \le v$ and exactly $r$ events from $P_1$ occur in the first $t$ events.
Let $Q(t,r,v)=Pr(V_t(\mathbf{x}) \geq v \text{ and } \sum_{i=1}^t x_i =r).$
denote the joint probability of $V_t(\mathbf{x}) \ge v$ and exactly $r$ events from $P_1$ occur in the first $t$ events.

At time $0$: $P(0,r,v) =1$ if $r=0$ and $v\geq 0$, otherwise $P(0,r,v) =0$. Similarly  $Q(0,r,v) =1$ if $r=0$ and $v\leq 0$, otherwise $Q(0,r,v) =0$.

 Given the values of $P(t,r,v)$ 
 for all $v$ and $r$, we can compute the values of $P(t+1,r,v)$ 
 using the following relations:
If  $c_{t+1}=1$ then
\[\begin{array}{rcl}
P(t+1,r,v) &=&(1-\frac{n_1-r}{n-t})P(t,r,v+\frac{n_1-r}{n-t})\\
&+&\frac{n_1-(r-1)}{n-t}P(t,r-1,v-(1-\frac{n_1-(r-1)}{n-t})).
\end{array}\]
If $c_{t+1}=0$ then
 $$P(t+1,r,v)=(1-\frac{n_1-r}{n-t})P(t,r,v)+\frac{n_1-(r-1)}{n-t}P(t,r-1,v).$$
Analogous equations hold for $Q(t,r,v)$.

The process defined by these equations guarantees that the $n$ events always include $n_1$ events of $P_1$. 
Thus, the $p$-value of the observation $v$ is given by
$ Pr(|V(\mathbf{x})|\geq |v|)=P(n,n_1,-|v|)+Q(n,n_1,|v|).$
 
For fixed $t$ and $r$, $P(t+1,r,v)$ and $Q(t+1,r,v)$ are step functions. 
For example,  if $c_{t+1}=1$, then as we vary $v$, $P(t+1,r,v)$ changes only at the points in which $P\left(t,r,v+\frac{n_1-r}{n-t}\right)$ or $ P\left(t,r-1,v-\left(1-\frac{n_1-(r-1)}{n-t}\right)\right)$ change values. Thus, we only need to compute the function $P(t+1,r,v)$ at these points. At $t=0$ the function $P(0,r,v)$ assumes up to 2 values.
 If $P(t,r,v)$ assumes $m(t,r)$ values and $P(t,r-1,v)$ assumes $m(t,r-1)$ values, then $P(t+1,r,v)$ assumes up to
$m(t,r)+m(t,r-1)$ values. 

Similar relations hold for $P(t+1,r,v)$ when $c_{t+1}=0$, and for computing $Q(t,r,v)$ in the two cases.
Thus, in $n$ iterations the process computes the exact probabilities $P(n,r,v)$ and $Q(n,r,v)$, but it may have to compute probabilities for an exponential number of different values of $v$ in some iterations.

We construct a polynomial time algorithm by modifying the above procedure to compute the probabilities of only a polynomial number of values in each iteration.
We first observe that since the probability space consists of $n\choose {n_1}$ equal probability events, all non-zero probabilities in our analysis are $\geq n^{-n_1}$.
For $\varepsilon>0$,  fix $\varepsilon_1$ such that $(1-\varepsilon_1)^{-n} = 1 + \varepsilon$. Note that $\epsilon_1=O(\epsilon/n)$.  We discretize  the interval of possible non-zero probabilities $[n^{-n_1}, 1]$, using the values
$(1-\varepsilon_1)^k$, for $k=0,\dots,\ell =\frac{-n_1 \log n}{\log (1-\varepsilon_1)}=
O( \varepsilon^{-1} n n_1 \log n )$.
The approximation algorithm estimates $P(t,r,v)$ with a step function
$\tilde{P}(t,r,v)$ defined by a sequence of $\ell$ points $v^{t}_{k,r}$, $k=0,\dots,\ell$, such that $v^{t}_{k,r}$  is an estimate for the largest $v$ such that
$P(t,r,v)\leq (1-\varepsilon_1)^k$. We prove that if iteration $t$
computes a function $\tilde{P}(t,r,v)(1-\epsilon_1)^t \leq P(t,r,v)\leq \tilde{P}(t,r,v)$, then starting from $\tilde{P}(t,r,v)$ the $t+1$ iteration computes
an estimate $\tilde{P}(t+1,r,v)(1-\epsilon_1)^{t+1} \leq P(t+1,r,v)\leq \tilde{P}(t+1,r,v)$. (Fig.~S4 provides the intuition for how the approximation at time $t+1$ is computed from the approximation at time $t$.) Thus, after $n$ iterations we have an $\epsilon$-approximations for $P(n,n_1,v)$. Similar computations obtain an $\epsilon$-approximation for $Q(n,n_1,v)$.
 The details of the algorithm and analysis are given in the Supporting Information.

We implemented the FPTAS in our software \algname~and evaluated its performance as $n$ and $\varepsilon$ varies, and by comparing its running time with the running time of the exhaustive enumeration algorithm for the permutational test (Fig.~S5). Our implementation of the FPTAS is very efficient, with significant speed-up compared to the exhaustive algorithm. The \texttt{C++} implementation of \algname~ and a \R package to run \algname~are available at \texttt{http://compbio.cs.brown.edu/projects/exactlogrank/}.

\subsection*{Synthetic Data}
We used synthetic data to assess the accuracy of the asymptotic approximations. We generated data as follow: when no censoring was included, we generated the survival times for the patients from an exponential distribution, and the group labeling (mutated or not) were assigned to patients independently of their survival time; when censoring in $f\%$ of the patients was included, we selected $f\%$ of the patients to have censored survival time  independently of their survival time and group. The censoring time was assumed to happen just before the observed survival time.

We used synthetic data to compare  the empirical p-value and the p-values from the exact tests as well. In this case we generated
synthetic data using two related but different procedures. In the first procedure, we mutate a gene $g$ in exactly
a fraction $f$ of all patients. In the second procedure, we mutated a gene $g$ in each patient independently
with probability $f$. The second procedure models the fact that mutations in a gene $g$ are found in each
patient independently with a certain probability. In both cases the survival information is generated from the same distribution for all patients. The survival time comes
from the exponential distribution with expectation equal to $30$, and censoring variable from an exponential
distribution resulting in $30\%$ of censoring.

\subsection*{Mutation and Clinical TCGA Data}
We analyzed somatic mutation  and clinical data, including survival information, from the public TCGA data portal (\texttt{https://tcga-data.nci.nih.gov/tcga/}). In particular we considered single nucleotide variants and small indels for  
colorectal carcinoma (COADREAD), glioblastoma multiforme (GBM), kidney renal clear cell carcinoma (KIRC), lung squamous cell carcinoma (LUSC), ovarian serous adenocarcinoma  (OV), and uterine corpus endometrial carcinoma (UCEC). 
We restricted our analysis to patients for which somatic mutation and survival data were both available.  We only considered genes mutated in $> 1\%$ of patients.
Since genes mutated in the same set of patients would have the same association to survival, they are all equivalent for an automated analysis of association between mutations and survival; we then collapsed them into \emph{metagenes}, recording the genes that appear in a metagene. In our experiments we used \algname~to compute the exact permutational $p$-value whenever the mutation frequency of a gene was $\le 10\%$, and we used the asymptotic approximation for genes with mutations frequency higher than $10\%$, since our simulations shows that the approximation is accurate for the range of parameters considered (see Supporting Information).

\section*{Results}
\subsection*{Accuracy of Asymptotic Approximations} 
We first assessed the accuracy of the asymptotic approximation for the log-rank test on simulated data from a cohort of 500 patients with a gene $g$ mutated in $5\%$ of these patients, a frequency that is not unusual for cancer genes in large-scale sequencing studies \cite{Cancer-Genome-Atlas-Research-Network:2011uq,Cancer-Genome-Atlas-Network:2012uq,Cancer-Genome-Atlas-Network:2012ys,Cancer-Genome-Atlas-Research-Network:2013zr}.  We compared the survival times of the population $\mathcal{P}(g)$ of patients with a mutation in $g$ to the survival of the population $\bar{\mathcal{P}}(g)$ of patients with no mutation in $g$.
We computed $p$-values using \R \texttt{survdiff} on multiple random instances (in order to obtain a distribution for the $p$-value of $g$) in which $\mathcal{P}(g)$ and $\bar{\mathcal{P}}(g)$ have the same survival distribution. Fig.~S2 in Supporting Information shows the difference between the $p$-values computed by the asymptotic approximation are much smaller than expected under the null hypothesis, with the smallest $p$-values showing the largest deviation from the expected uniform distribution.

The inaccuracy of the asymptotic log-rank test results in a large number of false discoveries: for example, considering a randomized version of a cancer mutation dataset 
(Table~S1) in which no mutation is associated with survival (i.e. no true positives), 
the asymptotic log-rank test reports 110 false discoveries (Bonferroni correction) or 291 false discoveries (False Discovery Rate (FDR) correction), with significance level $\alpha =0.05$ (Fig.~1).
We found that the inaccuracy of the asymptotic log-rank test results mostly from the imbalance in the sizes of the two populations, rather than the total number of patients or the number of patients in the smaller population (see Fig.S2a-d and Supporting Text).

\subsection*{Published Cancer Studies}
To demonstrate the applicability of \algname\, we compared the $p$-values from the exact distribution to $p$-values from the asymptotic approximation reported in two recently published cancer genomics studies \cite{Huang:2009fk,Yan:2009uq}.  
Huang et al.~\cite{Huang:2009fk} divides patients into groups defined by the number of risk alleles of five single nucleotide polymorphisms (SNPs), and compares the survival distribution of the resulting populations. 
In one comparison, the survival distribution of 2 patients ($13\%$ of total) with at most 2 risk alleles was compared with the survival distribution of 14 patients with more than 2 risk alleles, and a $p$-value of 0.012 is reported.  Thus, this association is significant at the traditional significance level of $\alpha=0.05$.  However, \algname\ computes an exact $p$-value of $0.17$,  raising doubts about this association.
In another comparison patients at a different disease stage were considered, and the division of the patients into groups as above resulted in comparing the survival distribution of 8 patients ($17\%$ of total) with the survival distribution of 40 patients, and a $p$-value of $6\times 10^{-6}$ is reported.  In contrast, \algname\ computes an exact $p$-value of $2\times 10^{-3}$, a reduction of three orders of magnitude in the significance level.  Additional comparisons are shown in the Supporting Information.

In~\cite{Yan:2009uq}, the survival distribution of 14 glioblastoma patients ($11\%$ of total) with somatic \textit{IDH1} or \textit{IDH2} mutations was compared to the survival distribution of 115 patients with wild-type \textit{IDH1} and \textit{IDH2}. The reported $p$-value from the 
asymptotic approximation is $2\times 10^{-3}$, while the exact permutational $p$-value is $5\times 10^{-4}$, indicating a \emph{stronger} association between somatic mutations in \textit{IDH1} or \textit{IDH2} and (longer) survival than reported. Notably, this same association 
has been reported in three other glioblastoma studies~\cite{Nobusawa:2009fk,Myung:2012zr,Houillier:2010fk}.

\subsection*{TCGA Cancer data}
\label{sec:res_cancer}
We analyzed somatic mutation and survival data from studies of six different cancer types (Table~S1) from The Cancer Genome Atlas (TCGA). For each mutated gene, we compared the $p$-value obtained using the asymptotic approximation as computed by the \R package \texttt{survdiff} 
to the exact $p$-value as computed by \algname (Tables~S2-S4).
Fig.~2 shows the exact  $p$-values and the \R \texttt{survdiff} $p$-values for the glioblastoma multiforme (GBM) dataset and ovarian serous adenocarcinoma (OV) dataset. The $p$-values for the other datasets are shown in Fig.~S6.

For most datasets the asymptotic $p$-values obtained from \R \texttt{survdiff} are very different from the ones obtained with the exact $p$-values obtained by \algname, and the ranking of the genes by $p$-value is very different as well (see Supporting Information). For example, in GBM \emph{none} of the top 25 genes reported by \R are in the list of the top 25 genes reported by the exact permutational test.
Since genomics studies are typically focused on the \emph{discovery} of novel hypotheses that will be further validated, this striking difference in the ranking of genes by the two algorithms is important: a  poor ranking of genes by their association with survival will lead to many false discoveries undergoing additional experimental validation.
While several of genes ranked in the top 10 by \algname\ are known to have mutations associated with survival (Table~1), 
\emph{none} of the top 10 genes reported by \R \texttt{survdiff} (Table~S4) have mutations known to be associated with survival.
\R \texttt{survdiff} ranks dozens-hundreds of presumably false positives associations as more significant than these known genes.  Moreover, \R \texttt{survdiff} reports extremely strong association with survival for many of these higher ranked, but likely false positive, genes; e.g., in uterine corpus endometrial carcinoma (UCEC), 13 genes have $p<10^{-8}$ and an additional 19 genes have $p < 10^{-5}$, but none of these have a known association with survival.

The top 10 genes reported by \algname\ contain several novel associations that are supported by the literature and are not reported using \R \texttt{survdiff}.
In GBM, \algname\ identifies 
\textit{IDH1} ($p \le 7\times 10^{-5}$), \emph{VARS2} ($p \le 8\times 10^{-3}$) and \emph{GALR1} ($p \le 9\times 10^{-3}$), among others. 
As noted above, the association between mutations in \textit{IDH1} and survival has been previously reported in GBM \cite{Yan:2009uq,Nobusawa:2009fk,Myung:2012zr,Houillier:2010fk}.
A germline variant in \emph{VARS2} has been reported to be a prognostic marker, associated with survival, in early breast cancer patients~\cite{Chae:2011ly}. The expression of \emph{GALR1} has been reported to be associated with survival in colorectal cancer~\cite{Stevenson:2012fk}, and its inactivation by methylation has been associated with survival in head and neck cancer~\cite{Misawa:2008uq,Misawa:2013kx}. In OV, \algname\ identifies
\textit{BRCA2} ($p \le 7\times 10^{-3}$) and \textit{NCOA3} ($p \le 3\times 10^{-3}$), and others. 
Germline and somatic mutations in \emph{BRCA2} (and \emph{BRCA1}) have been associated with survival in two ovarian cancer studies~\cite{Bolton:2012bh,Cancer-Genome-Atlas-Research-Network:2011uq}.
A polymorphism in \emph{NCOA3} has been associated with breast cancer~\cite{Burwinkel:2005vn}, and its amplification has been associated with survival in ER-positive tumors~\cite{Burandt:2013ys}. 

Thus, the exact test implemented by \algname\ appears to have higher sensitivity and specificity in detecting mutations associated with survival on the sizes of cohorts analyzed in TCGA.
Finally, we note that the exact conditional test obtains results similar to \R \texttt{survdiff}, confirming that the 
the exact permutational test implemented by \algname\ is a more appropriate exact test for genomics studies. (See Supporting Information.)

\section*{Discussion}
In this work we focus on the problem of performing survival analysis in a genomics setting, where the populations being compared are not defined in advance, but rather are determined by a genomic measurement.  The two distinguishing features of such studies are that the populations are typically unbalanced and  that many survival tests are performed for different measurements, requiring highly accurate $p$-values for multiple hypothesis testing corrections.
We show empirically that the asymptotic approximations used in available implementations of the log-rank test produce anti-conservative estimates of the true $p$-values when applied to unbalanced populations, resulting in a large number of false discoveries. This is not purely a phenomenon of small population size:  the approximation remains inaccurate even for a large number of samples (e.g., 100) in the small population.
This inaccuracy makes asymptotic approximations unsuitable for cancer genomic studies, where the vast majority of the genes are mutated in a small proportion of all samples \cite{Cancer-Genome-Atlas-Network:2012ys,Cancer-Genome-Atlas-Research-Network:2011uq,Cancer-Genome-Atlas-Network:2012uq} and also for genome-wide association studies (GWAS) where rare variants may be responsible for a difference in drug response or other phenotype.

The problem with the log-rank test for unbalanced populations has previously been reported~\cite{Latta1981,KellererC1983}, but the implications for genomics studies have not received attention.
Note that the issue of unbalanced populations is further exacerbated by any further subdivision of the data: e.g. by considering mutations in specific locations or protein domains; by considering the impact of mutations on a specific therapeutic regimen;  by testing the association of mutations with survival in a particular subtype of cancer; by grouping into more than two populations; or by correcting for additional covariates such as age, stage, grade, etc. All of these situations occur in genomics studies.

We considered the two versions of the log-rank test, the conditional~\cite{citeulike:3343260} and the permutational~\cite{citeulike:7445010}, and we found that 
the exact permutational distribution is more accurate in genomics settings.
We introduce \algname, the first efficient algorithm to compute highly accurate $p$-values for the exact permutational distribution.
We implemented and tested our algorithm on data from two published cancer studies, showing that the exact permutational $p$-values are significantly different from the $p$-values obtained using the asymptotic approximations. We also ran \algname\ on somatic mutation and survival data from six cancer types from The Cancer Genome Atlas (TCGA), showing that our algorithm is practical, efficient, and  allows the identification of genes known to be associated with survival in these cancer types as well as novel associations.

While our focus here was the log-rank test,  our results are relevant to more general survival statistics.
First, in some survival analysis applications, samples are given different weights; our algorithm can be easily adapted to a number of these different weighting schemes. Second, an alternative approach in survival analysis is to perform linear regression using the Cox Proportional-Hazards model~\cite{Kalbfleisch:2002fk}. Testing the significance of the resulting coefficients in the regression is typically done using a test that is equivalent to the log-rank test, and thus our results are relevant for this approach as well.  See Supporting Information.

The challenges of extending multivariate regression models to the multiple-hypothesis setting of genome-wide measurements is not straightforward.
Direct application of such a multivariate Cox regression will often not give reasonable results as: there are a limited number of samples and a large number of genomic variants; and many variants are rare and not associated with survival.  
Witten and Tibshirani (2010) \cite{pmid19654171} recently noted these difficulties for gene expression data stating that: ``\textit{While there are a great number of methods in the literature for identification of significant genes in a microarray experiment with a two-class outcome \dots the topic of identification of significant genes with a survival outcome is still relatively unexplored.}''  We propose that exact tests such as the one provided here will be useful building blocks for more advanced models of survival analysis in the genomics setting.

\section*{Acknowledgments}
This work is supported by NSF grants IIS-1016648 and IIS-1247581. BJR is supported by a Career Award at the Scientific Interface from the Burroughs Wellcome Fund, an Alfred P. Sloan Research Fellowship, and an NSF CAREER Award (CCF-1053753). EU is a member of the scientific advisory board and a consultant at Nabsys.
The results published here are in whole or part based upon data generated by The Cancer Genome Atlas pilot project established by the NCI and NHGRI. Information about TCGA and the investigators and institutions who constitute the TCGA research network can be found at  \texttt{http://cancergenome.nih.gov/}.

\newpage
\singlespacing

\begin{figure}
\centering
\includegraphics[width=\textwidth]{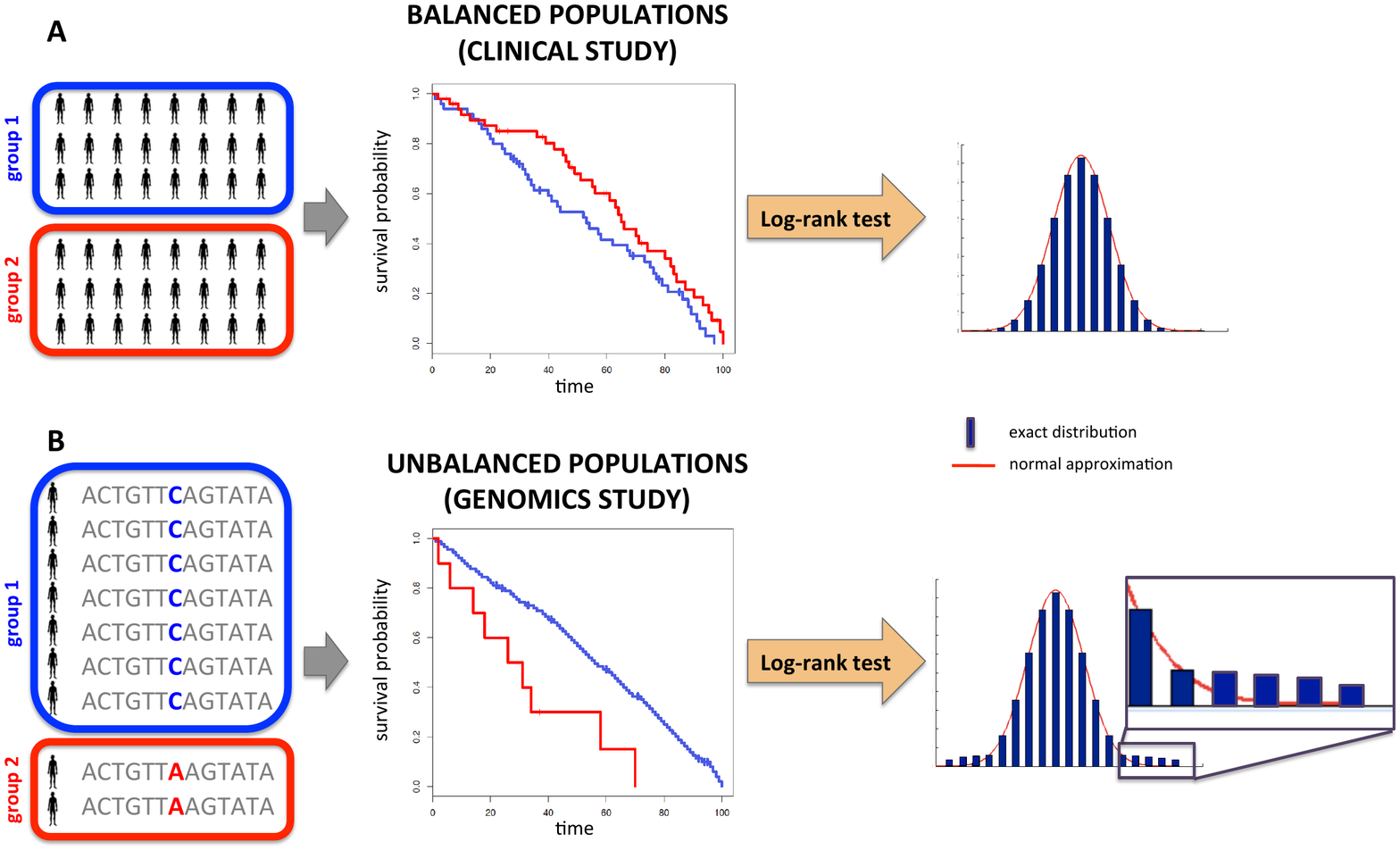}
\caption{Difference between survival analysis in a clinical setting with balanced populations and genomics setting, with unbalanced populations. (A) In a typical clinical study, two pre-selected groups of similar size are compared. 
Because the groups are balanced and each has a suitable number of patients, the asymptotic approximation (normal distribution) used in common implementations of the log-rank test gives an accurate approximation of the exact distribution, resulting in accurate $p$-values. (B) In a genomics study, the two groups are defined by a genetic variant.  In many cases, the sizes of the groups are unbalanced, with one group being much larger than the other. 
In this situation, the asymptotic distribution does not accurately approximate the exact distribution of the log-rank statistic, and the resulting $p$-values computed from the tail of the distribution (see inset) are inaccurate.}
\label{fig:idea}
\end{figure}

\begin{figure}
\centering
\includegraphics[width=0.5\textwidth]{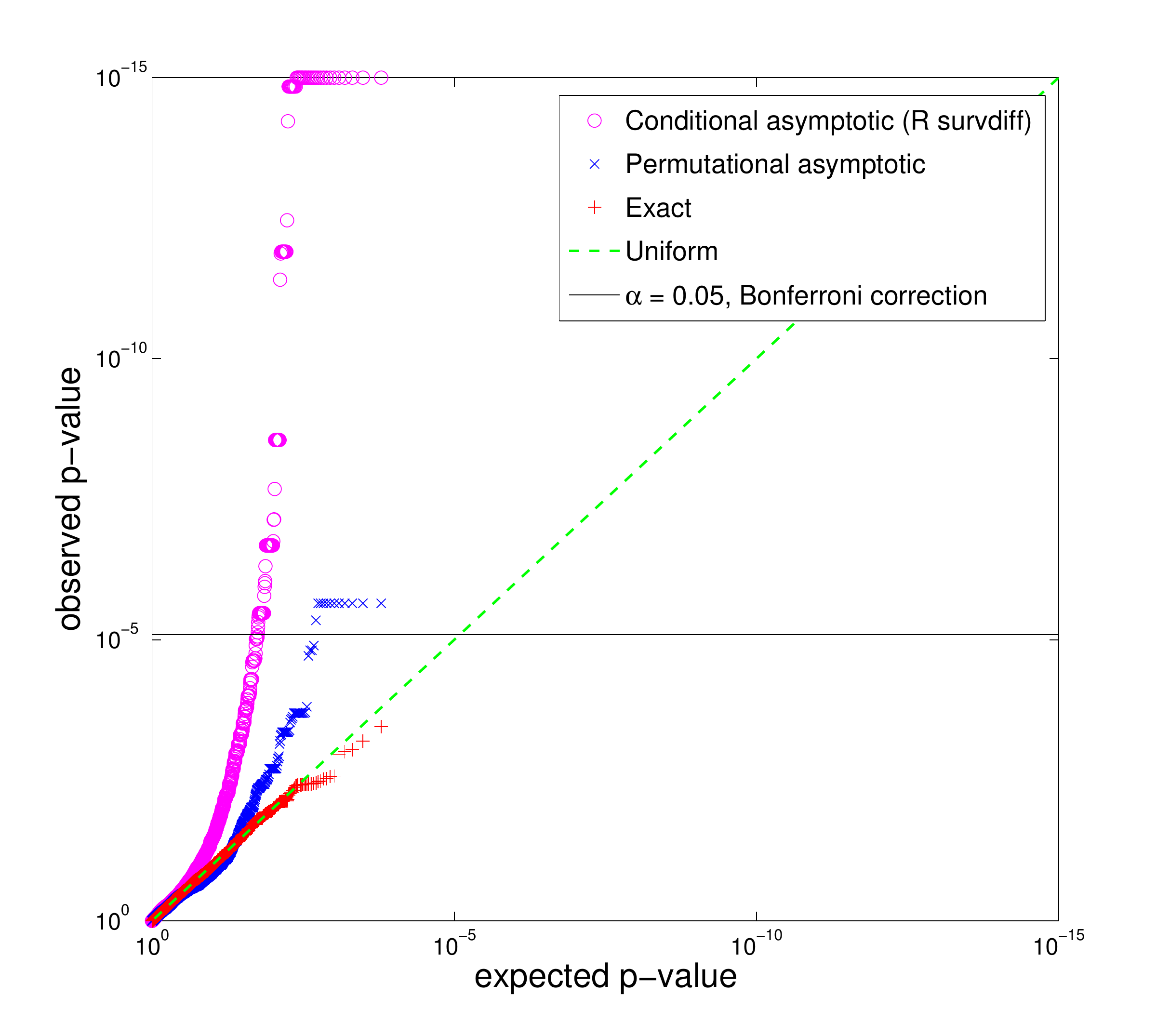}
\caption{Differences between observed and expected $p$-values from different forms of the log-rank test on a randomized cancer dataset consisting of somatic mutations in 6184 genes. The $p$-values for the genes should be distributed uniformly (green line), since there is no association between mutations and survival in this random data.  Asymptotic approximations of the log-rank statistic (purple and blue) yield $p$-values that deviate significantly from the uniform distribution, incorrectly reporting many genes whose mutations are significantly associated with survival.  
In particular, the asymptotic log-rank test in \R reports 110 genes with significant association, using a Bonferroni corrected $p$-value $<0.05$ (black line), or 291 genes with significant association using a less conservative FDR $=0.05$.  In contrast, the exact test makes no false discoveries.}
\label{fig:randomGBM}
\end{figure}

\begin{figure}
                \includegraphics[width=0.4\textwidth]{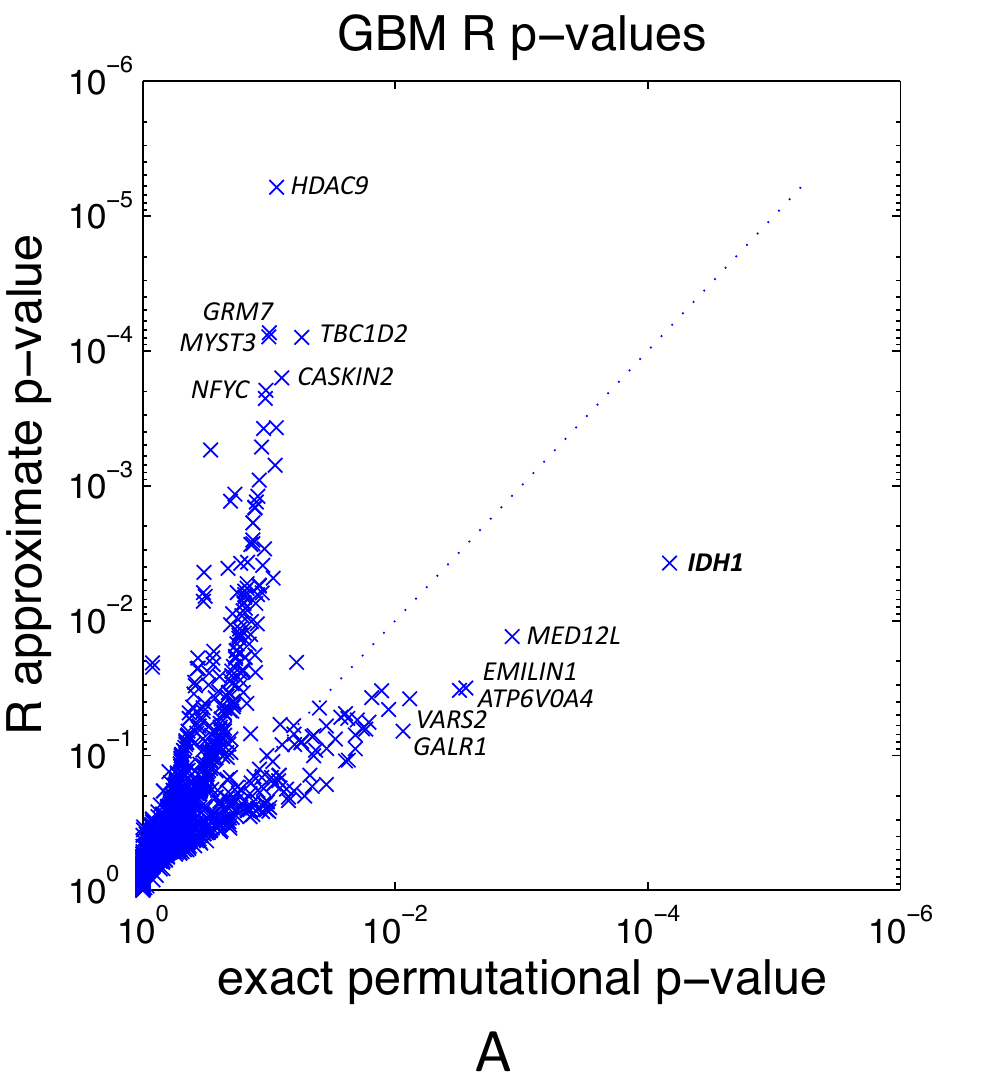}
                \label{fig:R_GBM}
                \quad
                \includegraphics[width=0.4\textwidth]{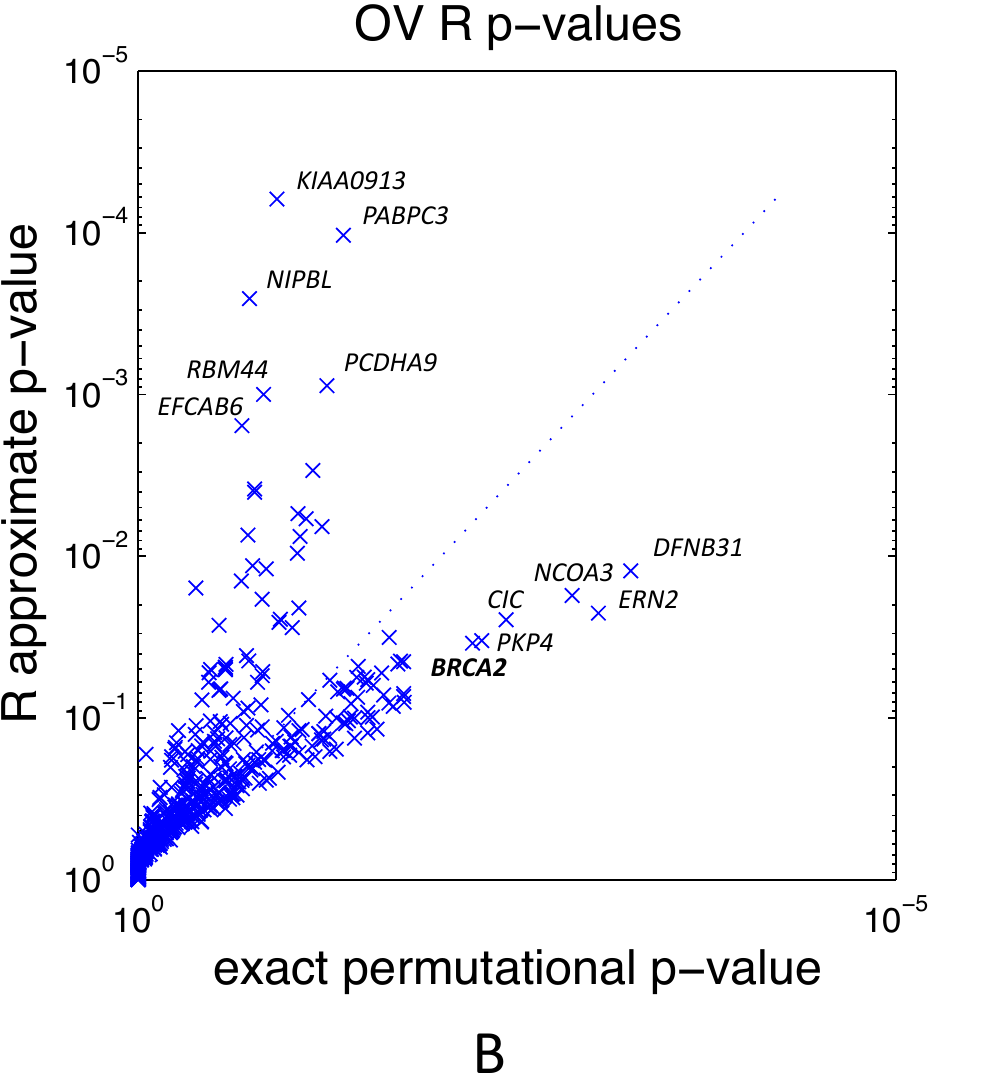}
                \label{fig:R_OV}
        \caption{Comparison of the $p$-values of association between somatic mutations and survival time for TCGA glioblastoma (GBM) and ovarian (OV) datasets.  Each data point represents a gene. (A) Comparison of the \R \texttt{survdiff} $p$-values and the exact permutational $p$-values for the GBM dataset. (B) Comparison of the \R \texttt{survdiff} $p$-values and the exact permutational $p$-values for the OV dataset.}
        \label{fig:cancer_data}
\end{figure}

\FloatBarrier

\begin{table*}
\label{tab:results}
\caption{Genes with mutations previously reported to associated with survival that were identified by \algname. For each gene we show: the cancer type in which the gene is identified; the rank and $p$-value using the exact permutational test computed by \algname\ and the approximate test computed by the\texttt{survdiff} package in \R; the number of samples with a mutation in the gene; and references supporting the association of mutations with survival.}
\begin{footnotesize}
\begin{tabular}{cccccccccc}
& & & \multicolumn{2}{c}{ \algname\ (exact permutational) } & &  \multicolumn{2}{c}{\R \texttt{survdiff} (asymptotic)} & & \\
\cline{4-5} \cline{7-8}
gene & cancer type & & rank & $p$ & & rank & $p$ & num. mut. & refs\\
\hline
\emph{KRAS} & COADREAD & & 4 & 1.3e-02&  & 82 & 1.4e-02& 85 & \cite{Heinemann:2009kx}\\
\emph{FBXW7} & COADREAD & & 10 & 5.6e-02&   & 142 & 5.8e-02& 21 &\cite{Cancer-Genome-Atlas-Network:2012ys}\\
\emph{IDH1} & GBM & & 1 & 6.7e-05&   & 27 & 3.8e-03& 15 & \cite{Nobusawa:2009fk,Myung:2012zr}\\
\emph{BAP1} & KIRC & & 9 & 9.7e-03&  & 98 & 1.2e-02& 27 & \cite{Hakimi:2012ve}\\
\emph{BRCA2} & OV & & 6 & 6.2e-03&   & 32 & 3.5e-02& 9 & \cite{Cancer-Genome-Atlas-Research-Network:2011uq,Bolton:2012bh}\\
\emph{ARID1A} & UCEC & & 5 & 9.7e-03&  & 219 & 1.2e-02& 73 & \cite{Wang:2011dq}\\
\hline
\end{tabular}
\end{footnotesize}
\end{table*}

\FloatBarrier
\newpage

\newpage

\newcommand{\problem}[2]{\vspace{5pt}\noindent{\bf #1:}~\emph{#2}\vspace{5pt}}
\renewcommand{\thefigure}{S\arabic{figure}}
 \renewcommand{\thetable}{S\arabic{table}}

{\Large Supplementary Information: Accurate Computation of Survival Statistics in Genome-wide Studies}

\vspace{1cm}

\section*{Implementations of the log-rank test}

Common statistical packages provide implementations\footnote{This information was derived directly from the software manual and/or the publication cited in the manual.} of the log-rank test for the following distributions:
\begin{itemize}
\item asymptotic conditional: SAS (\texttt{LIFETEST}), R and S-Plus (\texttt{survdiff}), SPSS, GraphPad Prism.
\item asymptotic permutational and exact permutational: StatXact.
\end{itemize}

\section*{Background} 
\label{model}
\subsection*{Model}
Suppose a set $\mathcal{G}$ of genes was sequenced in a collection $\mathcal{P}$ of patients, all of whom have the same disease.
Each sequenced gene\footnote{One may also consider mutations at different levels of resolution; e.g. partitioning patients according to mutations in individual nucleotides or protein domains.} $g \in \mathcal G$ partitions the set of patients into two subsets: the $\mathcal{P}(g)$, with patients with a mutation in $g$, and the $\bar{\mathcal{P}}(g)$, with patients with no mutation in $g$.  The goal is to identify genes whose mutational status is highly correlated with the survival time, in the sense that the survival distribution of patients in $\mathcal{P}(g)$ is different from the survival distribution of patients in $\bar{\mathcal{P}}(g)$.
A key challenge in survival analysis is dealing with \emph{censored} patients whose exact survival time is unknown.  Censoring occurs for a variety of reasons, but the most common is that the study only lasts for a finite amount of time, and some fraction of patients remain alive at the conclusion of the study.  In addition, 
during the course of the study patients may leave the study for a variety of reasons, that are unrelated to their treatment or disease state. 
The censored survival time is the last time the patient was observed in the study, which is a lower bound 
for the patient's survival time\footnote{In some references patients who survived the study are called \emph{right censored} and patients who withdrew from the study are called \emph{randomly} censored.}.  Survival analysis assumes that censoring is \emph{non informative},  i.e. the event that a patient is censored is independent of the patient's survival beyond the censoring time. 
The log-rank test~\cite{pmid5910392} (or family of tests) is the most commonly used non-parametric test for comparing the survival distribution of two or more populations with data subject to censoring. The advantage of this test is that it includes the censored data in its statistic, rather then removing it from the data.  Since a large fraction of patients may be censored (e.g., up to $94\%$ in the data below), it is not desirable to remove this ``missing data" from consideration. In the section below,  we describe two different versions of the log-rank test, the conditional log-rank and the permutational log-rank test.  

\newpage
\subsection*{Basic survival analysis: the log-rank test}
\label{sec:estimate}
\begin{wrapfigure}{l}{0.32\textwidth}
\centering
  \includegraphics[width=0.34\textwidth]{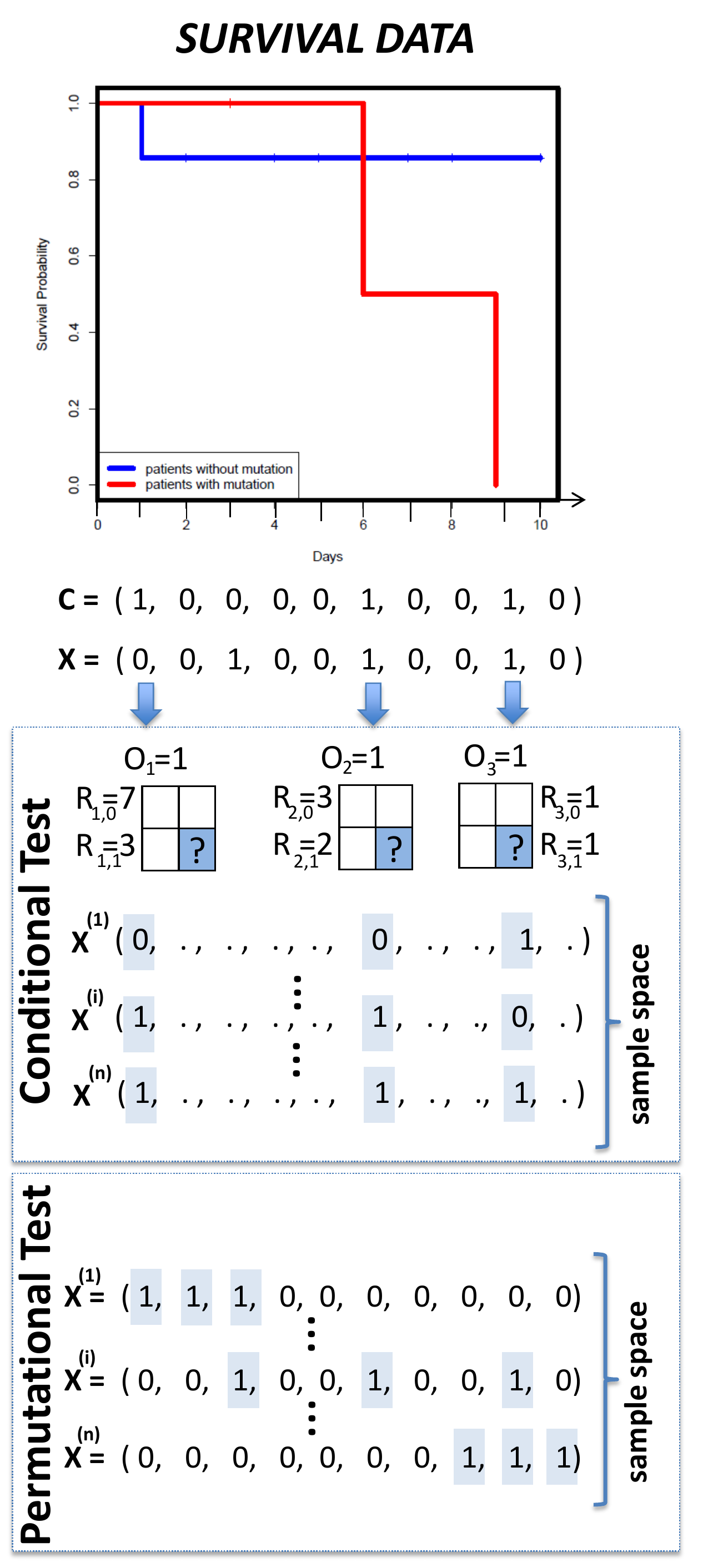}
\caption{(Top) The log-rank test compares the Kaplan-Meier curves of the two groups. 
(Middle) Survival data is represented by sorting patients by increasing survival. $\mathbf{x}$ represents group labels for patients, and $\mathbf{c}$ represents censoring information ($c_i=0$ if event at time $t_i$ is censored, $c_i=1$ otherwise). 
(Bottom) The conditional test is defined by a series of independent contingency tables, conditioning on the patients at risk at each non-censored time.
The permutational test considers all ${n \choose n_1}$ possible locations of the $n_1$ patients with label 1 in the vector $\mathbf{x}$.}
\label{fig:summary}
\end{wrapfigure}
We focus on the two-samples log-rank test of comparing the survival distribution of two groups, $P_0$ and $P_1$.
Let $t_1 < t_2 < \dots <t_k$ be the times of observed not censored events. Let $R_j$ be the number of patients \emph{at risk} at time $t_j$, i.e. the number of patients that survived (and were not censored) up to this time, and let $R_{j,1}$ be the number of $P_1$ patients at risk at that time. Let $O_j$ be the number of observed, not censored events in the interval $[t_{j-1},t_j]$, and let $O_{j,1}$ be the number of these events in group $P_1$. If the survival distributions of $P_0$ and $P_1$ are the same then in expectation $E[O_{j,1}]=O_j \frac{R_{j,1}}{R_j}$.  The log-rank statistic~\cite{pmid5910392,citeulike:7445010} measures the sum of the deviations of $O_{j,1}$ from this equal distribution\footnote{In some clinical applications one is more interested in either earlier or later events. In that case, the statistic is a weighted sum of the deviations. Our results easily translate to the weighted version of the test.} expectation,
\begin{equation}
V= \sum_{j = 1}^{k} \left(O_{j,1} - O_j \frac{R_{j,1}}{R_{j}} \right).
\end{equation}

Since the log-rank statistic depends only on the \emph{order} of the events and not on their actual times, we can w.l.o.g. treat the survival data (including censored times) as an ordered sequence of events, with no two patients having identical survival times.
Let $n_i=|P_i|$ be the number of patients in each set and let  $n= n_0 + n_1$ be the total number of patients.
We represent the data with two binary vectors $\mathbf{x} \in \{0,1\}^n$ and $\mathbf{c} \in \{0,1\}^n$, where $x_i=1$ if the $i$th event was in $P_1$ and $x_i=0$ otherwise; $c_i=0$ if the $i$th event was censored and $c_i=1$ otherwise.  Note that $n_1=\sum_{i=1}^n x_i$.
In this notation the log-rank statistic is
\begin{equation}
V=V({\mathbf{x}, \mathbf{c}}) =\sum_{j=1}^n c_j \left(x_j -  \frac{n_1 -\sum_{i=0}^{j-1} x_i}{n-j+1}\right).
\end{equation}

Clearly, the further $V$ is from zero, the more likely it is the case that the two survival distributions are different. To quantify this intuition, we define the null hypothesis of no difference in the survival distributions of the two groups, and then compute the distribution of the test statistic $V$ under the null hypothesis. Two possible null distributions are considered in the literature, defining two versions of the log-rank test (see Figure~\ref{fig:summary}).

\paragraph{Conditional log-rank test~\cite{citeulike:3343260}.}
In this version, the null distribution is defined by conditioning on $O_{j}$ and $R_{j,1}$ for  $j=1,\dots,k$. If at time $t_j$ there are a total of $R_j$ patients at risk, including $R_{j,1}$ patients in $P_1$, then under the assumption of no difference in the distributions of $P_0$ and $P_1$, we expect the $O_j$ events at that time to be split between $P_0$ and $P_1$ according to a hypergeometric distribution with parameters $R_j$, $R_{j,1}$, and $O_{j}$.  

Under this null distribution the expectation of the log-rank statistic is $0$, and its variance is $\sigma_h^2 = \sum_{j=1}^{k} {O_j\frac{R_{1,j}}{R_j}  \left(1-\frac{R_{1,j}}{R_j}\right)\frac{R_j - O_j}{R_j - 1}  }$ (Mantel-Haenszel variance~\cite{citeulike:3343260}.
Note that this test does not assume equal distribution of censoring in the two groups.  This property is important in clinical trials when patients in the two groups are subject to different treatments that may affect their probability of leaving the trial. In the case of cancer mutation data, under the null hypothesis it is unlikely that the presence of a mutation changes the probability that a patient leaves the trial, and thus we do not face this difficulty.  However, a major disadvantage of this test is that the number of events in $P_1$, generated by this distribution, is a random variable that is equal to $n_1$ only on average. 
 In the case of an unbalanced population, where $n_1$ is small, it can significantly affect the computed $p$-value.

\paragraph{Permutational log-rank test~\cite{citeulike:7445010}.}

In this version, we observe that under the null hypothesis the distribution of the group labels, $x_i$'s, is independent of the survival information. Therefore, we consider the sample space of all ${n \choose n_1}$ possible locations of the $n_1$ patients of group $P_1$ in the vector $\mathbf{x}$,
 and each possibility is assigned equal probability ${n \choose n_1}^{-1}$. For this reason, the resulting distribution is usually called \emph{permutational} distribution of the log-rank statistic~\cite{citeulike:7445010}.
Under this null hypothesis the expectation of the log-rank statistic is $0$, and the variance~\cite{Brown84} is $\sigma_{p}^2 = \frac{n_1 n_2}{n(n-1)} \left(k - \sum_{i=1}^{k} \frac{1}{R_i} \right)$.
Note that in this distribution the number of patients in $P_1$ is exactly $n_1$. The validity of this log-rank test depends on the probability of censoring being equal in the two groups. 
As discussed above this assumption holds in our application. 

\subsection*{Estimating the $p$-value}
Under both null distributions above, the expectation $E[V]=0$.  Given an observed value $v$, its $p$-value is $Pr(|V|\geq |v|)$. In the two null distributions the prefix sums of the log-rank statistic define a martingale, and therefore, by the martingale central limit theorem~\cite{Kalbfleisch:2002fk}, the normalized log-rank statistic $V/\sigma$, where $\sigma$ is either $\sigma_h$ or $\sigma_p$, has an asymptotic $\mathcal{N}(0,1)$ distribution, which gives an easy method for computing the $p$-value. Furthermore, asymptotically the two variances $\sigma^2_h$ and $\sigma^2_p$ are the same~\cite{Mantel85}, thus for large balanced populations the two versions of the test give the same results.  Therefore, the distinction between the two versions of the test is mostly ignored in the literature, although there is some discussion of which variance is the appropriate to use \cite{Brown84,Mantel85}.  

The situation is drastically different in the setting of genome-wide cancer survival analysis.  As was reported in~\cite{Latta1981,KellererC1983} and we show in the next section, the Normal approximation gives a poor estimate for the $p$-value in the range of population sizes inherent in the genome-wide association studies. Thus, we need an efficient algorithm for computing a correct estimate of the $p$-values that does not depend on the Normal approximation. Furthermore, we also report that in this range of parameters, the $p$-value of the log-rank statistic
is very sensitive to the choice of null distribution: since the conditional distribution matches the problem parameters only in expectation, we prefer the permutational null distribution that matches exactly the problem's parameters.

\section*{Accuracy of Asymptotic Approximations}

We applied the log-rank test based on asymptotic approximations to randomly generated survival and mutation data. We focused on the case of unbalanced populations.
We compared the $p$-values obtained from the asymptotic approximations with the uniform distribution that is expected under the null hypothesis. We use $n$ to denote the total number of samples, and $n_1$ the number of samples in  the small population. Fig.~\ref{fig:100_not_enough} shows that even when the number of patients in the small population is large ($n_1=100$), when the imbalance between populations increases, the accuracy of the asymptotic approximation decreases. Fig.~\ref{fig:100_200_500_1000_5_0_permutational} shows that for a fixed ratio $n_1/n$, the asymptotic approximation improves when the total populations size increases. Fig.~\ref{fig:100_nocensoring_permutational} shows that for a fixed $n$, the asymptotic approximation improves when the imbalance decreases. In addition to the normal approximation, Fig.~\ref{fig:random_500} includes the $\chi^2$ approximation, and shows the results considering $10^5$ data points with $n=500$ total samples, $n_1=5\% n$ samples with a mutation in the gene, and same 
survival distribution for all patients. In particular, the survival time comes from an exponential distribution with the same expectation (equal to 30), and censoring variable from an exponential distribution resulting in $30\%$ of censoring. These results show that with $n=100$, $n_1$ must be $> 20\% n$ for the asymptotic permutational approximation to be accurate, while with $n=500$, $n_1$ must be $\ge 5\%$ for the asymptotic permutational approximation to be accurate.

\begin{figure}[htbp]
\centering
		\subfloat[][]{
		 \includegraphics[width=0.4\textwidth]{{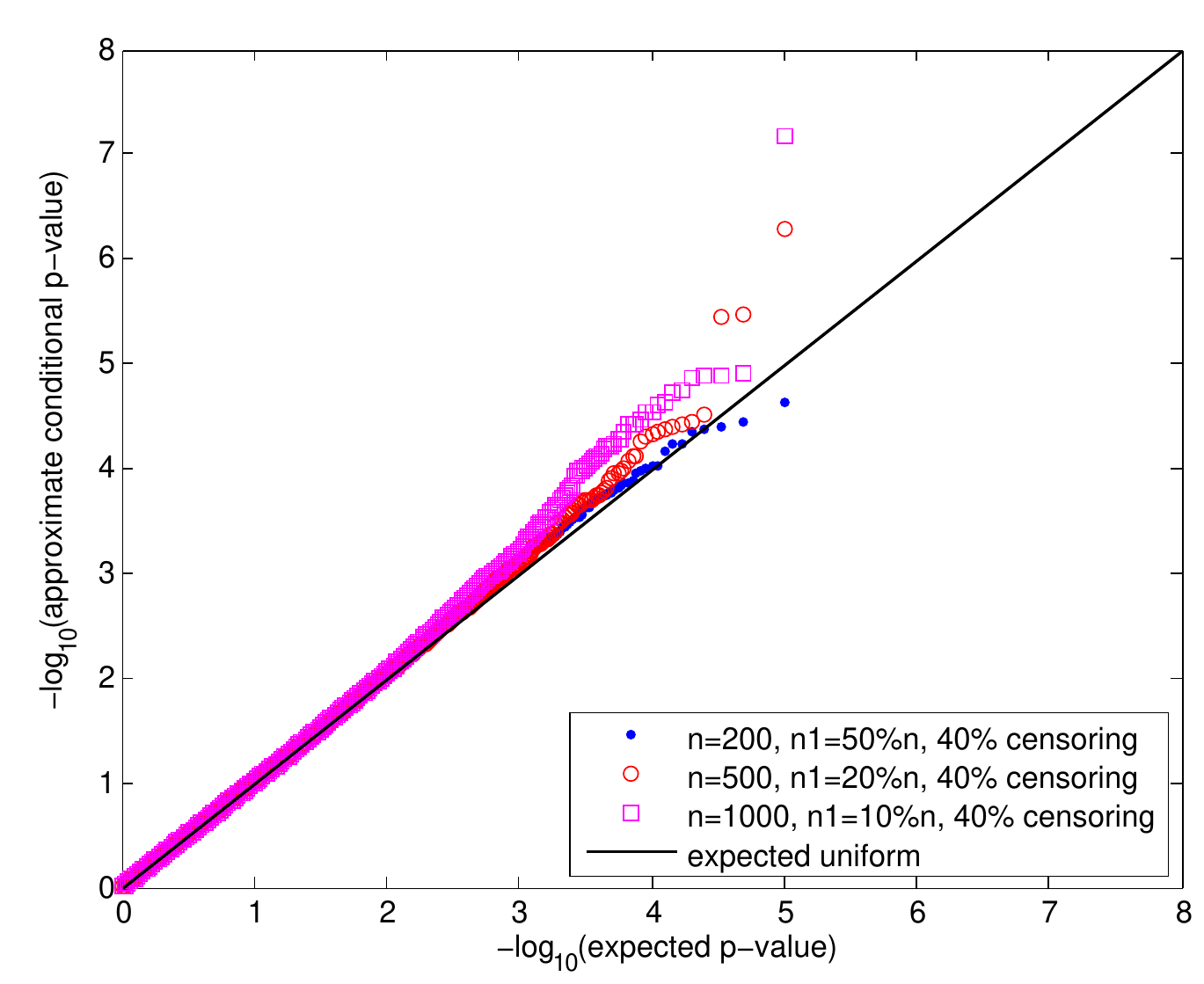}}
		\label{fig:100_not_enough} 
                }
                \quad 
                 \subfloat[][]{
             \includegraphics[width=0.4\textwidth]{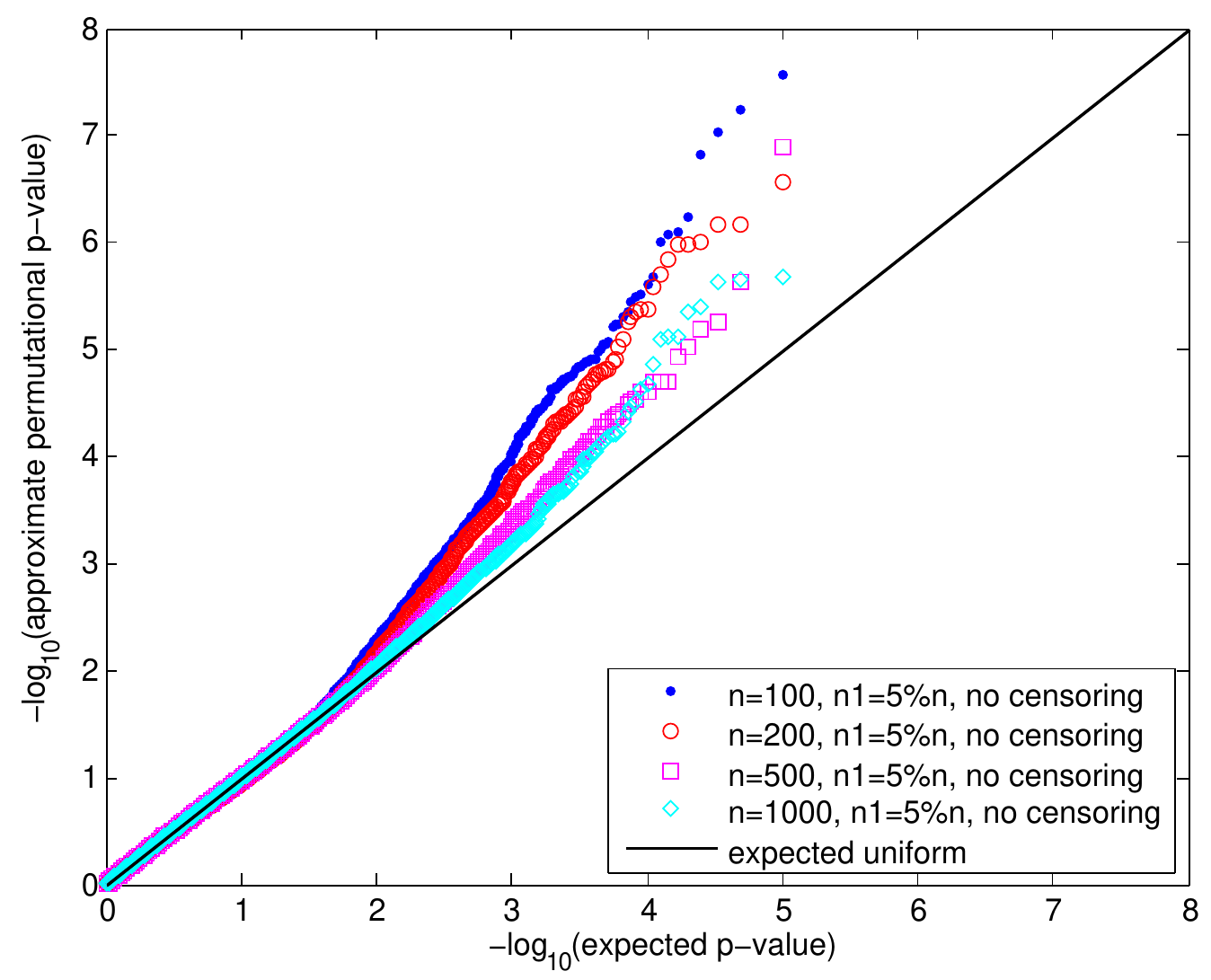}
              \label{fig:100_200_500_1000_5_0_permutational}
                }

		 \subfloat[][]{
		 \includegraphics[width=0.4\textwidth]{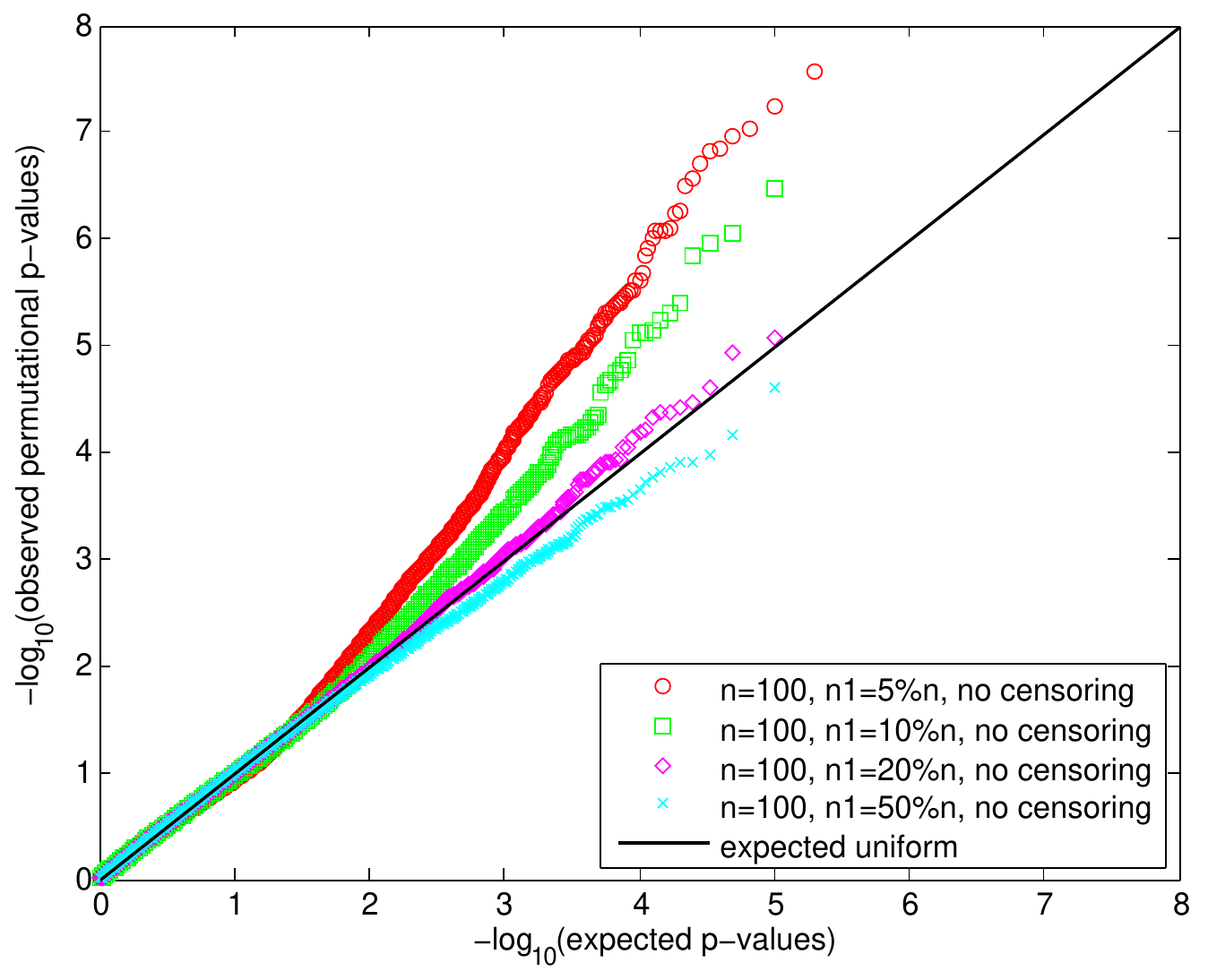}
		\label{fig:100_nocensoring_permutational} 
                }
                \quad
		\subfloat[][]{
             	\includegraphics[width=0.4\textwidth]{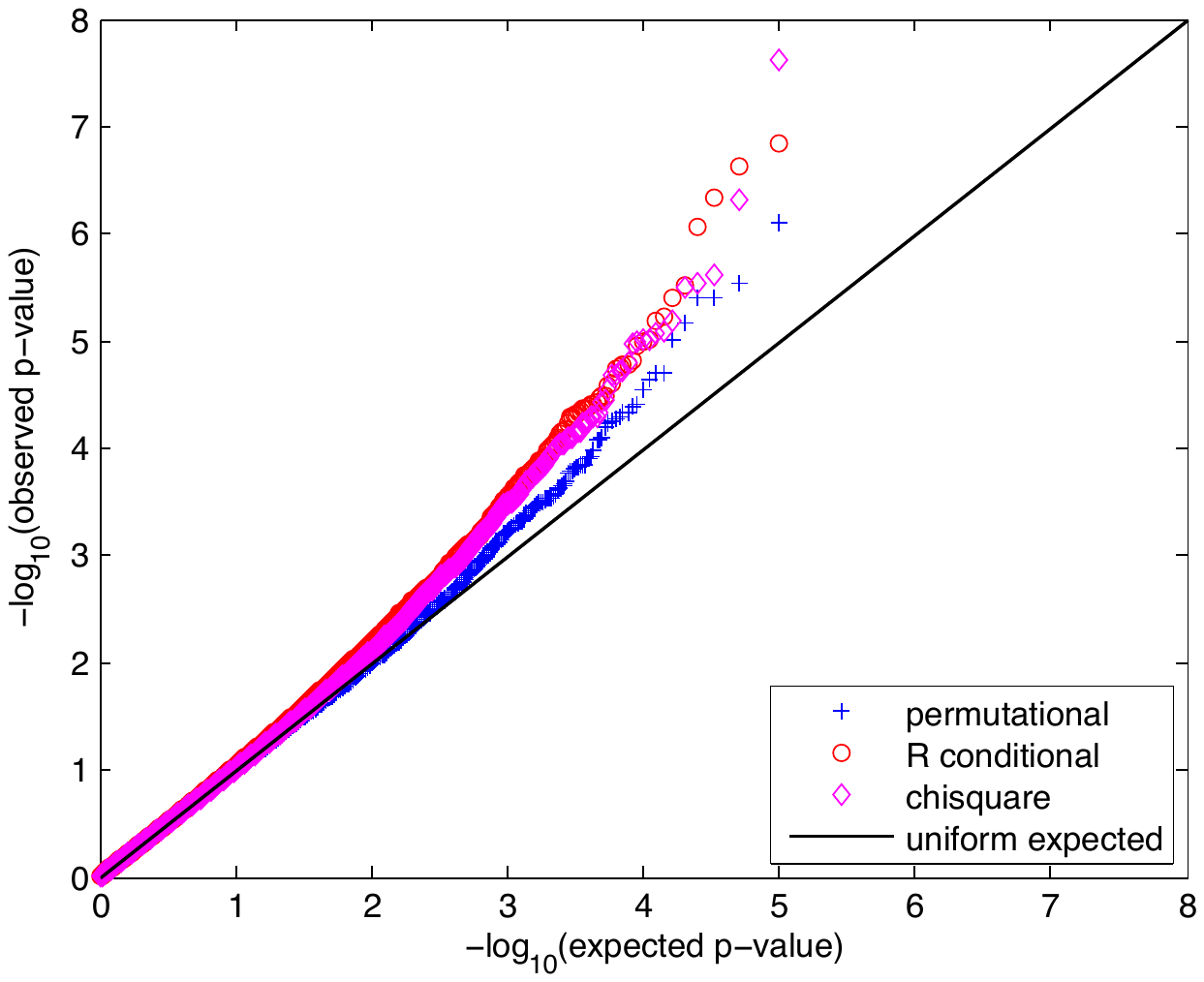}
              \label{fig:random_500}
                }

\caption{Comparison of the $p$-values from asymptotic approximations and the uniform distribution. (a) Distribution of $p$-values obtained using the conditional approximation, and distribution of $p$-values for the uniform distribution. Generated considering $ 10^5$ instances with $n_1=100$ samples in the small population, different number $n$ of samples in total, and same survival distribution for all patients ($\approx 40\%$ censoring). (b) Distribution of $p$-values obtained using the permutational approximation, and distribution of $p$-values for the uniform distribution. Generated considering $10^5$ data points with $n_1=5\% n$ samples in the small population, $n$ total samples, and no censoring. ({c}) Distribution of $p$-values obtained using the permutational approximation, and distribution of $p$-values for the uniform distribution. Generated considering $10^5$ data points with $n=100$ total samples, different values of $n_1$, and no censoring. (d) Distribution of $p$-values obtained using 
different approximations, and distribution of $p$-values for the uniform distribution. Generated considering $ 10^5$ instances with $n=500$ total samples, $n_1=5\%n$ samples with a mutations in the gene, and same survival distribution for all patients ($\approx 30\%$ censoring).}\label{fig:approx_limits}
\end{figure}

\section*{Comparison of Exact Tests on Synthetic Data}

\subsection*{Comparison of Exact Distributions}

We find that the $p$-values from the permutational exact test are significantly closer ($p< 10^{-3}$) to the empirical $p$-values than the $p$-values obtained from the conditional exact test (Fig.~S3).

We compared the accuracy of exact $p$-values for the permutational and conditional distributions in our setting of unbalanced small populations using synthetic data. We generate synthetic data using two related but different procedures.  In the first procedure, we mutate a gene $g$ in exactly a fraction $f$ of all patients.  In the second procedure, we mutated a gene $g$ in each patient independently with probability $f$. The second procedure models the fact that mutations in a gene $g$ are found in each patient independently with a certain probability (that depends on the background mutation rate, the length of the gene, etc.).  Thus, when repeating a study on a cohort of patients of the same size only the expected number of patients in which $g$ is mutated is the same, and the observed number may vary.  In both cases the survival information is generated from the same distribution for all patients.
 The survival time comes from the exponential distribution with expectation equal to 30, and censoring variable from an exponential distribution resulting in $30\%$ of censoring.  In Fig.~\ref{fig:cmp_empirical_exact} we compare the $p$-values computed from the exact permutational test and the exact conditional test with the empirical $p$-values (obtained repeating the experiment 10000 times) for the first distribution, while in Fig.~\ref{fig:cmp_empirical_expect} we compare the $p$-values computed from the exact permutational test and the exact conditional test with the empirical $p$-values (obtained repeating the experiment 10000 times) for the second distribution. In both cases the $p$-values (restricted to $p$-values $\le 0.01$) from the exact permutational distribution have a significantly ($p<10^{-3}$) higher R coefficient than the $p$-values from the exact conditional distribution when compared to the empirical $p$-values (considering the $-log_{10}$ $p$-values in order to compute the R coefficient). Therefore the $p$-values from the permutational exact test are significantly closer to the empirical $p$-values than the $p$-values obtained from the conditional exact test. 

\begin{figure}[htbp]
\centering
                \subfloat[][]{
             \includegraphics[width=0.4\textwidth]{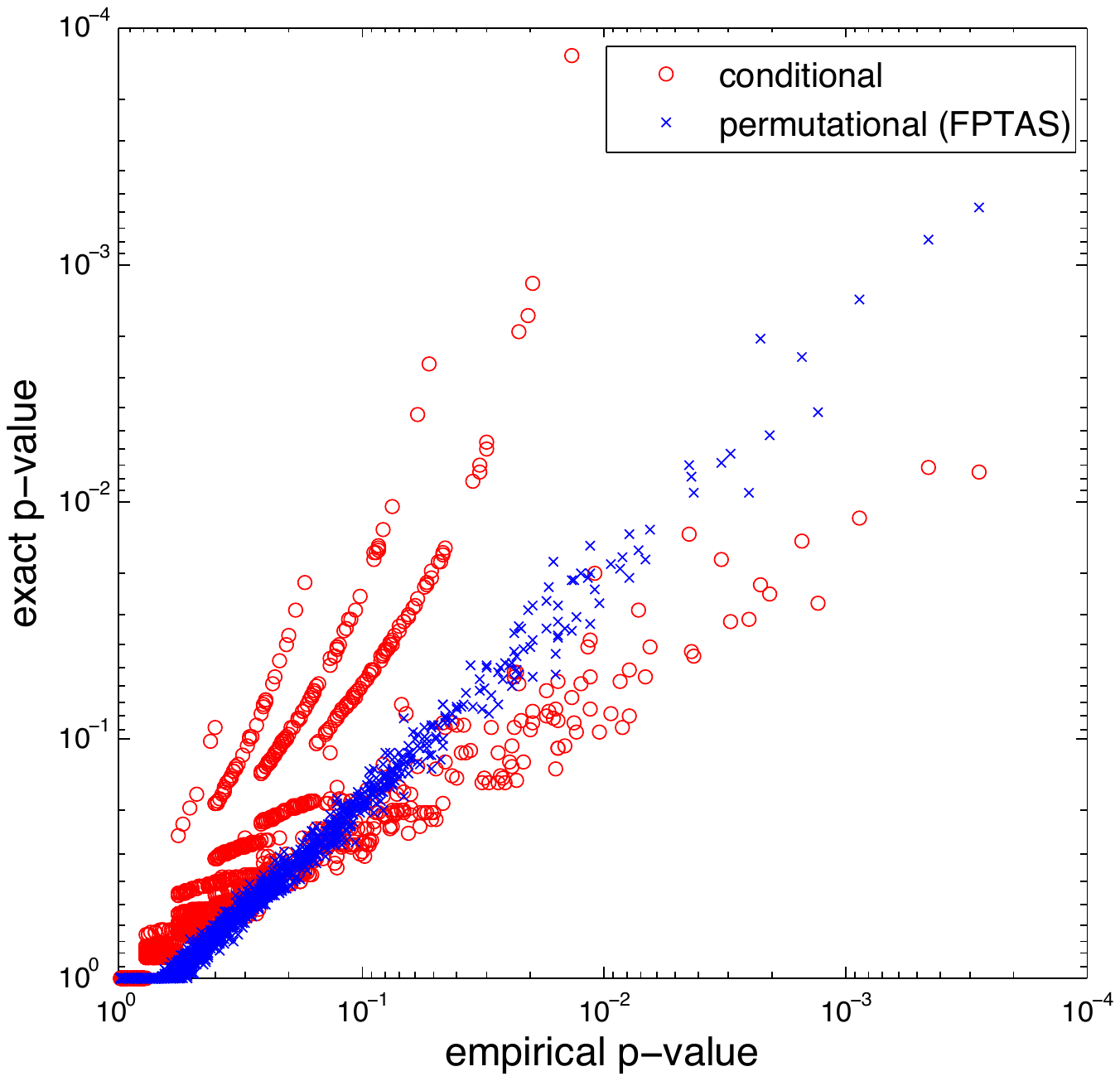}
              \label{fig:cmp_empirical_exact}
                }
                \quad
		 \subfloat[][]{
		 \includegraphics[width=0.4375\textwidth]{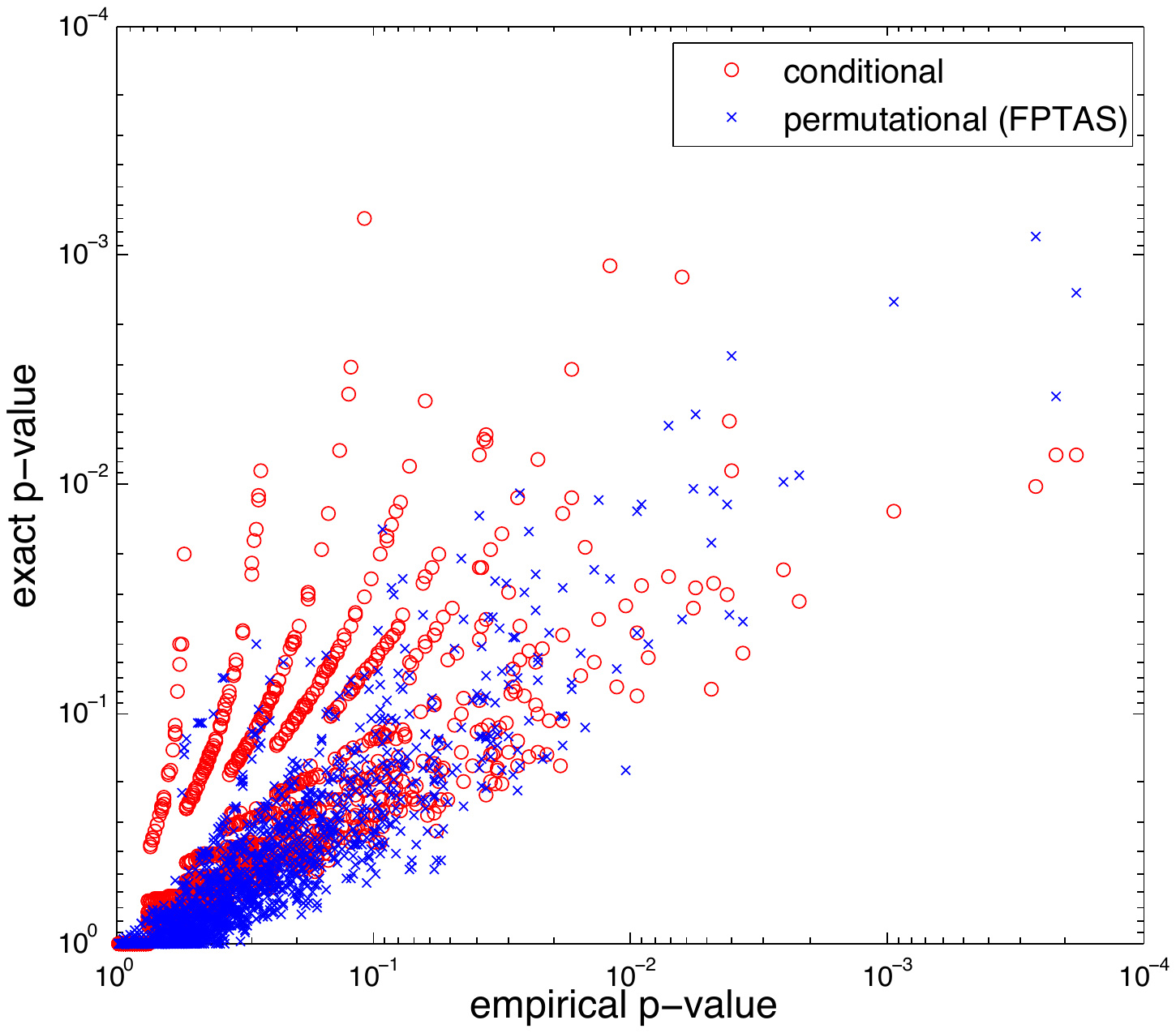}
		\label{fig:cmp_empirical_expect} 
                }
\caption{Comparison of the $p$-values from the exact tests and the empirical $p$-values for two different null distributions. The R coefficients comparing the $-log_{10}$ exact $p$-values to the $-log_{10}$ empirical $p$-values are the following: in Fig.~(a), permutational $= 0.96$, conditional $= 0.88$; in Fig~(b), permutational $= 0.72$, conditional $= 0.43$. For both distributions the difference between R coefficients is significant $(p<10^{-3})$. (a) Comparison of exact conditional $p$-values, exact permutational $p$-values, and empirical $p$-values for $n=100, n_1=5\%n$, and $30\%$ censoring. Each point represents an instance of survival data. (b) Comparison of exact conditional $p$-values, exact permutational $p$-values, and empirical $p$-values for $n=100$, expectation $n_1$=$5\%n$, and $30\%$ censoring. Each point represents an instance of survival data.}
\label{fig:cmp_empirical}
\end{figure}

\section*{Algorithms}

As shown by the results in previous sections,  to carry out an effective genome-wide survival analysis for cancer somatic mutation we need an accurate estimate of the log-rank statistic $p$-values in the permutational null distribution. 

While the exact $p$-value in the conditional test can be computed in quadratic time~\cite{MethaPG85,pmid14969496} no polynomial time algorithm is known for the problem of computing the exact $p$-value for the permutational test. Abd-Elfattah and Butler~\cite{CJS:CJS10002} use saddlepoint methods to determine the mid-$p$-values for the permutational distribution.
Heuristic methods may be derived from solutions to related problems.
In particular the method of Pagano and Tritchler \cite{Pagano1983}, based on the Fast Fourier Transform (FFT), may be adapted to compute some approximation of the exact $p$-value in polynomial time, but no guarantee on the accuracy of the approximation is provided by their method. Branch and bound methods (in the spirit of the method proposed by Bejerano et al.~\cite{Bejerano:2004uq}) may be used to compute the exact $p$-value, but may require exponential time in the worst case.
Note that since $p$-values can be really small, we do not want to use an MCMC approach, that requires to sample a number of random permutations at least proportional to $c^{-1}$ in order to obtain an estimate for a $p$-value equal to $c$.

In the permutational distribution $n$ and $n_1$ are fixed, and thus computing the $p$-value is equivalent to solving the following counting problem.

\problem{More Extreme Assignments Counting Problem}{
Given $n, n_1 \in \mathbb{N},$ with $ n_1\le n$, $v \in \mathbb{R}$ and $\mathbf{c} \in \{0,1\}^n$ determine the number of vectors $ \mathbf{x} \in \{0,1\}^n$ that satisfy:  $\sum_{i=1}^{n}x_i = n_1$ and $|V(\mathbf{x},\mathbf{c})| \ge v$.} 

Dividing the number of vectors $\mathbf{x}$ by ${n\choose n_1}$, which defines the sample space size, gives the $p$-value of $v$.
Based on the similarity between this problem and Knapsack Counting Problem~\cite{CambridgeJournals:1772044}, we conjecture that the problem may also be $\#P$-complete. 

\subsection*{FPTAS for the permutational distribution}
We provide a Fully Polynomial Time Approximation Scheme (FPTAS) to estimate the $p$-value from the permutational distribution,
i.e. an algorithm
that given $n,n_1,\mathbf{c}$, and $v$, for any $\varepsilon>0$ computes an $\varepsilon$-approximation of $Pr(|V(\mathbf{x})|\geq v)$ in time that is polynomial in $n$ and $\varepsilon^{-1}$.
The FPTAS is derived from a pair of recurrence relations that compute the exact probability, but may not terminate in polynomial time.  We then modify the process to obtain a  fully polynomial time approximation scheme.

\paragraph{Exact computation.}
Let $V_t(\mathbf{x}) =\sum_{j=1}^t c_j \left(x_j -  \frac{n_1 -\sum_{i=0}^{j-1} x_i}{n-j+1}\right)$ be the test statistic $V(\mathbf{x})$ at time $t$.  Note that since $n,n_1$, and $\mathbf{c}$ are fixed, the statistic depends only on the values of $\mathbf{x}$.
Assume the observed log-rank statistic has value $v$. The $p$-value of the observation $v$ is the probability $Pr(|V(\mathbf{x})|\geq |v|)$ computed in the probability space in which the $n_1$
events of $P_1$ are uniformly distributed among the $n$ events.
For any $0\leq t \leq n$ and $0\leq r \leq n_1$, let $P(t,r,v)$ denote the joint probability $V_t(\mathbf{x}) \le v$ and exactly $r$ events of $P_1$ in the first $t$ events,
 \begin{equation*}
 P(t,r,v)=\Pr\left(V_t(\mathbf{x}) \le v \text{ AND } \sum_{i=1}^t x_i =r\right).
 \end{equation*}
 Similarly, let $Q(t,r,v)$ denote the joint probability of $V_t(\mathbf{x}) \ge v$ and exactly $r$ events of $P_1$ in the first $t$ steps,
 \begin{equation*}
 Q(t,r,v)=\Pr\left(V_t(\mathbf{x}) \geq v \text{ AND } \sum_{i=1}^t x_i =r\right).
 \end{equation*}

 At time $0$,
 \[ P(0,r,v) = \left \{\begin{array}{ll}
 1 & \mbox{if}~r=0~\mbox{and}~v\geq 0 \\
 0 & \mbox{otherwise,} \end{array} \right.
 ~\mbox{and}~~~~
 Q(0,r,v) = \left \{\begin{array}{ll}
 1 & \mbox{if}~r=0~\mbox{and}~v\leq 0 \\
 0 & \mbox{otherwise.} \end{array} \right.  \]

 Given the values of $P(t,r,v)$ and $Q(t,r,v)$ for all $v$ and $r$, we can compute the values of $P(t+1,r,v)$ and $Q(t,r,v)$ using the following relations:

If  $c_{t+1}=1$ then
 $$P(t+1,r,v) = (1-\frac{n_1-r}{n-t})P(t,r,v+\frac{n_1-r}{n-t})+\frac{n_1-(r-1)}{n-t}P(t,r-1,v-(1-\frac{n_1-(r-1)}{n-t})),~\mbox{and}$$
 $$Q(t+1,r,v) = (1-\frac{n_1-r}{n-t})Q(t,r,v+\frac{n_1-r}{n-t})+\frac{n_1-(r-1)}{n-t}Q(t,r-1,v-(1-\frac{n_1-(r-1)}{n-t})).$$

If $c_{t+1}=0$ then
 $$P(t+1,r,v)=(1-\frac{n_1-r}{n-t})P(t,r,v)+\frac{n_1-(r-1)}{n-t}P(t,r-1,v), ~\mbox{and}$$
 $$Q(t+1,r,v)=(1-\frac{n_1-r}{n-t})Q(t,r,v)+\frac{n_1-(r-1)}{n-t}Q(t,r-1,v).$$

The process defined by these equation guarantees that the $n$ events include $n_1$ events of $P_1$. Thus, for $r\not= n_1$, $P(n,r,v)=0$ and $Q(n,r,v)=0$, and the $p$-value is given by\footnote{In the exact computation $Q(n,n_1,V)=1-P(n,n_1,V)$, but in the approximate algorithm below we need to compute each of the probability functions separately.}
\begin{equation*}
 Pr(|V(\mathbf{x})|\geq |v|)=P(n,n_1,-|v|)+Q(n,n_1,|v|).
\end{equation*}
 The functions $P(t+1,r,v)$ and $Q(t+1,r,v)$ are step functions. To compute the function $P(t+1,r,v)$ when $c_{t+1}=1$, we note that as we vary $v$ the value of $P(t+1,r,v)$ changes only at the points in which $P(t,r,v+\frac{n_1-r}{n-t})$ or $ P(t,r-1,v-(1-\frac{n_1-(r}{n-t}))$ change values. Thus, the function needs to be computed only at these points.

 At $t=0$ the function $P(0,r,v)$ assumes up to 2 values.
 If $P(t,r,v)$ assumes $m(t,r)$ values and $P(t,r-1,v)$ assumes $m(t,r-1)$ values, then $P(t+1,r,v)$ assumes up to
$m(t,r)+m(t,r-1)$ values. Similar relation hold for $P(t+1,r,v)$ when $c_{t+1}=0$, and for computing $Q(t,r,v)$ in the two cases.
Thus, in $n$ iterations the process computes the exact probabilities $P(n,r,v)$ and $Q(n,r,v)$, but it may have to compute probabilities for an exponential number of different values of $v$ in some iterations.

\paragraph{Approximation Algorithm.}
\begin{wrapfigure}{l}{0.6\textwidth}
\centering
  \includegraphics[width=0.6\textwidth]{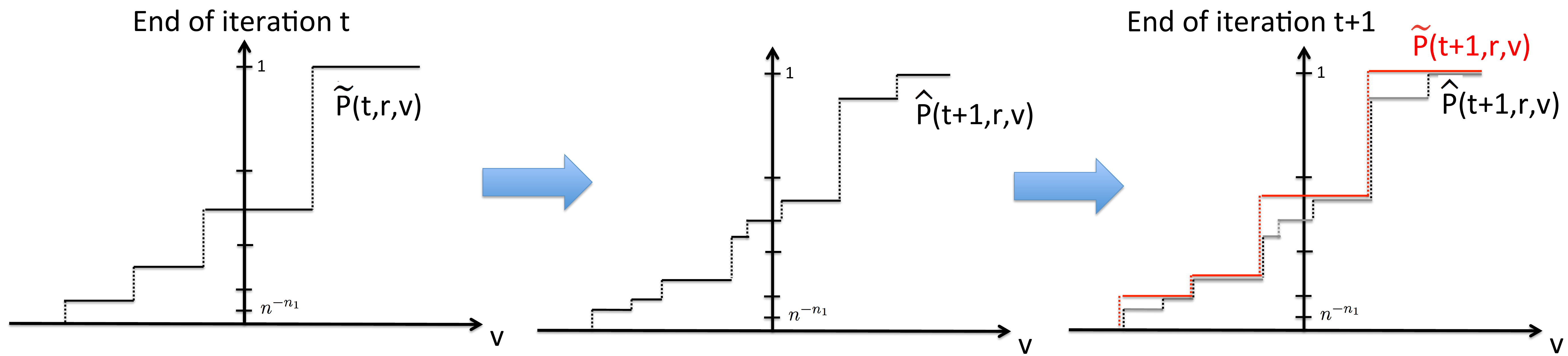}
\caption{Starting from the approximation $\tilde{P}(t,r,v)$ at time $t$ that uses $\ell$ values of $v$ to approximate $P(t,r,v)$, in the $t+1$ iteration the FPTAS computes an approximation $\hat{P}(t+1,r,v)$ for $P(t+1,r,v)$ that uses up to $2\ell$ values of $v$; then the approximation $\tilde{P}(t+1,r,v)$ is built starting from $\hat{P}(t+1,r,v)$ by appropriately reducing the number of values of $v$ considered, while maintaining guarantees on the approximation.}
\label{fig:FPTAS_appendix}
\end{wrapfigure}

We first note that since the probability space consists of $n\choose {n_1}$ equal probability events, all non-zero probabilities in our analysis are $\geq n^{-n_1}$.
For $0<\varepsilon<1$,  fix $\varepsilon_1$ such that $(1-\varepsilon_1)^{-n} = 1 + \varepsilon$. Note that $\epsilon_1=O(\epsilon/n)$.  We discretize  the interval of possible non-zero probabilities $[n^{-n_1}, 1]$, using the values
$(1-\varepsilon_1)^k$, for $k=0,\dots,\ell =\frac{-n_1 \log n}{\log (1-\varepsilon_1)}=
O( \varepsilon^{-1} n n_1 \log n )$.
The approximation algorithm computes estimates for $P(t,r,v)$ and $Q(t,r,v)$ in two separate processes.

\paragraph{Estimating $P(t,r,v)$.}

Let $\tilde{P}(t,r,v)$ be a step function defined by a sequence of points $v^{t}_{k,r}$, $k=0,\dots,\ell$. The value of the function in the interval $[v^{t}_{k+1,r},v^{t}_{k,r}]$ is $(1-\varepsilon_1)^k$,
and for $v>v^t_{0,r}$ the value of the function is 1. Consecutive points in the sequence may be the same ($v^{t}_{k+1,r}=v^{t}_{k,r}$), in that case the value of $\tilde{P}(t,r.v)$ is $(1-\varepsilon_1)^{k_v}$, where $k_v=\arg\max_k [v\leq v^{t}_{k,r}]$. (Not that the sequence $v^{t}_{k,r}$, $k=0,\dots,\ell$ is non-increasing in $k$, since larger $k$ corresponds to smaller probability.)

For $t=0$, we define $\tilde{P}(0,r,v)$  by the set of points $v^0_{k,0}= 0$ and $v^0_{k,r}=\infty$ for $r>0$, $k=0,\dots,\ell$.
These functions satisfy $\tilde{P}(0,r,v)=P(0,r,v)$ for all $r$ and $v$.

Assume that iteration $t+1$ starts with a set of functions $\tilde{P}(t,r,v)$, for $r=0,\dots, n_1$ such that for all $r$ and $v$
$$(1-\varepsilon_1)^t \tilde{P}(t,r,v)\leq P(t,r,v)\leq \tilde{P}(t,r,v).$$
We show that iteration $t+1$ computes functions $\tilde{P}(t+1,r,v)$ with the same approximation properties. (Figure~\ref{fig:FPTAS_appendix} shows how the approximation at time t + 1 is computed from the approximation at time t.) 

To compute an estimate for the functions $P(t+1,r,v)$, $r=0,\dots,n_1$,
we use the relations given in the exact computation, estimating $P(t,r,v)$ by $\tilde{P}(t,r,v)$.
In the case $c_{t+1}=1$ we use
$$\hat{P}(t+1,r,v) = (1-\frac{n_1-r}{n-t})\tilde{P}(t,r,v+\frac{n_1-r}{n-t})+\frac{n_1-r}{n-t}\tilde{P}(t,r-1,v-(1-\frac{n_1-r}{n-t})),$$
and compute (for each $r$) the function at the $2\ell$ points corresponding to change in values in the functions $\tilde{P}(t,r,v)$ and $\tilde{P}(t,r-1,v)$:
$$v^t_{k,r}=v+\frac{n_1-r}{n-t}~\mbox{ and}~ v^t_{k,r-1}=v-(1-\frac{n_1-r}{n-t}),~\mbox{ for}~ k=1,\dots,\ell.$$
In the case $c_{t+1}=0$ we use
 $$\hat{P}(t+1,r,v)=(1-\frac{n_1-r}{n-t})\tilde{P}(t,r,v)+\frac{n_1-r}{n-t}\tilde{P}(t,r-1,v),$$
 and compute the function (for each $r$) in the $2\ell$ points $v^t_{k,r}$ and $v^t_{k,r-1}$.

Let $v_1\leq v_2\leq \dots \leq v_{2\ell}$ be the $2\ell$ points for which the value of $\hat{P}(t+1,r,v)$ was computed.
We extend the function $\hat{P}(t+1,r,v)$ to a step function over all values of $v$, such that
$\hat{P}(t+1,r,v)=\hat{P}(t+1,r,v_j)$, where $j$ is the largest index such that $v\geq v_j$.

Since we computed the function $\hat{P}(t+1,r,v)$ in all points in which $\tilde{P}(t,r,v)$ and $\tilde{P}(t,r-1,v)$ change values, and by the assumptions on the values of $\tilde{P}(t,r,v)$,
for all $r$ and $v$ we have
$$\hat{P}(t+1,r,v)(1-\varepsilon_1 )^{t+1} \leq P(t+1,r,v) \leq \hat{P}(t+1,r,v).$$

We now approximate the function $\hat{P}(t+1,r,v)$ by a function $\tilde{P}(t+1,r,v)$ that is defined by the sequence of only $\ell$ values:
$$v^{j}_{k,r}=\arg \max_v[\tilde{P}(j,r,v)\leq (1-\varepsilon_1)^k ], ~k=0,\dots,\ell.$$
Consider a value $v$ such that $v^{t+1}_{k+1,r} \leq v <v^{t+1}_{k,r}$. We have:
\begin{equation}
P(t+1,r,v) \leq \hat{P}(t+1,r,v)\leq \hat{P}(t+1,r,v^{t+1}_{k,r})\leq \tilde{P}(t+1,r,v^{t+1}_{k,r}),
\end{equation}
and
\begin{equation}
P(t+1,r,v)\geq P(t+1,r,v_{k+1,r}^{t+1}) \geq \hat{P}(t+1,r,v^{t+1}_{k,r})(1-\varepsilon_1)^{t+1} \ge \tilde{P}(t+1,r,v^{t+1}_{k,r})(1-\varepsilon_1)^{t+1}.
\end{equation}
Thus, our estimate $\tilde{P}(n,n_1,-v)$ for $P(V(x)\leq -v)={P}(n,n_1,-v)$ satisfies
\begin{equation*}
P(n,n_1,-v)\leq \tilde{P}(n,n_1,-v)\leq P(n,n_1,-v)\frac{1}{(1-\varepsilon_1)^t} \le P(n,n_1,-v) (1+\varepsilon).
\end{equation*}

\paragraph{Estimating $Q(t,r,v)$.}
\label{es-Q}
Recall that 
 $Q(t,r,v)$ is the probability of exactly $r$ events of $P_1$ in the first $t$ steps and the statistic at time $t$ is $\geq v$,
 \begin{equation*}
 Q(t,r,v)=\Pr\left(\sum_{j=1}^t c_j \left(x_j -  \frac{n_1 -\sum_{i=1}^j x_i}{n-j}\right) \geq v~\mbox{ AND}~\sum_{i=1}^t x_i =r\right).
 \end{equation*}
Let $\tilde{Q}(t,r,v)$ be a step function defined by a sequence of points $v^{t}_{k,r}$, $k=0,\dots,\ell$. The value of the function in the interval $[v^{t}_{k,r},v^{t}_{k+1,r}]$ is $(1-\varepsilon_1)^k$ and for $v<v^t_{0,r}$ the value of the function is 1.
Consecutive points in the sequence may be the same ($v^{t}_{k+1,r}=v^{t}_{k,r}$), in that case the value of $\tilde{Q}(t,r,v)$ is $(1-\varepsilon_1)^{k_v}$, where $k_v=\arg\max_k [v\geq v^{t}_{k,r}]$. (Note that the sequence $v^{t}_{k,r}$, $k=0,\dots,\ell$ is monotone non-decreasing in $k$, since larger $k$ corresponds to smaller probability.)

For $t=0$, we define $\tilde{Q}(0,r,v)$  by the set of points $v^0_{k,0}= 0$ and $v^0_{k,r}=\infty$ for $r>0$, $k=0,\dots,\ell$.
These functions satisfy $\tilde{Q}(0,r,v)=Q(0,r,v)$ for all $r$ and $v$.

Assume that iteration $t+1$ starts with a set of functions $\tilde{Q}(t,r,v)$, for $r=0,\dots, n_1$ such that for all $r$ and $v$
$$(1-\varepsilon_1)^t \tilde{Q}(t,r,v)\leq Q(t,r,v)\leq \tilde{Q}(t,r,v).$$
We show then iteration $t+1$ computes functions $\tilde{Q}(t+1,r,v)$ with the same approximation properties.

To compute an estimates for the functions $Q(t+1,r,v)$, $r=0,\dots,n_1$, 
we use the relations given in the exact computation, estimating $Q(t,r,v)$ by $\tilde{Q}(t,r,v)$. 
In the case $c_{t+1}=1$ we use
$$\hat{Q}(t+1,r,v) = (1-\frac{n_1-r}{n-t})\tilde{Q}(t,r,v+\frac{n_1-r}{n-t})+\frac{n_1-r}{n-t}\tilde{Q}(t,r-1,v-(1-\frac{n_1-r}{n-t})),$$
and compute (for each $r$) the function at the $2\ell$ points corresponding to change in values in the functions $\tilde{Q}(t,r,v)$ and $\tilde{Q}(t,r-1,v)$:
$$v^t_{k,r}=v+\frac{n_1-r}{n-t}~\mbox{ and}~ v^t_{k,r-1}=v-(1-\frac{n_1-r}{n-t}),~\mbox{ for}~ k=1,\dots,\ell.$$
In the case $c_{t+1}=0$ we use 
 $$\hat{Q}(t+1,r,v)=(1-\frac{n_1-r}{n-t})\tilde{Q}(t,r,v)+\frac{n_1-r}{n-t}\tilde{Q}(t,r-1,v),$$
 and compute the function (for each $r$) in the $2\ell$ points $v^t_{k,r}$ and $v^t_{k,r-1}$.
 
Let $v_1\leq v_2\leq \dots \leq v_{2\ell}$ be the $2\ell$ points for which the value of $\hat{Q}(t+1,r,v)$ was computed.
We extend the function $\hat{Q}(t+1,r,v)$ to a step function over all values of $v$, such that 
$\hat{Q}(t+1,r,v)=\hat{Q}(t+1,r,v_j)$, where $j$ is the largest index such that $v\geq v_j$.
 
Since we computed the function $\hat{Q}(t+1,r,v)$ in all points in which $\tilde{Q}(t,r,v)$ and $\tilde{Q}(t,r-1,v)$ change values, and by the assumptions on the values of $\tilde{Q}(t,r,v)$,
for all $r$ and $v$ we have
$$\hat{Q}(t+1,r,v)(1-\varepsilon_1 )^{t+1} \leq Q(t+1,r,v) \leq \hat{Q}(t+1,r,v).$$

We now approximate the function $\hat{Q}(t+1,r,v)$ by a function $\tilde{Q}(t+1,r,v)$ that is defined by the sequence of only $\ell$ values:
$$v^{j}_{k,r}=\arg \min_v[\tilde{Q}(j,r,v)\leq (1-\varepsilon_1)^k ], ~k=0,\dots,\ell.$$ 
Consider a value $v$ such that $v^{t+1}_{k,r} \leq v <v^{t+1}_{k+1,r}$. We have:
\begin{equation}
Q(t+1,r,v) \leq \hat{Q}(t+1,r,v)\leq \hat{Q}(t+1,r,v^{t+1}_{k,r})\leq \tilde{Q}(t+1,r,v^{t+1}_{k,r}),
\end{equation}
and 
\begin{equation}
Q(t+1,r,v)\geq Q(t+1,r,v_{k+1,r}^{t+1}) \geq \hat{Q}(t+1,r,v^{t+1}_{k,r})(1-\varepsilon_1)^{t+1} \ge \tilde{Q}(t+1,r,v^{t+1}_{k,r})(1-\varepsilon_1)^{t+1}.
\end{equation}
Thus, our estimate $\tilde{Q}(n,n_1,v)$ for $Q(V(x)\geq  v)={Q}(n,n_1,v)$ satisfies
\begin{equation*}
Q(n,n_1,v)\leq \tilde{Q}(n,n_1,v)\leq Q(n,n_1,v)\frac{1}{(1-\varepsilon_1)^t} \le Q(n,n_1,v) (1+\varepsilon).
\end{equation*}

From the discussion above the following theorem is readily derived.

\begin{theorem}
The algorithm above is a FPTAS for computing $Pr(|V(\mathbf{x})|\geq |v|)$.
\end{theorem}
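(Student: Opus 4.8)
The plan is to verify the two conditions that define a fully polynomial time approximation scheme: that the reported value is a valid $(1+\varepsilon)$-approximation of $Pr(|V(\mathbf{x})|\ge|v|)$ which never underestimates it, and that the running time is polynomial in $n$ and $\varepsilon^{-1}$. Essentially all of the analytic content is already assembled in the construction above, so the proof is mainly a matter of organizing it into an induction on iterations plus a running-time count.

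\textbf{Correctness.} I would prove, by induction on the iteration index $t$, the invariant $(1-\varepsilon_1)^t\,\tilde{P}(t,r,v)\le P(t,r,v)\le\tilde{P}(t,r,v)$ for all $0\le r\le n_1$ and all $v$, together with the symmetric statement for $\tilde{Q}$. The base case $t=0$ is immediate, since $\tilde{P}(0,r,v)=P(0,r,v)$ and $\tilde{Q}(0,r,v)=Q(0,r,v)$ by the choice of the initial step functions. For the inductive step, the first sub-step is to observe that substituting $\tilde{P}(t,\cdot,\cdot)$ into the exact recurrence and evaluating at all breakpoints of the two parent functions yields $\hat{P}(t+1,r,\cdot)$ with $(1-\varepsilon_1)^{t+1}\hat{P}(t+1,r,v)\le P(t+1,r,v)\le\hat{P}(t+1,r,v)$: the recurrence is a convex combination in $v$ (coefficients nonnegative, summing to one, with shifted arguments), so the multiplicative error of the inputs propagates without amplification while the induction loses exactly one extra factor $(1-\varepsilon_1)$. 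The second sub-step is that replacing $\hat{P}$ by the re-discretized $\tilde{P}(t+1,\cdot,\cdot)$, which retains only the $\ell+1$ breakpoints at the prescribed powers of $(1-\varepsilon_1)$, preserves the same two-sided bound; this is precisely the chain of inequalities derived above for a value $v$ lying between two consecutive retained breakpoints, and its analogue for $\tilde{Q}$. Running the induction to $t=n$ and using the defining choice $(1-\varepsilon_1)^{-n}=1+\varepsilon$ gives $P(n,n_1,-|v|)\le\tilde{P}(n,n_1,-|v|)\le(1+\varepsilon)\,P(n,n_1,-|v|)$, and likewise for $Q$. Adding these and recalling $Pr(|V(\mathbf{x})|\ge|v|)=P(n,n_1,-|v|)+Q(n,n_1,|v|)$ shows that the output $\tilde{P}(n,n_1,-|v|)+\tilde{Q}(n,n_1,|v|)$ is always at least the true $p$-value and at most $(1+\varepsilon)$ times it.

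\textbf{Running time.} Each function $\tilde{P}(t,r,\cdot)$ and $\tilde{Q}(t,r,\cdot)$ is stored as a sorted list of $\ell+1$ breakpoints, where $\ell=-n_1\log n/\log(1-\varepsilon_1)=O(\varepsilon^{-1}nn_1\log n)$, using that $\varepsilon_1=\Theta(\varepsilon/n)$ (from $-n\log(1-\varepsilon_1)=\log(1+\varepsilon)$). In iteration $t+1$, for each of the $n_1+1$ values of $r$, forming $\hat{P}(t+1,r,\cdot)$ is a merge of two sorted lists of length $\ell$ followed by evaluation of the shifted parent functions at the $2\ell$ resulting points, all in $O(\ell)$ time since both the query points and the step functions are sorted; re-discretizing down to $\ell+1$ breakpoints is one further $O(\ell)$ sweep, and the same bounds hold for $\tilde{Q}$. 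Summing over $n$ iterations and $O(n_1)$ values of $r$ gives a total of $O(nn_1\ell)=O(\varepsilon^{-1}n^2n_1^2\log n)=O(\varepsilon^{-1}n^4\log n)$, which is polynomial in $n$ and $\varepsilon^{-1}$, as required.

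\textbf{Main obstacle.} No deep difficulty remains, since the essential combinatorics lives in the exact recurrences and the re-discretization inequalities already stated; the delicate points are bookkeeping ones. First, one must confirm that the re-discretization step cannot lose more than one additional factor $(1-\varepsilon_1)$ in either direction — this relies on the monotonicity of the step functions in $v$ and of the thresholds $(1-\varepsilon_1)^k$ in $k$, and on defining the retained breakpoints via $\arg\max$ for $P$ and $\arg\min$ for $Q$. Second, one must justify combining the two separately maintained estimates at $t=n$ even though $Q(n,n_1,v)=1-P(n,n_1,v)$ holds only for the exact functions and not for the approximations; the additive combination above is exactly what makes this legitimate while keeping the estimate conservative. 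Finally one should verify the asymptotics $\varepsilon_1=\Theta(\varepsilon/n)$ so that $\ell$, and hence the claimed running time, are genuinely polynomial.
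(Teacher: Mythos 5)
Your proposal is correct and follows essentially the same route as the paper: an induction maintaining the invariant $(1-\varepsilon_1)^t\tilde{P}(t,r,v)\le P(t,r,v)\le\tilde{P}(t,r,v)$ (and its analogue for $\tilde{Q}$), combined additively at $t=n$ via $(1-\varepsilon_1)^{-n}=1+\varepsilon$, together with the same $O(\ell)$ per-$(t,r)$ accounting yielding $O(\varepsilon^{-1}n^2n_1^2\log n)$ total time. Your write-up is in fact somewhat more explicit than the paper's about where the extra factor $(1-\varepsilon_1)$ is lost in each iteration and why the two one-sided estimates may be summed conservatively.
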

\begin{proof}
We first consider the approximation ratio. Note that $\Pr(|V(\mathbf{x})|\geq |v|) = \Pr(V(\mathbf{x}) \ge |v|) + \Pr(V(\mathbf{x})\le -|v|)$; from the discussion above we have that $\Pr(V(\mathbf{x})  \le P(n,n_1,-|v|) \leq \Pr(V(\mathbf{x}) (1+\varepsilon)$ and $\Pr(V(\mathbf{x}) \ge |v|)\le Q(n,n_1,|v|) \le Pr(V(\mathbf{x}) (1+\varepsilon)$, therefore if we define $\tilde{p}= P(n,n_1,-|v|) + Q(n,n_1,|v|)$ we have that $\Pr(|V(\mathbf{x})|\geq |v|) \le \tilde{p} \le (1+\varepsilon) \Pr(|V(\mathbf{x})|\geq |v|)$.

The run-time of each iteration is $O(\ell n_1)=O(\varepsilon^{-1} n n_1^2 \log n)$ and there are $n$ iterations, thus for any $\varepsilon>0$ the algorithm computes an $\varepsilon$-approximation in  $O(\varepsilon^{-1} n^2 n_1^2 \log n)$ time.
\end{proof}

\subsection*{FPTAS running time}

We ran experiments to study how the running time of the FTPAS varies for different values of $n,n_1$, and $\varepsilon$. We also compared the running time of the FPTAS with the running time of the exhaustive algorithm for permutational distribution. For simplicity, in our tests we assumed no censoring (i.e., $\mathbf{c}={1}^n$). Results are shown in Figure~\ref{fig:runtime_FPTAS}. Figure~\ref{fig:runtime_FPTAS_exhaustive_small} shows the average runtime of the FPTAS with $n_1=10,\varepsilon=5$, for different values of $n$.  (Standard deviations are not shown since they are very small compared to the runtime.) For the same instances we also ran the exhaustive algorithm 10 times, stopping it after 5 hours (i.e., 18000 seconds) if it did not terminate. Results for the exhaustive algorithm are shown in Figure~\ref{fig:runtime_FPTAS_exhaustive_small} as well. The starred ($*$) values of $n$ are values for which the exhaustive algorithm was stopped after 5 hours in each of the 10 runs. The exhaustive algorithm is 
practical only for very small values of $n$, while the FPTAS can be used for much larger values of $n$. Figure~\ref{fig:runtime_FPTAS_exhaustive_n1} shows how the runtime of the FPTAS varies for different values of $n_1$, with $n=100, \varepsilon=5$. As expected the runtime increases with $n_1$, but it is still practical for values of $n_1$ up to $0.2n$. We report the runtime of the exhaustive algorithm for comparison. Note that for $n=100, n_1=20$ the exhaustive algorithm would take more than $160$ years even running on a $100$ Ghz machine under the unrealistic assumption that it could compute the log-rank statistic of a vector $\mathbf{x}$ every clock cycle. Figure~\ref{fig:runtime_FPTAS_eps} shows how the runtime of the FPTAS varies for different values of the approximation parameter $\varepsilon$. We measured the runtime over 10 runs with $n=100, n_1=10$, and no censoring. As expected, the runtime decreases by increasing $\varepsilon$.

\begin{figure}
\centering
                \subfloat[][]{
                 \includegraphics[width=0.3\textwidth]{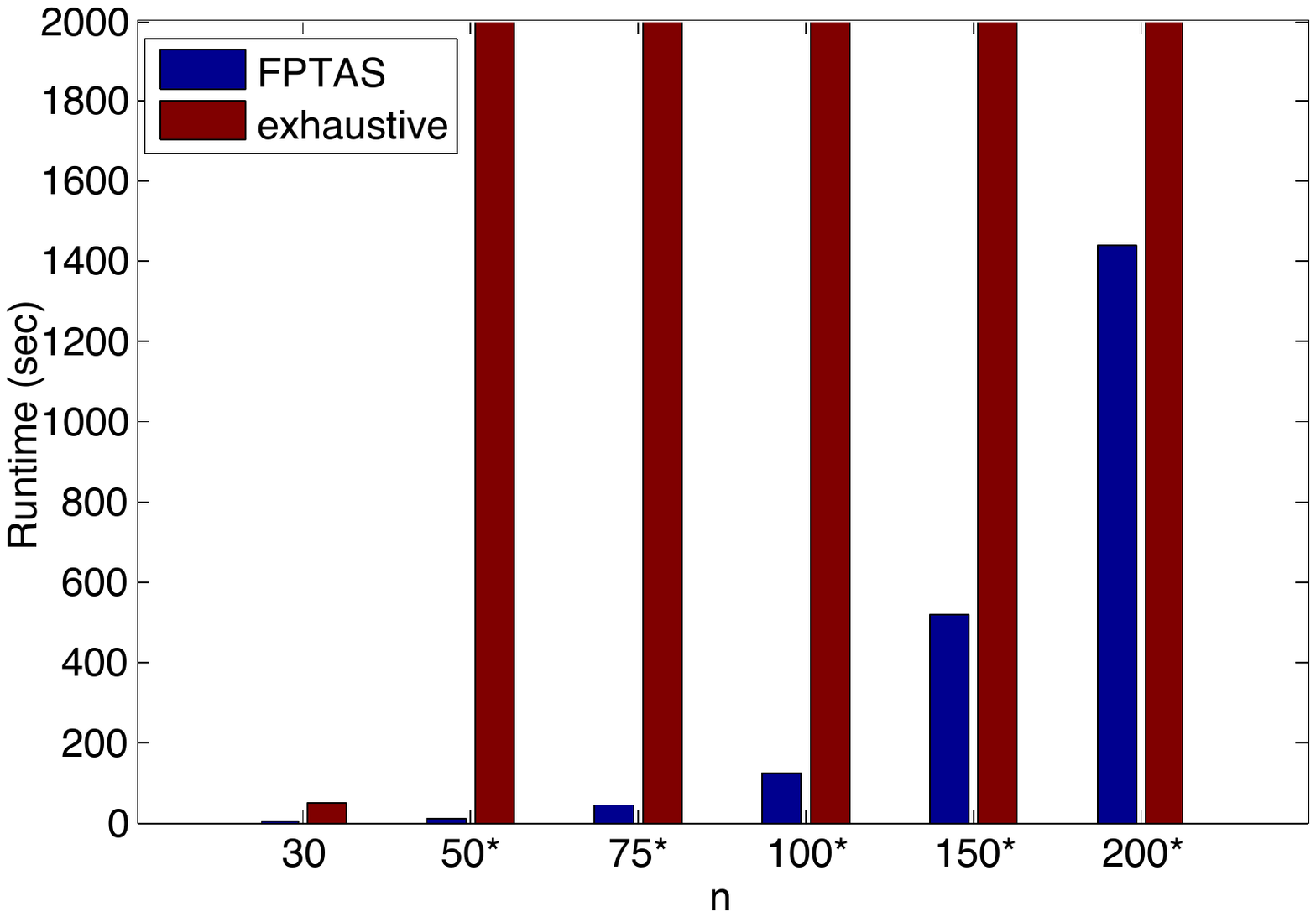}
                 \label{fig:runtime_FPTAS_exhaustive_small}
                 }
                \quad
		 \subfloat[][]{
                \includegraphics[width=0.25\textwidth]{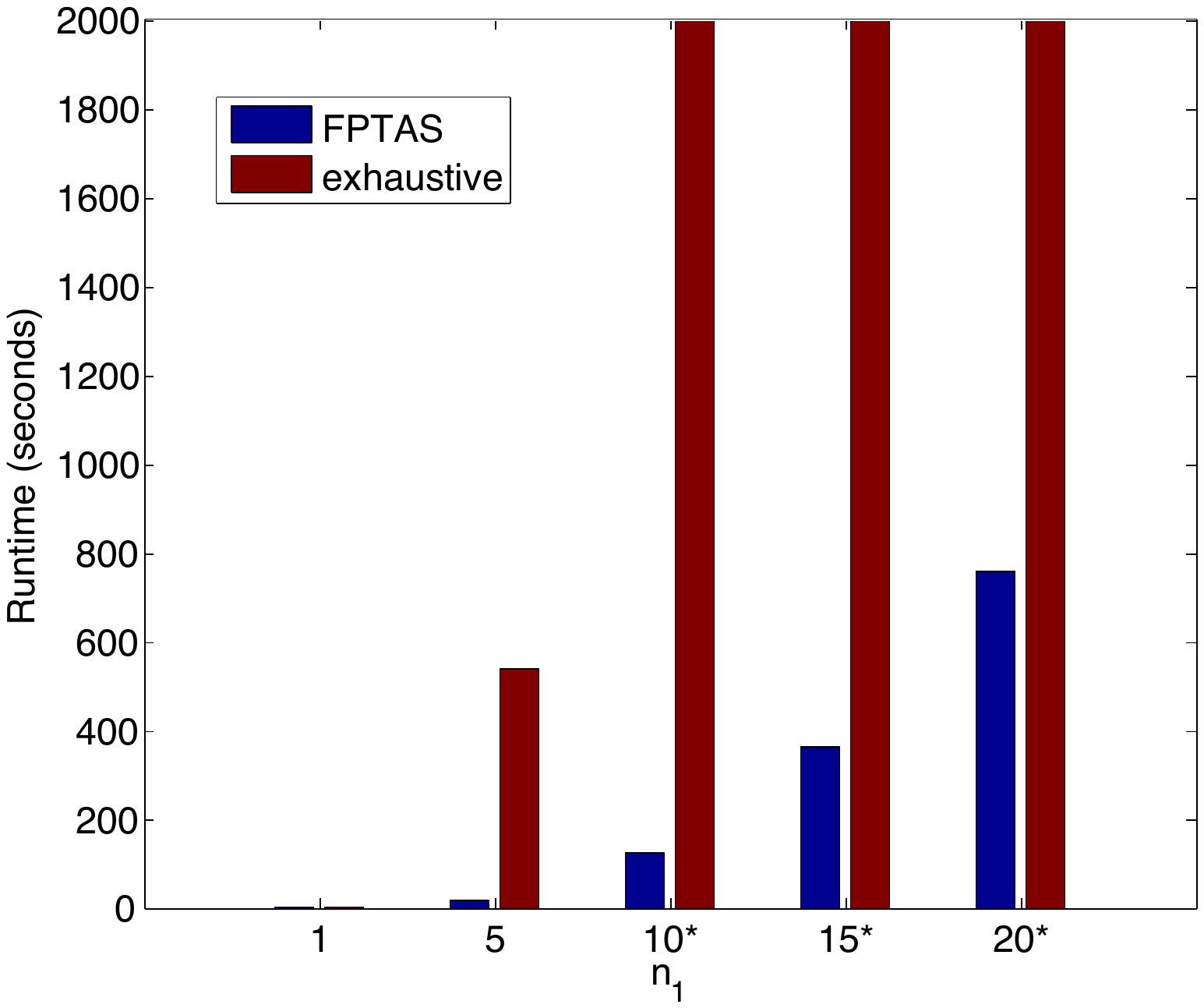}
                \label{fig:runtime_FPTAS_exhaustive_n1}
                }        
                \quad
                \subfloat[][]{
                \includegraphics[width=0.3\textwidth]{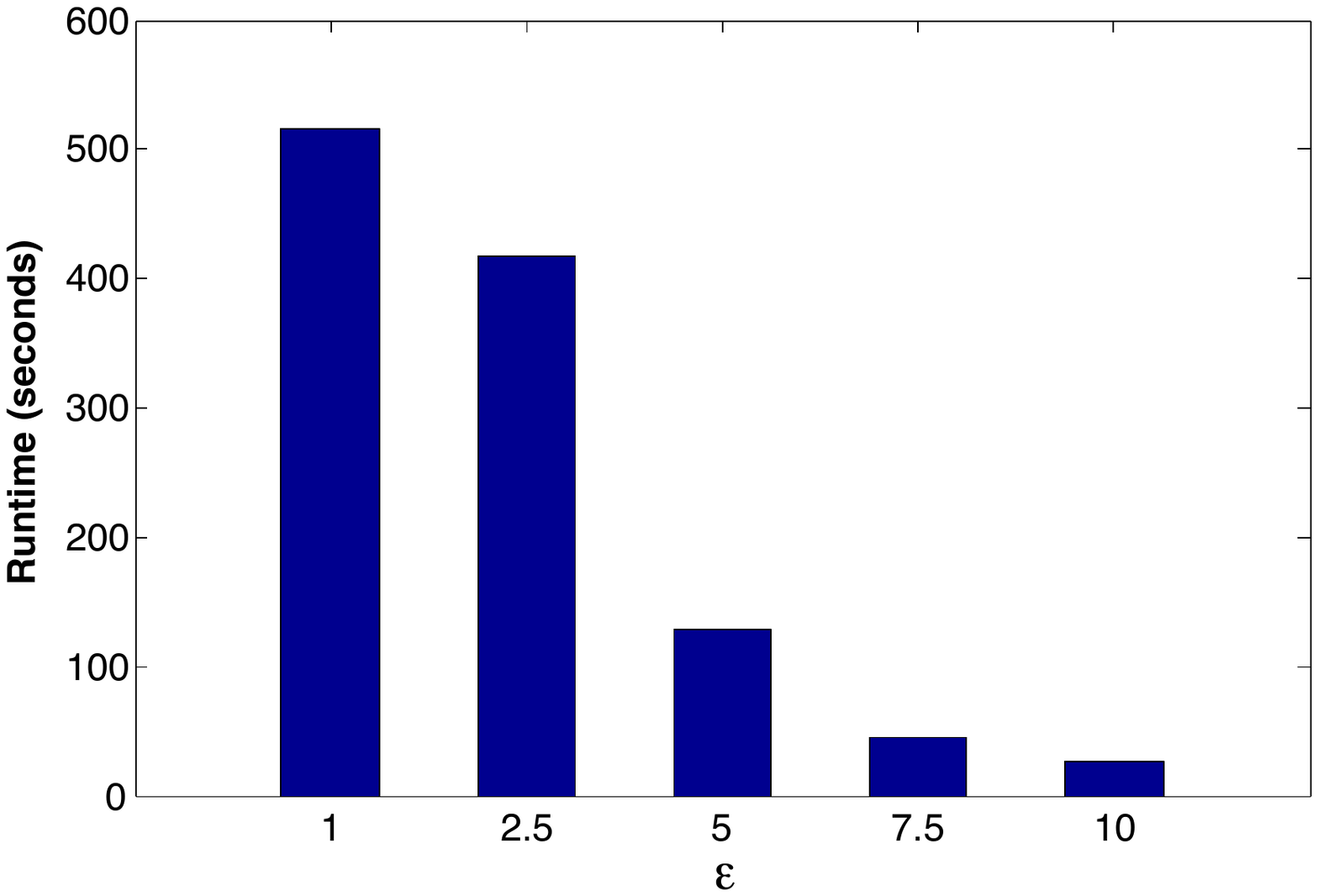}
                \label{fig:runtime_FPTAS_eps} 
               }
        \caption{Running time of the FPTAS, and its comparison with the running time of the exhaustive enumeration algorithm, for different values of the parameters. (a) Runtime of FPTAS and  of the exhaustive enumeration for different values of  $n$, and for $n_1=10, \varepsilon=5$, no censoring. (b) Runtime of FPTAS and of the exhaustive enumeration for $n=10, \varepsilon=5$, no censoring, and different  values of $n_1$. ({c}) Runtime of the FPTAS for different values of $\varepsilon$, and for $n=100, n_1=10$, no censoring.}\label{fig:runtime_FPTAS}
\end{figure}

\section*{Cancer data}

\subsection*{TCGA data}

We analyzed somatic mutation  and clinical data, including survival information, from the TCGA data portal (\texttt{https://tcga-data.nci.nih.gov/tcga/}). In particular we considered single nucleotide variants (SNPs) and small indels for 
colorectal carcinoma (COADREAD), glioblastoma multiforme (GBM), kidney renal clear cell carcinoma (KIRC), lung squamous cell carcinoma (LUSC), ovarian serous adenocarcinoma  (OV), and uterine corpus endometrial carcinoma (UCEC). 
We restricted our analysis to patients for which somatic mutation and survival data were both available. We only considered genes mutated in $> 1\%$ of patients. Since genes mutated in the same patients have the same association to survival, we collapsed them into \emph{metagenes}, recording the genes that appear in a metagene. Table~\ref{table:cancer_data} shows a number of statistics for each dataset.
Fig.~\ref{fig:cancer_data1} shows the comparison between the exact permutational $p$-value, computed using the FPTAS when the gene was mutated in $<10\%$ of the samples and the \R \texttt{survdiff} $p$-value or the exact conditional $p$-value for each considered gene.

We compared the genes reported in the top positions by the different tests. The number of genes shared in the top positions of the lists obtained by the exact permutational test and the exact conditional test, and by the exact permutational test and \R~ \texttt{survdiff} for the different cancer types are: in COADREAD, 6 genes are in the top $10$ positions by the exact permutational test and by \R~ \texttt{survdiff}, and 7 in  the top $10$ positions by the exact permutational test and  by the exact conditional test, while 17 genes are in  the top $25$ positions by the exact permutational test and by \R~ \texttt{survdiff}, and 21 in the top $25$ positions by the exact permutational test and by the exact conditional test; in GBM, no gene is in  the top $10$ positions by the exact permutational test and by \R~ \texttt{survdiff}, and  2 in the top $10$ positions by the exact permutational test and  by the exact conditional test, while no gene is in  the top $25$ positions by the exact permutational test and by \R \
texttt{survdiff}, and 2 in the top $25$ positions by the exact permutational test and by the exact conditional test; in KIRC, 2 genes are in  the top $10$ positions by the exact permutational test and by \R~ \texttt{survdiff}, and 4 in the top $10$ positions by the exact permutational test and  by the exact conditional test, while 9 genes are in  the top $25$ positions by the exact permutational test and by \R~ \texttt{survdiff}, and 15 in the top $25$ positions by the exact permutational test and by the exact conditional test; in LUSC, no gene is in  the top $10$ positions by the exact permutational test and by \R~ \texttt{survdiff}, and no gene is in the top $10$ positions by the exact permutational test and  by the exact conditional test, while 1 gene is in  the top $25$ positions by the exact permutational test and by \R~ \texttt{survdiff}, and 4 in the top $25$ positions by the exact permutational test and by the exact conditional test; in OV, no gene is in  the top $10$ positions by the exact permutational test and 
by \R~ \texttt{survdiff}, and 1 gene is in the top $10$ positions by the exact permutational test and  by the exact conditional test, while 3 genes are in  the top $25$ positions by the exact permutational test and by \R~ \texttt{survdiff}, and 7 in the top $25$ positions by the exact permutational test and by the exact conditional test; in UCEC, 5 genes are in  the top $10$ positions by the exact permutational test and by \R~ \texttt{survdiff}, and 7 in the top $10$ positions by the exact permutational test and  by the exact conditional test, while 15 genes are in  the top $25$ positions by the exact permutational test and by \R~ \texttt{survdiff}, and 21 in the top $25$ positions by the exact permutational test and by the exact conditional test.

Table~\ref{table:permutational}, Table~\ref{table:conditional}, and Table~\ref{table:Rsurvdiff} report the 10 genes with smallest $p$-values identified using the exact permutational test, the exact conditional test, or \R~ \texttt{survdiff} for the six cancer datasets. For each cancer type, the top 10 genes, their rank, and $p$-value using the exact permutational test, the exact conditional test, and \R~ \texttt{survdiff} are reported. The number of samples with a mutation in the gene and refences supporting the association of mutations in the gene with survival are also reported.

\begin{table}[h]
\begin{center}
\caption{Parameters of the cancer datasets analyzed.}
\label{table:cancer_data}
\begin{tabular}{c | c c c c}
dataset & num. patients & $\%$ censoring & num. mutated genes & num. genes mutated $> 1\%$.\\
\hline
COADREAD & 188 & 92$\%$ & 2716 & 1099\\
GBM & 268 & 30$\%$& 3702& 1689\\
KIRC &292 & 74$\%$ & 4448& 1879\\
LUSC & 175 & 60$\%$& 8465&5776 \\
OV &315 & 43$\%$& 3740&665 \\
UCEC &235 & 93$\%$& 8304& 5792\\
\end{tabular}
\begin{center}
{\footnotesize
For each dataset we show: the number of patients with both mutation and survival data; the percentage of patients with censored survival data; the number of (meta)genes mutated in at least one sample; the number of (meta)genes mutated in more than $1\%$ of all samples.
}
\end{center}
\end{center}
\end{table}

\begin{figure}[h]
\centering

                \subfloat[][]{
                \includegraphics[width=0.16\textwidth]{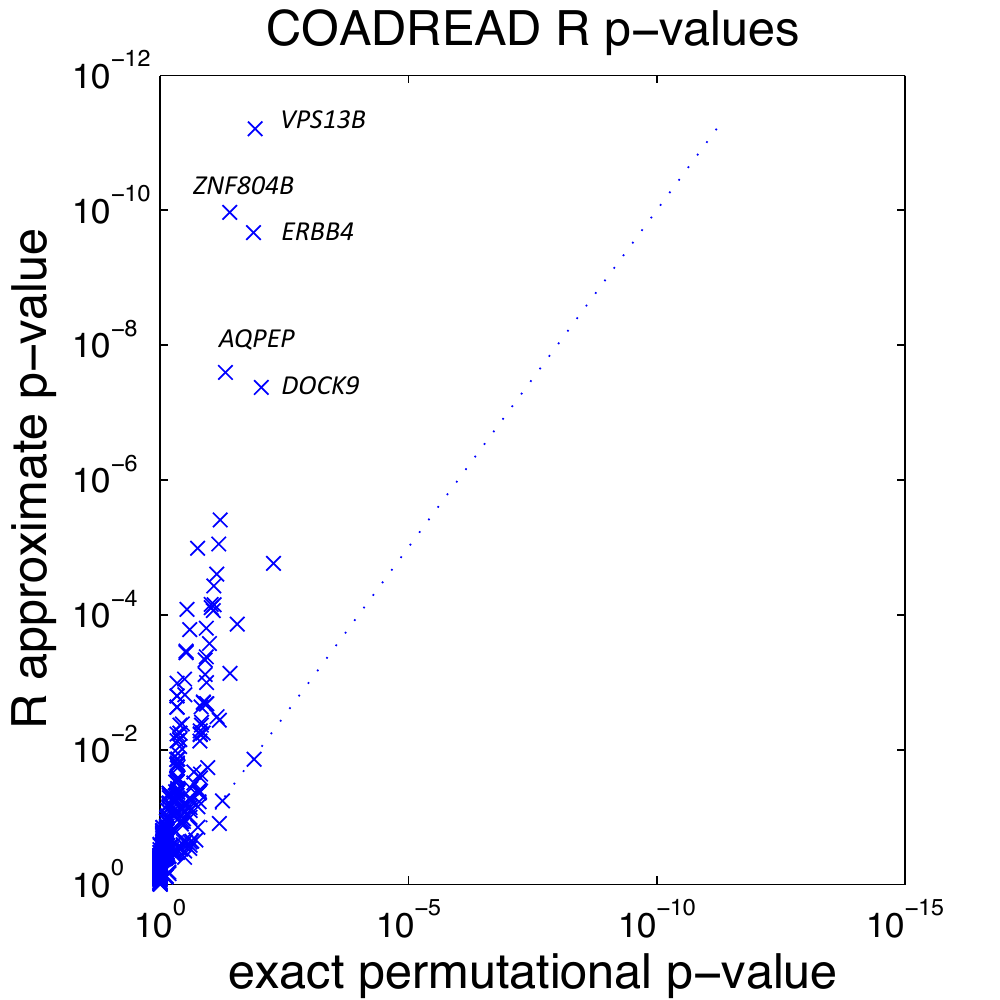}
                \label{fig:R_COADREAD}
                }
                \hspace{-0.7cm}
                \quad
                \subfloat[][]{
                \includegraphics[width=0.16\textwidth]{GBM_FPTAS_vs_R.pdf}
                \label{fig:R_GBM}
                }
                \quad
                \hspace{-0.7cm}
                \subfloat[][]{
                \includegraphics[width=0.16\textwidth]{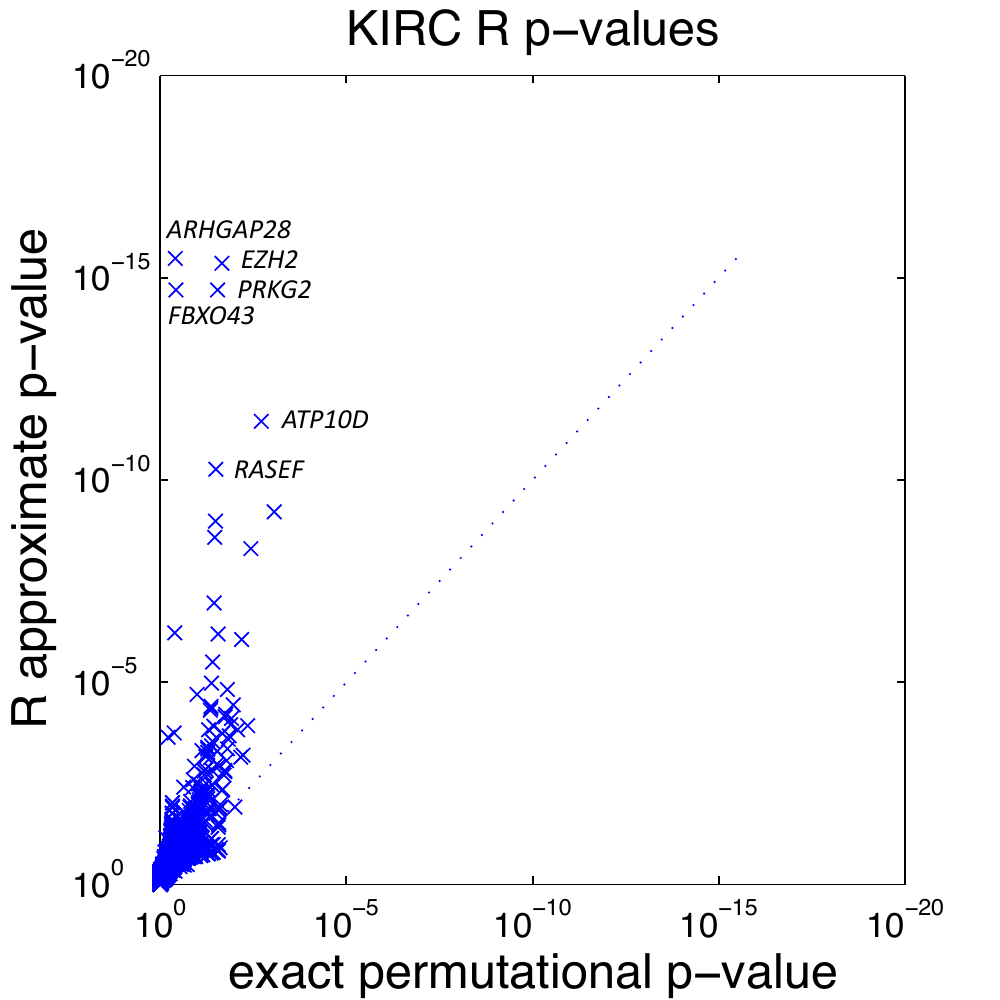}
                \label{fig:R_KIRC}
                }
                \quad
                \hspace{-0.7cm}
                \subfloat[][]{
                \includegraphics[width=0.16\textwidth]{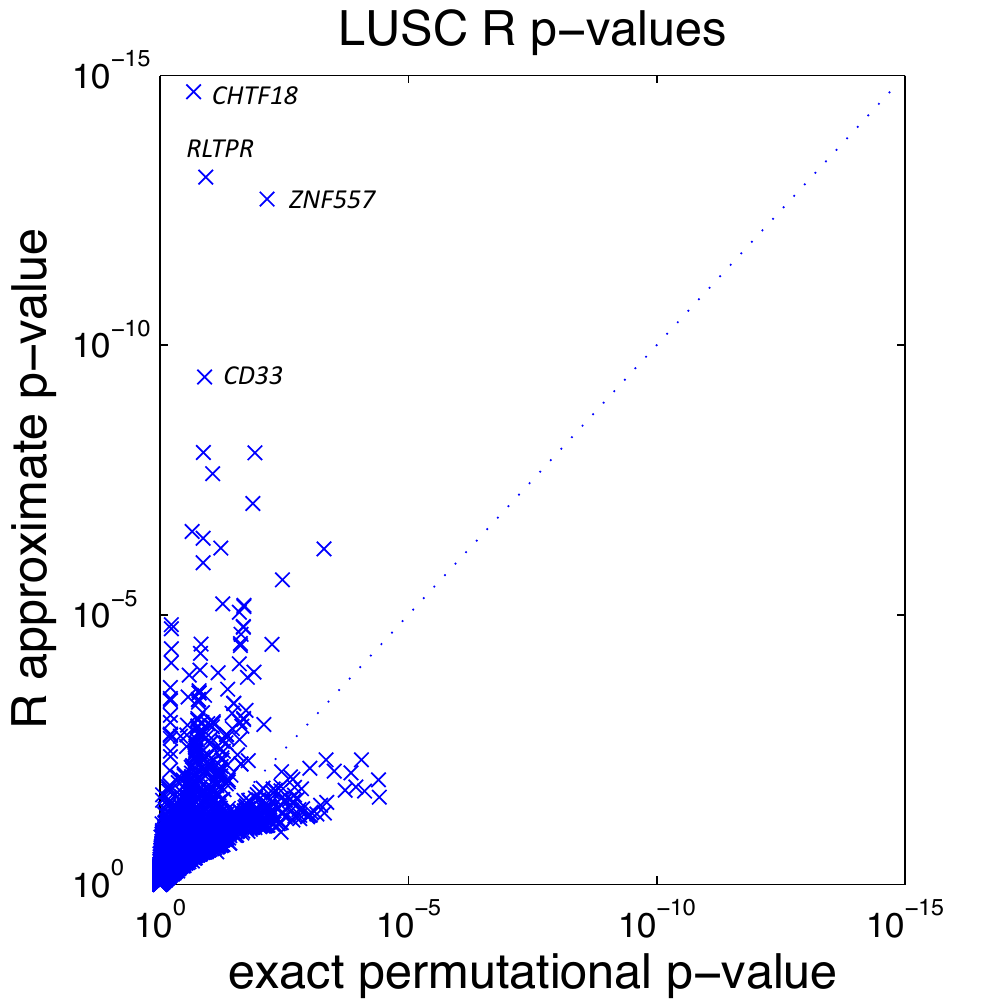}
                \label{fig:R_LUSC}
                }
                \quad
                \hspace{-0.7cm}
                 \subfloat[][]{
                \includegraphics[width=0.16\textwidth]{OV_FPTAS_vs_R.pdf}
                \label{fig:R_OV}
                }
		\quad                
		\hspace{-0.7cm}
                 \subfloat[][]{
                \includegraphics[width=0.16\textwidth]{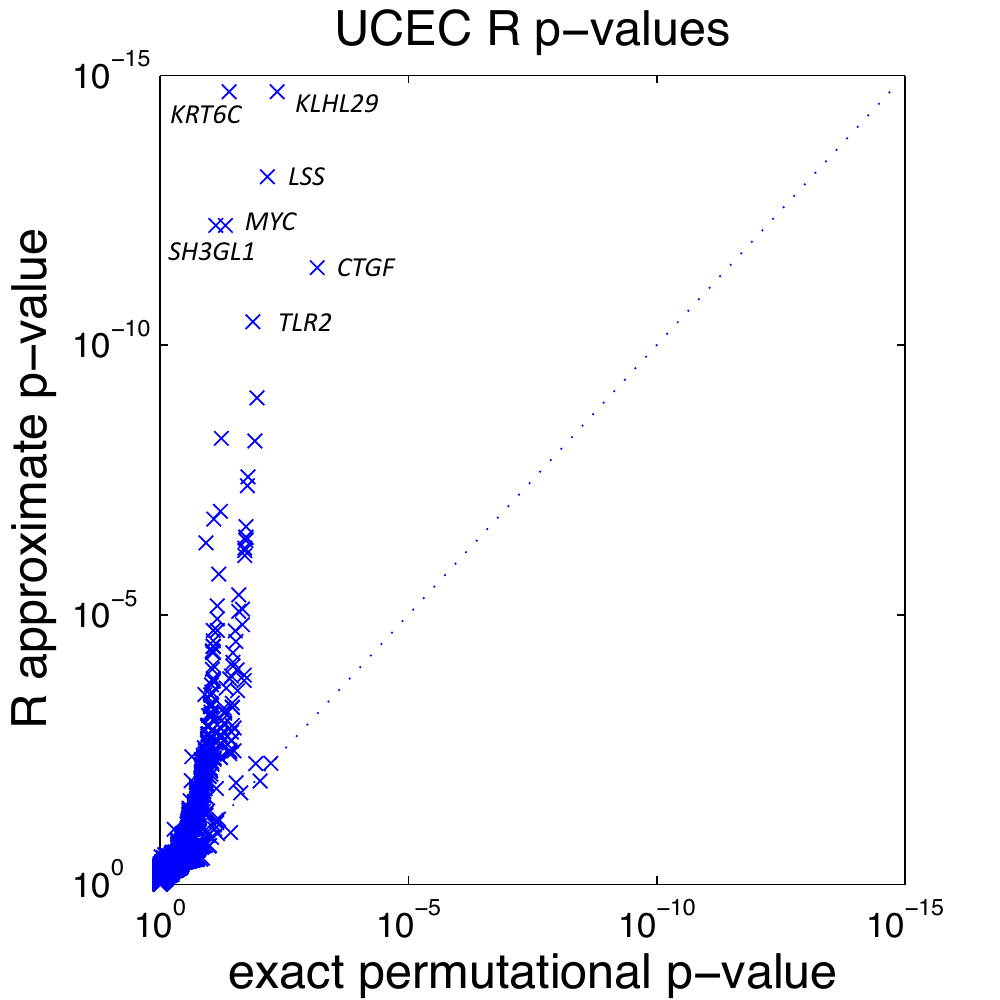}
                \label{fig:R_UCEC}
                }
                
		 \subfloat[][]{
                \includegraphics[width=0.16\textwidth]{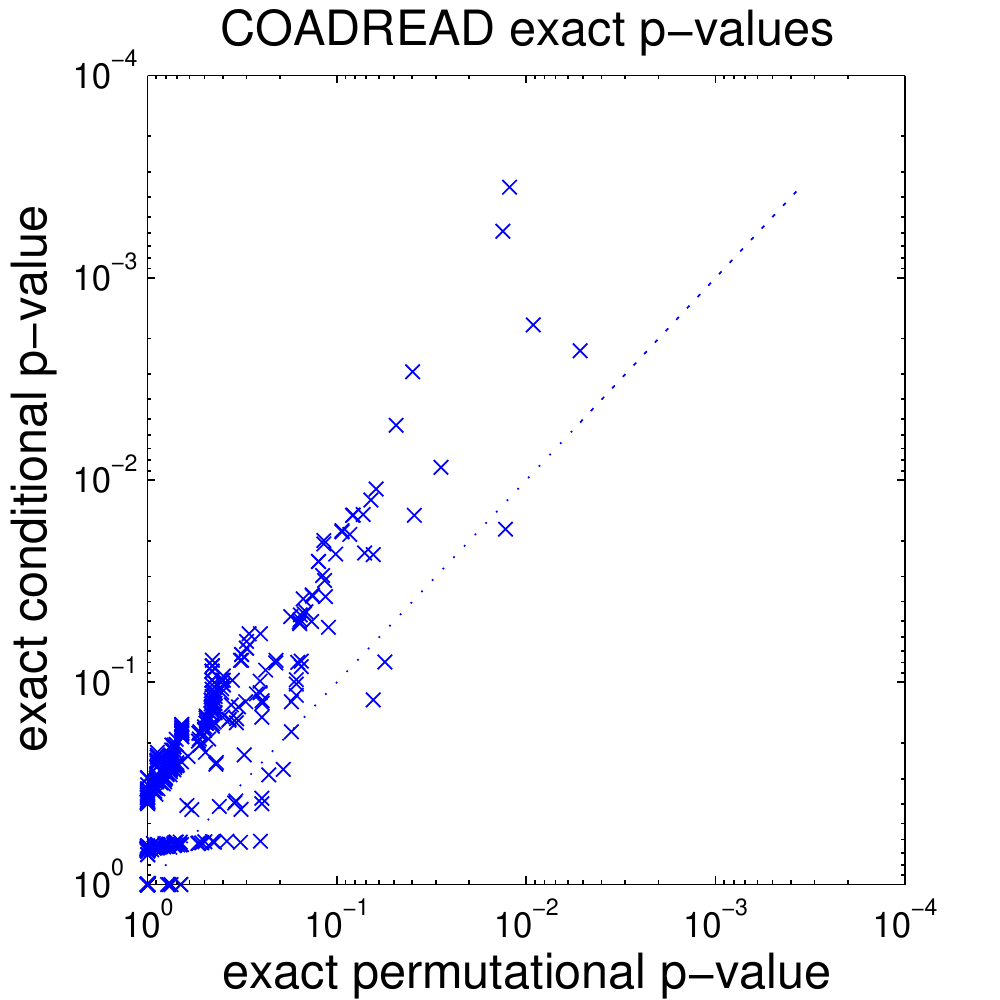}
                \label{fig:exact_COADREAD}
		}
		\quad
		\hspace{-0.7cm}
		 \subfloat[][]{
                \includegraphics[width=0.16\textwidth]{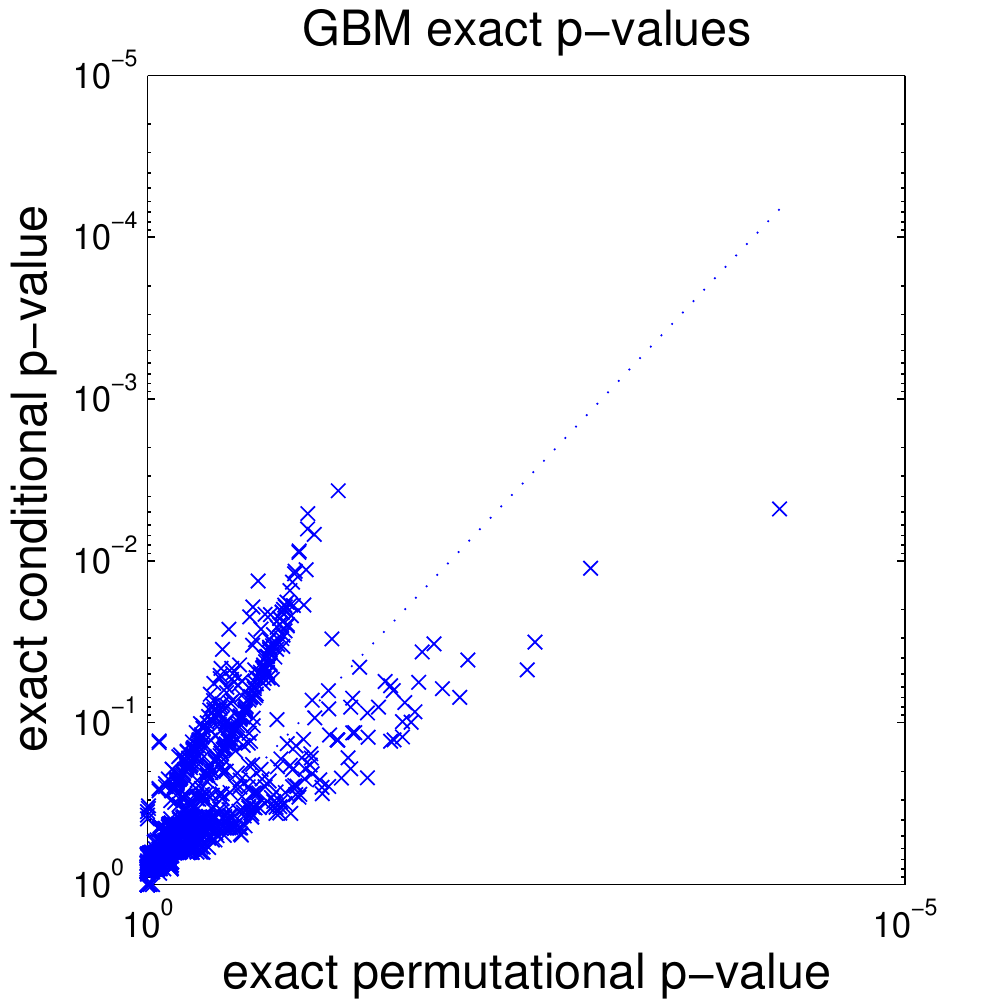}
                \label{fig:exact_GBM}
                }                
               \quad
		\hspace{-0.7cm}
		 \subfloat[][]{
                \includegraphics[width=0.16\textwidth]{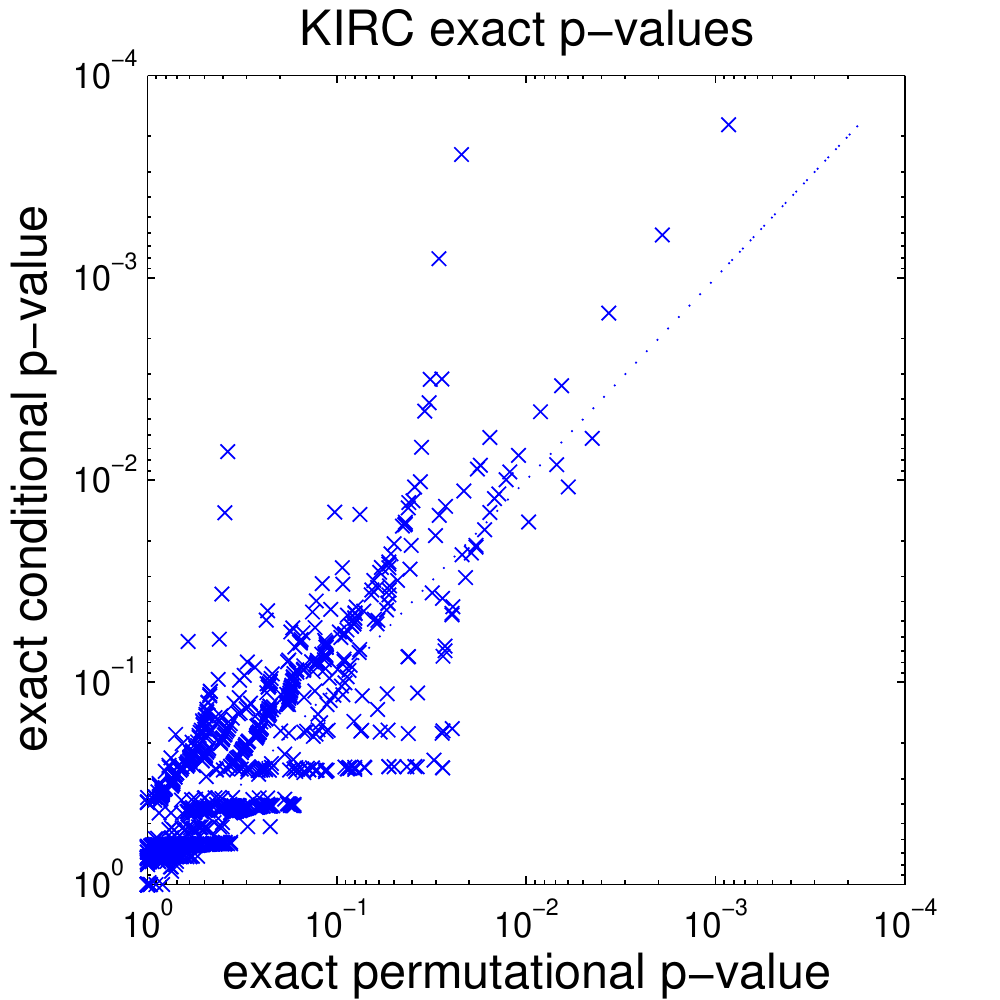}
                \label{fig:exact_KIRC}
                }                
               \quad
               \hspace{-0.7cm}
		 \subfloat[][]{
                \includegraphics[width=0.16\textwidth]{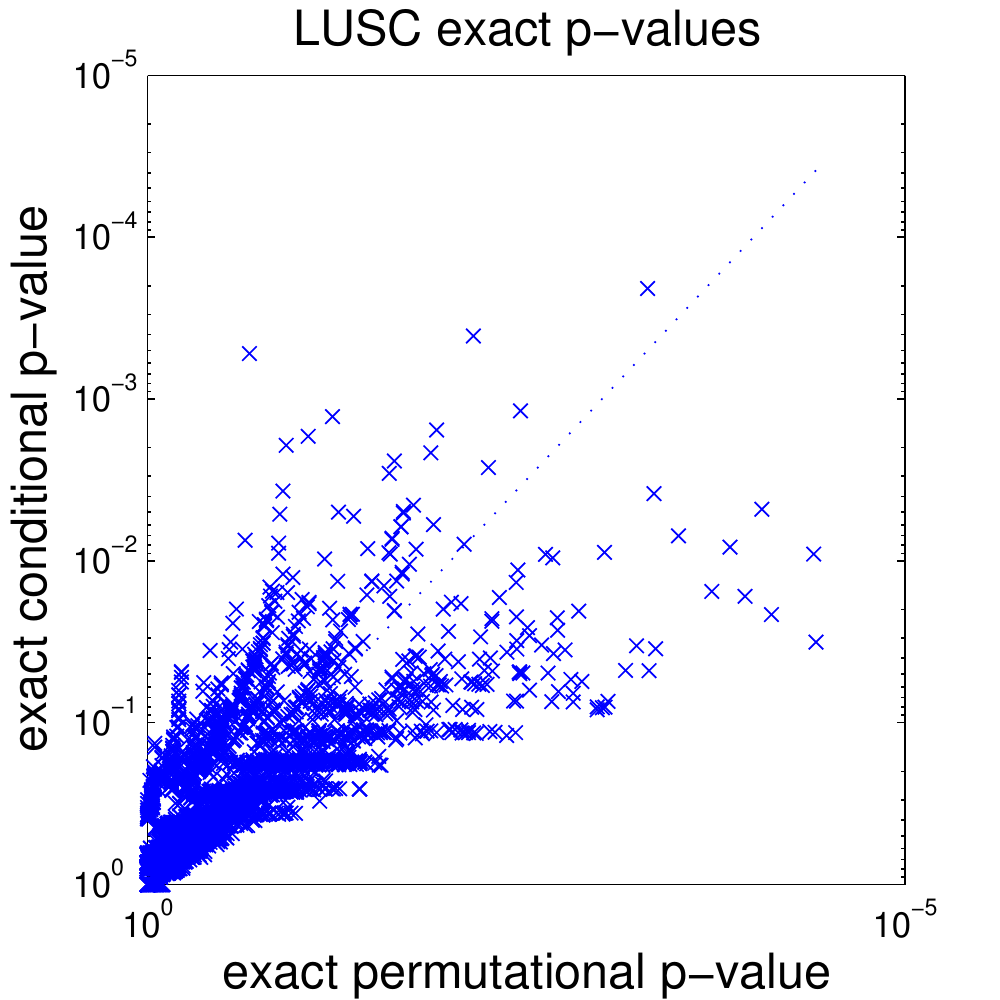}
                \label{fig:exact_LUSC}
                }
                  \quad
                  \hspace{-0.7cm}
		 \subfloat[][]{
                \includegraphics[width=0.16\textwidth]{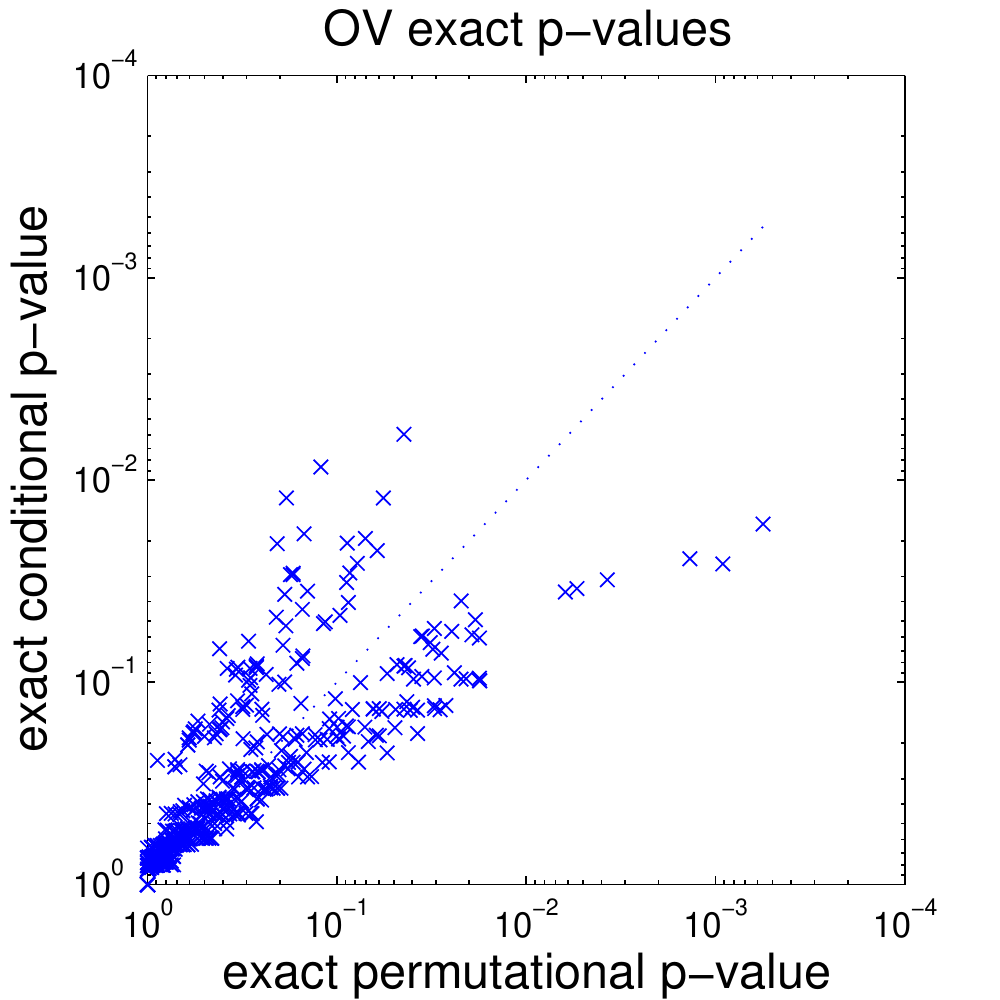}
                \label{fig:exact_OV}
                }
                \quad
                \hspace{-0.7cm}
		 \subfloat[][]{
                \includegraphics[width=0.16\textwidth]{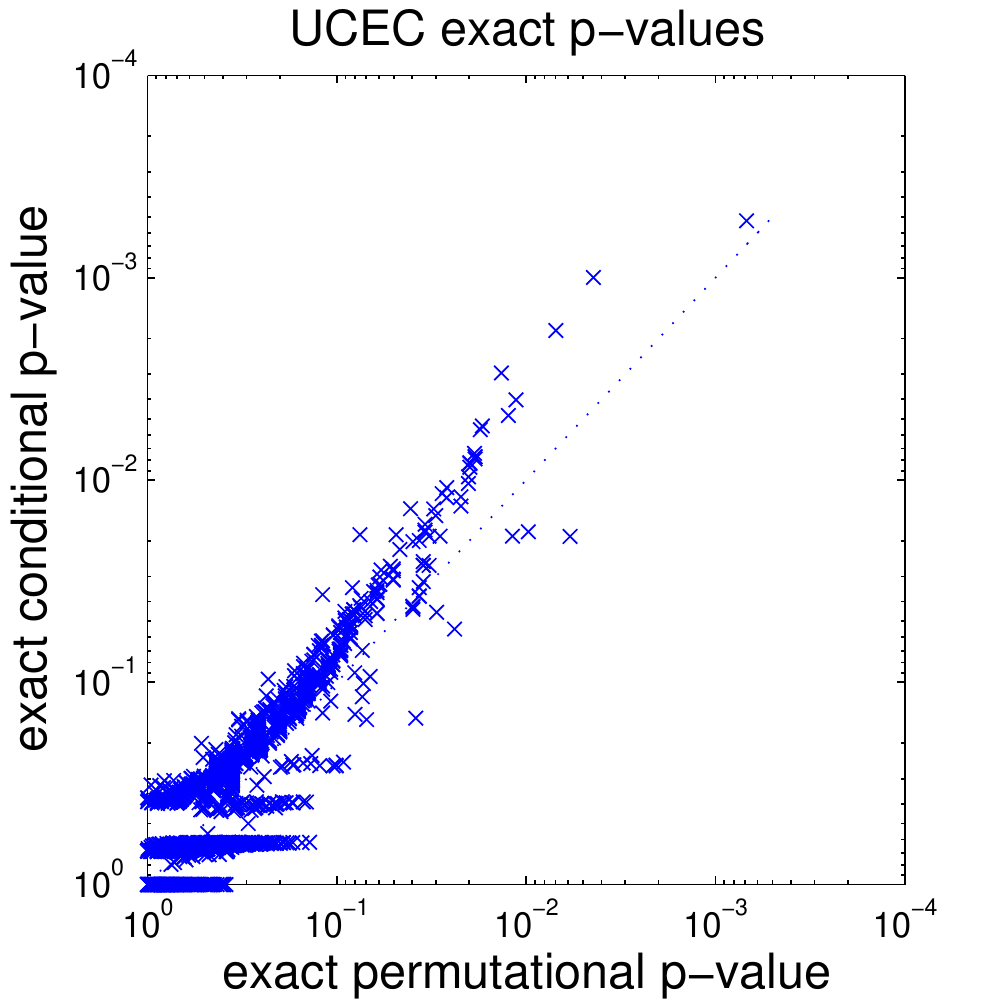}
                \label{fig:exact_UCEC}
                }
        \caption{Comparison of the $p$-values from the exact permutational test, the exact conditional test, and the asymptotic approximation (as implemented in the \texttt{survdiff} package in \R) for cancer datasets COADREAD, GBM, KIRC, LUSC, OV, UCEC.  (a,b,c,d): Each data point represents a gene, and the $p$-values computed using  the exact permutational test and the  $p$-values from \R~ \texttt{survdiff} for the gene are shown. (e,f,g,h): Each data point represents a gene, and the $p$-values computed using  the exact permutational test and the  exact condtional test for the gene are shown.}\label{fig:cancer_data1}
        \end{figure}

\begin{table}[htbp]
\begin{center}
\caption{The 10 genes  with smallest $p$-values identified using the exact permutational test in cancer data. For each cancer type, we show the top 10 genes, their rank, and $p$-value using the exact permutational test, the exact conditional test, and \R~ \texttt{survdiff}. The number of samples with a mutation in the gene and references supporting the association of mutations in the gene with survival are also reported.}
\label{table:permutational}
{\scriptsize
\begin{tabular}{c | c c c c c c c c c c c}
& & \multicolumn{2}{c}{exact permutational} & & \multicolumn{2}{c}{exact conditional} & &  \multicolumn{2}{c}{R \texttt{survdiff}} & & \\
\cline{3-4} \cline{6-7} \cline{9-10} 
cancer type & gene & rank & $p$ & & rank & $p$ & & rank & $p$ & num. mut. & refs\\
\hline
COADREAD & PTPRM & 1 & 5.20e-03&  & 4 & 2.30e-03 & & 10 & 1.74e-05& 8\\
& DOCK9 & 2 & 9.18e-03&  & 3 & 1.72e-03 & & 5 & 4.25e-08& 6\\
& VPS13B & 3 & 1.22e-02&  & 1 & 3.57e-04 & & 1 & 6.21e-12& 8\\
& KRAS & 4 & 1.29e-02&  & 14 & 1.75e-02 & & 82 & 1.40e-02& 85 & \cite{Heinemann:2009kx}\\
& ERBB4 & 5 & 1.33e-02&  & 2 & 5.89e-04 & & 3 & 2.16e-10& 8\\
& CACNA1E & 6 & 2.82e-02&  & 7 & 8.68e-03 & & 21 & 1.38e-04& 7\\
& PCDH11X & 7 & 3.90e-02&  & 12 & 1.49e-02 & & 37 & 7.42e-04& 7\\
& ZNF804B & 8 & 3.97e-02&  & 5 & 2.92e-03 & & 2 & 1.07e-10& 5\\
& AQPEP & 9 & 4.86e-02&  & 6 & 5.38e-03 & & 4 & 2.55e-08& 5\\
& FBXW7 & 10 & 5.58e-02&  & 64 & 7.95e-02 & & 142 & 5.75e-02& 21 &\cite{Jenkinson:2012vn,Cancer-Genome-Atlas-Network:2012ys}\\
GBM & IDH1 & 1 & 6.73e-05&  & 2 & 4.79e-03 & & 27 & 3.76e-03& 15 & \cite{Nobusawa:2009fk,Myung:2012zr}\\
& MED12L & 2 & 1.19e-03&  & 8 & 1.11e-02 & & 62 & 1.31e-02& 3\\
& EMILIN1 & 3 & 2.77e-03&  & 45 & 3.18e-02 & & 100 & 3.17e-02& 4\\
& ATP6V0A4 & 4 & 3.11e-03&  & 77 & 4.72e-02 & & 101 & 3.29e-02& 4\\
& VARS2 & 5 & 7.67e-03&  & 64 & 4.11e-02 & & 111 & 3.80e-02& 3 & \cite{Chae:2011ly}\\
& GALR1 & 6 & 8.67e-03&  & 119 & 6.97e-02 & & 156 & 6.61e-02& 3\\
& KCNH4 & 7 & 1.13e-02&  & 108 & 6.16e-02 & & 125 & 4.57e-02& 3\\
& CXorf22 & 8 & 1.29e-02&  & 47 & 3.26e-02 & & 102 & 3.32e-02& 8\\
& ITGAM & 9 & 1.54e-02&  & 57 & 3.65e-02 & & 109 & 3.73e-02& 6\\
& PLXNB3 & 10 & 1.62e-02&  & 99 & 5.63e-02 & & 141 & 5.67e-02& 6\\
KIRC & CDCA2 & 1 & 8.54e-04&  & 1 & 1.75e-04 & & 7 & 6.15e-10& 6\\
& ATP10D & 2 & 1.91e-03&  & 3 & 6.15e-04 & & 5 & 3.55e-12& 3\\
& TDRD7 & 3 & 3.66e-03&  & 5 & 1.50e-03 & & 10 & 5.00e-09& 3\\
& KIF27 & 4 & 4.48e-03&  & 13 & 6.23e-03 & & 27 & 1.19e-04& 4\\
& RELN & 5 & 5.99e-03&  & 23 & 1.08e-02 & & 44 & 6.37e-04& 4\\
& TOPORS & 6 & 6.50e-03&  & 8 & 3.42e-03 & & 14 & 8.83e-07& 3\\
& TCF20 & 7 & 6.89e-03&  & 17 & 8.40e-03 & & 46 & 7.18e-04& 6\\
& CHD7 & 8 & 8.41e-03&  & 11 & 4.60e-03 & & 30 & 1.51e-04& 7\\
& BAP1 & 9 & 9.72e-03&  & 37 & 1.62e-02 & & 98 & 1.23e-02& 27 & \cite{Hakimi:2012ve}\\
& BCL9 & 10 & 1.10e-02&  & 16 & 7.57e-03 & & 19 & 3.67e-05& 3\\
LUSC & ATXN10 & 1 & 3.87e-05&  & 121 & 3.18e-02 & & 229 & 2.41e-02& 3\\
& ZNF304 & 2 & 4.00e-05&  & 38 & 9.10e-03 & & 166 & 1.14e-02& 5\\
& LINGO1 & 3 & 7.61e-05&  & 83 & 2.15e-02 & & 210 & 1.87e-02& 4\\
& C10orf79 & 4 & 8.80e-05&  & 16 & 4.80e-03 & & 116 & 4.91e-03& 9\\
& SPANXN1 & 5 & 1.14e-04&  & 62 & 1.66e-02 & & 190 & 1.57e-02& 5\\
& OR5AS1 & 6 & 1.43e-04&  & 31 & 8.24e-03 & & 146 & 8.59e-03& 7\\
& SLC7A13 & 7 & 1.89e-04&  & 56 & 1.55e-02 & & 203 & 1.80e-02& 5\\
& CCDC85A & 8 & 3.12e-04&  & 25 & 7.01e-03 & & 142 & 7.95e-03& 9\\
& CIT & 9 & 4.43e-04&  & 132 & 3.49e-02 & & 271 & 3.00e-02& 4\\
& DUSP27 & 10 & 4.53e-04&  & 14 & 3.85e-03 & & 117 & 4.93e-03& 11\\
OV & DFNB31 & 1 & 5.61e-04&  & 5 & 1.65e-02 & & 18 & 1.23e-02& 4\\
& ERN2 & 2 & 9.16e-04&  & 13 & 2.60e-02 & & 24 & 2.25e-02& 5\\
& NCOA3 & 3 & 1.37e-03&  & 11 & 2.45e-02 & & 21 & 1.75e-02& 4\\
& CIC & 4 & 3.72e-03&  & 18 & 3.12e-02 & & 26 & 2.47e-02& 4\\
& PKP4 & 5 & 5.40e-03&  & 20 & 3.45e-02 & & 31 & 3.33e-02& 5\\
& BRCA2 & 6 & 6.20e-03&  & 22 & 3.58e-02 & & 32 & 3.45e-02& 9 & \cite{Bolton:2012bh,Cancer-Genome-Atlas-Research-Network:2011uq}\\
& MYST4 & 7 & 1.76e-02&  & 74 & 9.86e-02 & & 71 & 8.00e-02& 5\\
& OVGP1 & 8 & 1.77e-02&  & 38 & 6.05e-02 & & 36 & 4.50e-02& 4\\
& ADAR & 9 & 1.77e-02&  & 68 & 9.55e-02 & & 64 & 7.08e-02& 4\\
& PCDH9 & 10 & 1.77e-02&  & 73 & 9.81e-02 & & 66 & 7.36e-02& 4\\
UCEC & CTGF & 1 & 6.87e-04&  & 1 & 5.23e-04 & & 6 & 3.68e-12& 3\\
& KRT6C & 2 & 4.41e-03&  & 2 & 9.98e-04 & & 2 & 0.00e+00& 3\\
& FAT3 & 3 & 5.88e-03&  & 36 & 1.91e-02 & & 166 & 5.70e-03& 23\\
& LSS & 4 & 6.96e-03&  & 3 & 1.82e-03 & & 3 & 7.61e-14& 3\\
& ARID1A & 5 & 9.74e-03&  & 29 & 1.81e-02 & & 219 & 1.21e-02& 73 & \cite{Quesada:2012kx}\\
& EIF2C4 & 6 & 1.13e-02&  & 5 & 4.02e-03 & & 8 & 9.56e-10& 3\\
& DMD & 7 & 1.18e-02&  & 33 & 1.90e-02 & & 167 & 5.77e-03& 28\\
& CPNE8 & 8 & 1.24e-02&  & 6 & 4.81e-03 & & 13 & 5.98e-09& 3\\
& TLR2 & 9 & 1.35e-02&  & 4 & 2.96e-03 & & 7 & 3.68e-11& 5\\
& CFP & 10 & 1.70e-02&  & 7 & 5.42e-03 & & 14 & 2.78e-08& 5\\
\end{tabular}
}
\end{center}
\end{table}

\begin{table}[htbp]
\begin{center}
\caption{The 10 genes  with smallest $p$-values identified using the exact conditional test in cancer data. For each cancer type, we show the top 10 genes, their rank, and $p$-value using the exact permutational test, the exact conditional test, and \R~ \texttt{survdiff}. The number of samples with a mutation in the gene and references supporting the association of mutations in the gene with survival are also reported.}
\label{table:conditional}
{\scriptsize
\begin{tabular}{c | c c c c c c c c c c c}
& & \multicolumn{2}{c}{exact permutational} & & \multicolumn{2}{c}{exact conditional} & &  \multicolumn{2}{c}{R \texttt{survdiff}} & & \\
\cline{3-4} \cline{6-7} \cline{9-10} 
cancer type & gene & rank & $p$ & & rank & $p$ & & rank & $p$ & num. mut. & refs\\
\hline
COADREAD & VPS13B & 3 & 1.22e-02& & 1 & 3.57e-04 &  & 1 & 6.21e-12& 8\\
& ERBB4 & 5 & 1.33e-02& & 2 & 5.89e-04 &  & 3 & 2.16e-10& 8\\
& DOCK9 & 2 & 9.18e-03& & 3 & 1.72e-03 &  & 5 & 4.25e-08& 6\\
& PTPRM & 1 & 5.20e-03& & 4 & 2.30e-03 &  & 10 & 1.74e-05& 8\\
& ZNF804B & 8 & 3.97e-02& & 5 & 2.92e-03 &  & 2 & 1.07e-10& 5\\
& AQPEP & 9 & 4.86e-02& & 6 & 5.38e-03 &  & 4 & 2.55e-08& 5\\
& CACNA1E & 6 & 2.82e-02& & 7 & 8.68e-03 &  & 21 & 1.38e-04& 7\\
& DOCK3 & 11 & 6.21e-02& & 8 & 1.11e-02 &  & 6 & 3.92e-06& 3\\
& IGF1R & 14 & 6.61e-02& & 9 & 1.26e-02 &  & 7 & 8.95e-06& 3\\
& ZNF568 & 16 & 7.26e-02& & 10 & 1.48e-02 &  & 11 & 2.50e-05& 3\\
GBM & TBC1D2 & 33 & 5.51e-02& & 1 & 3.69e-03 &  & 4 & 7.96e-05& 5\\
& IDH1 & 1 & 6.73e-05& & 2 & 4.79e-03 &  & 27 & 3.76e-03& 15 & \cite{Nobusawa:2009fk,Myung:2012zr}\\
& HDAC9 & 52 & 8.72e-02& & 3 & 5.12e-03 &  & 1 & 6.13e-06& 3\\
& ATP13A2 & 53 & 8.77e-02& & 4 & 6.36e-03 &  & 8 & 3.71e-04& 6\\
& CASKIN2 & 46 & 7.92e-02& & 5 & 6.88e-03 &  & 5 & 1.59e-04& 4\\
& GRM7 & 61 & 9.95e-02& & 6 & 8.68e-03 &  & 2 & 7.33e-05& 3\\
& MYST3 & 62 & 1.00e-01& & 7 & 8.88e-03 &  & 3 & 7.86e-05& 3\\
& MED12L & 2 & 1.19e-03& & 8 & 1.11e-02 &  & 62 & 1.31e-02& 3\\
& BID & 54 & 8.96e-02& & 9 & 1.14e-02 &  & 12 & 7.07e-04& 4\\
& NFYC & 68 & 1.06e-01& & 10 & 1.16e-02 &  & 6 & 1.96e-04& 3\\
KIRC & CDCA2 & 1 & 8.54e-04& & 1 & 1.75e-04 &  & 7 & 6.15e-10& 6\\
& FBXO43 & 26 & 2.19e-02& & 2 & 2.46e-04 &  & 4 & 4.44e-16& 5\\
& ATP10D & 2 & 1.91e-03& & 3 & 6.15e-04 &  & 5 & 3.55e-12& 3\\
& EZH2 & 42 & 2.89e-02& & 4 & 8.07e-04 &  & 2 & 0.00e+00& 3 & \\ 
& TDRD7 & 3 & 3.66e-03& & 5 & 1.50e-03 &  & 10 & 5.00e-09& 3\\
& PKD1L3 & 40 & 2.79e-02& & 6 & 3.17e-03 &  & 13 & 6.48e-07& 5\\
& RASEF & 46 & 3.21e-02& & 7 & 3.19e-03 &  & 6 & 5.48e-11& 3\\
& TOPORS & 6 & 6.50e-03& & 8 & 3.42e-03 &  & 14 & 8.83e-07& 3\\
& PCDH11X & 47 & 3.26e-02& & 9 & 4.16e-03 &  & 8 & 1.05e-09& 3\\
& ZFAT & 48 & 3.43e-02& & 10 & 4.57e-03 &  & 9 & 2.65e-09& 3\\
LUSC & ZNF527 & 13 & 5.00e-04& & 1 & 2.08e-04 &  & 12 & 6.00e-07& 9\\
& ZNF557 & 75 & 7.07e-03& & 2 & 4.09e-04 &  & 3 & 1.96e-13& 3\\
& CHTF18 & 1014 & 2.13e-01& & 3 & 5.24e-04 &  & 1 & 0.00e+00& 4\\
& CNGA4 & 42 & 3.44e-03& & 4 & 1.19e-03 &  & 14 & 2.27e-06& 5\\
& BCL11A & 362 & 6.00e-02& & 5 & 1.28e-03 &  & 11 & 5.80e-07& 7\\
& SIN3B & 111 & 1.23e-02& & 6 & 1.55e-03 &  & 6 & 1.00e-08& 3\\
& ANKRD11 & 487 & 8.70e-02& & 7 & 1.70e-03 &  & 7 & 2.41e-08& 5\\
& RLTPR & 656 & 1.21e-01& & 8 & 1.94e-03 &  & 2 & 7.70e-14& 3\\
& NARG2 & 120 & 1.35e-02& & 9 & 2.15e-03 &  & 8 & 8.62e-08& 3\\
& ATP2B2 & 176 & 2.34e-02& & 10 & 2.42e-03 &  & 23 & 2.32e-05& 7\\
OV & PABPC3 & 39 & 4.42e-02& & 1 & 5.95e-03 &  & 2 & 1.04e-04& 4\\
& KIAA0913 & 82 & 1.21e-01& & 2 & 8.65e-03 &  & 1 & 6.22e-05& 4\\
& NIPBL & 110 & 1.85e-01& & 3 & 1.23e-02 &  & 3 & 2.56e-04& 5\\
& PCDHA9 & 46 & 5.67e-02& & 4 & 1.23e-02 &  & 4 & 8.84e-04& 4\\
& DFNB31 & 1 & 5.61e-04& & 5 & 1.65e-02 &  & 18 & 1.23e-02& 4\\
& RBM44 & 91 & 1.49e-01& & 6 & 1.85e-02 &  & 5 & 9.99e-04& 4\\
& TNRC18B & 55 & 7.05e-02& & 7 & 1.95e-02 &  & 7 & 2.96e-03& 4\\
& PTCH1 & 65 & 8.80e-02& & 8 & 2.05e-02 &  & 10 & 5.46e-03& 6\\
& EFCAB6 & 124 & 2.07e-01& & 9 & 2.07e-02 &  & 6 & 1.56e-03& 5\\
& PCDH15 & 50 & 6.10e-02& & 10 & 2.24e-02 &  & 12 & 6.58e-03& 5\\
UCEC & CTGF & 1 & 6.87e-04& & 1 & 5.23e-04 &  & 6 & 3.68e-12& 3\\
& KRT6C & 2 & 4.41e-03& & 2 & 9.98e-04 &  & 2 & 0.00e+00& 3\\
& LSS & 4 & 6.96e-03& & 3 & 1.82e-03 &  & 3 & 7.61e-14& 3\\
& TLR2 & 9 & 1.35e-02& & 4 & 2.96e-03 &  & 7 & 3.68e-11& 5\\
& EIF2C4 & 6 & 1.13e-02& & 5 & 4.02e-03 &  & 8 & 9.56e-10& 3\\
& CPNE8 & 8 & 1.24e-02& & 6 & 4.81e-03 &  & 13 & 5.98e-09& 3\\
& CFP & 10 & 1.70e-02& & 7 & 5.42e-03 &  & 14 & 2.78e-08& 5\\
& TUT1 & 11 & 1.75e-02& & 8 & 5.65e-03 &  & 15 & 4.09e-08& 5\\
& SH3TC2 & 15 & 1.87e-02& & 9 & 7.39e-03 &  & 18 & 2.31e-07& 4\\
& USP20 & 14 & 1.86e-02& & 10 & 7.68e-03 &  & 19 & 3.61e-07& 3\\
\end{tabular}
}
\end{center}
\end{table}

\begin{table}[htbp]
\begin{center}
\caption{The 10 genes  with smallest $p$-value identified using \R~ \texttt{survdiff} in cancer data. For each cancer type, we show the top 10 genes, their rank, and $p$-value using the exact permutational test, the exact conditional test, and \R~ \texttt{survdiff}. The number of samples with a mutation in the gene and references supporting the association of mutations in the gene with survival are also reported.}
\label{table:Rsurvdiff}
{\scriptsize
\begin{tabular}{c | c c c c c c c c c c c}
& & \multicolumn{2}{c}{exact permutational} & & \multicolumn{2}{c}{exact conditional} & &  \multicolumn{2}{c}{R \texttt{survdiff}} & & \\
\cline{3-4} \cline{6-7} \cline{9-10} 
cancer type & gene & rank & $p$ & & rank & $p$ & & rank & $p$ & num. mut. & refs\\
\hline
COADREAD & VPS13B & 3 & 1.22e-02& & 1 & 3.57e-04 &  & 1 & 6.21e-12&  8\\
& ZNF804B & 8 & 3.97e-02& & 5 & 2.92e-03 &  & 2 & 1.07e-10&  5\\
& ERBB4 & 5 & 1.33e-02& & 2 & 5.89e-04 &  & 3 & 2.16e-10&  8\\
& AQPEP & 9 & 4.86e-02& & 6 & 5.38e-03 &  & 4 & 2.55e-08&  5\\
& DOCK9 & 2 & 9.18e-03& & 3 & 1.72e-03 &  & 5 & 4.25e-08&  6\\
& DOCK3 & 11 & 6.21e-02& & 8 & 1.11e-02 &  & 6 & 3.92e-06&  3\\
& IGF1R & 14 & 6.61e-02& & 9 & 1.26e-02 &  & 7 & 8.95e-06&  3\\
& ARAP1 & 50 & 1.75e-01& & 35 & 4.74e-02 &  & 8 & 1.03e-05&  3\\
& PRKCB & 51 & 1.75e-01& & 36 & 4.74e-02 &  & 9 & 1.03e-05&  3\\
& PTPRM & 1 & 5.20e-03& & 4 & 2.30e-03 &  & 10 & 1.74e-05&  8\\
GBM & HDAC9 & 52 & 8.72e-02& & 3 & 5.12e-03 &  & 1 & 6.13e-06&  3\\
& GRM7 & 61 & 9.95e-02& & 6 & 8.68e-03 &  & 2 & 7.33e-05&  3\\
& MYST3 & 62 & 1.00e-01& & 7 & 8.88e-03 &  & 3 & 7.86e-05&  3\\
& TBC1D2 & 33 & 5.51e-02& & 1 & 3.69e-03 &  & 4 & 7.96e-05&  5\\
& CASKIN2 & 46 & 7.92e-02& & 5 & 6.88e-03 &  & 5 & 1.59e-04&  4\\
& NFYC & 68 & 1.06e-01& & 10 & 1.16e-02 &  & 6 & 1.96e-04&  3\\
& ZNF618 & 70 & 1.07e-01& & 11 & 1.20e-02 &  & 7 & 2.26e-04&  3\\
& ATP13A2 & 53 & 8.77e-02& & 4 & 6.36e-03 &  & 8 & 3.71e-04&  6\\
& LIMCH1 & 73 & 1.10e-01& & 13 & 1.36e-02 &  & 9 & 3.76e-04&  3\\
& LMX1B & 79 & 1.15e-01& & 14 & 1.53e-02 &  & 10 & 5.16e-04&  3\\
KIRC & ARHGAP28 & 495 & 3.76e-01& & 15 & 7.25e-03 &  & 1 & 0.00e+00&  3\\
& EZH2 & 42 & 2.89e-02& & 4 & 8.07e-04 &  & 2 & 0.00e+00&  3 & \\
& PRKG2 & 508 & 3.91e-01& & 34 & 1.45e-02 &  & 3 & 3.33e-16&  3\\
& FBXO43 & 26 & 2.19e-02& & 2 & 2.46e-04 &  & 4 & 4.44e-16&  5\\
& ATP10D & 2 & 1.91e-03& & 3 & 6.15e-04 &  & 5 & 3.55e-12&  3\\
& RASEF & 46 & 3.21e-02& & 7 & 3.19e-03 &  & 6 & 5.48e-11&  3\\
& CDCA2 & 1 & 8.54e-04& & 1 & 1.75e-04 &  & 7 & 6.15e-10&  6\\
& PCDH11X & 47 & 3.26e-02& & 9 & 4.16e-03 &  & 8 & 1.05e-09&  3\\
& ZFAT & 48 & 3.43e-02& & 10 & 4.57e-03 &  & 9 & 2.65e-09&  3\\
& TDRD7 & 3 & 3.66e-03& & 5 & 1.50e-03 &  & 10 & 5.00e-09&  3\\
LUSC & CHTF18 & 1014 & 2.13e-01& & 3 & 5.24e-04 &  & 1 & 0.00e+00&  4\\
& RLTPR & 656 & 1.21e-01& & 8 & 1.94e-03 &  & 2 & 7.70e-14&  3\\
& ZNF557 & 75 & 7.07e-03& & 2 & 4.09e-04 &  & 3 & 1.96e-13&  3\\
& CD33 & 681 & 1.27e-01& & 13 & 3.71e-03 &  & 4 & 3.90e-10&  3\\
& SPATA22 & 703 & 1.34e-01& & 20 & 5.15e-03 &  & 5 & 9.90e-09&  3\\
& SIN3B & 111 & 1.23e-02& & 6 & 1.55e-03 &  & 6 & 1.00e-08&  3\\
& ANKRD11 & 487 & 8.70e-02& & 7 & 1.70e-03 &  & 7 & 2.41e-08&  5\\
& NARG2 & 120 & 1.35e-02& & 9 & 2.15e-03 &  & 8 & 8.62e-08&  3\\
& ZC3H12C & 1088 & 2.27e-01& & 28 & 7.46e-03 &  & 9 & 2.85e-07&  4\\
& VN1R4 & 712 & 1.36e-01& & 29 & 7.78e-03 &  & 10 & 3.83e-07&  3\\
OV & KIAA0913 & 82 & 1.21e-01& & 2 & 8.65e-03 &  & 1 & 6.22e-05&  4\\
& PABPC3 & 39 & 4.42e-02& & 1 & 5.95e-03 &  & 2 & 1.04e-04&  4\\
& NIPBL & 110 & 1.85e-01& & 3 & 1.23e-02 &  & 3 & 2.56e-04&  5\\
& PCDHA9 & 46 & 5.67e-02& & 4 & 1.23e-02 &  & 4 & 8.84e-04&  4\\
& RBM44 & 91 & 1.49e-01& & 6 & 1.85e-02 &  & 5 & 9.99e-04&  4\\
& EFCAB6 & 124 & 2.07e-01& & 9 & 2.07e-02 &  & 6 & 1.56e-03&  5\\
& TNRC18B & 55 & 7.05e-02& & 7 & 1.95e-02 &  & 7 & 2.96e-03&  4\\
& FARSA & 103 & 1.70e-01& & 15 & 2.89e-02 &  & 8 & 3.84e-03&  4\\
& PLEKHG1 & 104 & 1.70e-01& & 16 & 2.94e-02 &  & 9 & 4.02e-03&  4\\
& PTCH1 & 65 & 8.80e-02& & 8 & 2.05e-02 &  & 10 & 5.46e-03&  6\\
UCEC & KLHL29 & 51 & 4.08e-02& & 24 & 1.39e-02 &  & 1 & 0.00e+00&  3\\
& KRT6C & 2 & 4.41e-03& & 2 & 9.98e-04 &  & 2 & 0.00e+00&  3\\
& LSS & 4 & 6.96e-03& & 3 & 1.82e-03 &  & 3 & 7.61e-14&  3\\
& MYC & 86 & 7.56e-02& & 31 & 1.87e-02 &  & 4 & 6.05e-13&  4\\
& SH3GL1 & 53 & 4.86e-02& & 32 & 1.87e-02 &  & 5 & 6.05e-13&  3\\
& CTGF & 1 & 6.87e-04& & 1 & 5.23e-04 &  & 6 & 3.68e-12&  3\\
& TLR2 & 9 & 1.35e-02& & 4 & 2.96e-03 &  & 7 & 3.68e-11&  5\\
& EIF2C4 & 6 & 1.13e-02& & 5 & 4.02e-03 &  & 8 & 9.56e-10&  3\\
& LOC342346 & 59 & 5.83e-02& & 47 & 2.79e-02 &  & 9 & 5.35e-09&  3\\
& RAB40AL & 60 & 5.83e-02& & 48 & 2.79e-02 &  & 10 & 5.35e-09&  3\\
\end{tabular}
}
\end{center}
\end{table}

\subsubsection*{Published Cancer Studies}

We analyzed differences between survival distributions reported in two published genomic studies~\cite{Huang:2009fk,Yan:2009uq}. We considered only cases where the smallest population included at most $30\%$ of all samples. We compared the exact permutational $p$-value with the $p$-value reported in the publications obtained using asymptotic approximations. Since the data for these studies is not publicly available, we inferred the data necessary to perform the log-rank test using the figures in the publications. In particular, since the exact time of events (censored or not) is not used by the log-rank test, we only inferred the order of events into the two populations and the censoring information. We then used \R~\texttt{survdiff} to obtain the $p$-value from the asymptotic approximation, and compared it with the $p$-value reported in the paper to validate the information we inferred from the figures.

In particular, for~\cite{Huang:2009fk} we considered:
\begin{itemize}
\item Figure~2L: population sizes: 2 and 14. Reported $p=0.012$; \R~\texttt{survdiff} $p=0.012$; exact permutational $p= 0.017$;
\item Figure~2I: population sizes: 8 and 48. Reported $p<10^{-4}$; \R~\texttt{survdiff} $p=6\times10^{-6}$; exact permutational $p= 2\times 10^{-3}$;
\item Figure~2H: population sizes: 9 and 21. Reported $p=3.2\times10^{-3}$; \R~\texttt{survdiff} $p=3.2\times10^{-3}$; exact permutational $p= 1.1\times 10^{-2}$.
\end{itemize}

For~\cite{Yan:2009uq} we considered:
\begin{itemize}
\item Figure~3A: population sizes: 14 and 115. Reported $p=2\times 10^{-3}$; \R~\texttt{survdiff} $p=2.3\times 10^{-3}$; exact permutational $p= 4.8\times 10^{-4}$;
\item Figure~3B: population sizes: 14 and 38. Reported $p< 10^{-3}$; \R~\texttt{survdiff} $p=6\times 10^{-6}$; exact permutational $p= 5\times 10^{-4}$.
\end{itemize}

\section*{Comparison of Exact Permutational Test and Cox Proportional-Hazard Model on Synthetic Data}
Three asymptotically equivalent statistical tests are commonly used to assess  significance using the Cox Proportional-Hazard model: the score test, the Wald test, and the likelihood ratio test. All the three tests are based on an asymptotic approximation for the distribution of the test statistic.

We applied the three tests (score test, Wald test, and likelihood ratio test) to assess statistical significance under the Cox Proportional-Hazard Model, on randomly generated survival and mutation data, where no mutation is associated with survival. The three tests are based on asymptotic approximations for the distribution of the test statistic. We focused on the case of unbalanced populations. In particular, we considered $n=100$ total patients, and $n_1=5$ patients in the small population. We used the \R~\texttt{coxph} package to compute the $p$-values. Fig.~\ref{fig:Cox_asymptotic} shows that for the score test and the Wald test the asymptotic approximation is inaccurate, while the asymptotic approximation is pretty accurate for the likelihood ratio test.

We then compared the accuracy of the $p$-values obtained with the exact permutational test and the $p$-values from the Cox likelihood ratio test. In particular, we use \emph{synthetic data}, generated using the same distribution for the survival time and for the censoring time for all patients, and using different procedures to generate the mutation data. More specifically, we compared the empirical $p$-value (obtained by generating the data a number of times using the same parameters for the distribution) with the $p$-values from the exact permutational test and the $p$-values from the Cox likelihood ratio test using synthetic data. We generate the mutation data using two related but different procedures.  In the first procedure, we mutate a gene $g$ in exactly a fraction $f$ of all patients.  In the second procedure, we mutated a gene $g$ in each patient independently with probability $f$. The second procedure models the fact that mutations in a gene $g$ are found in each patient independently with a certain 
probability (that depends on the background mutation rate, the length of the gene, etc.).  Thus, when repeating a study on a cohort of patients of the same size, only the expected number of patients in which $g$ is mutated is the same, and the observed number may vary. In both cases, the survival information is generated from the same distribution for all patients. In particular, the survival time comes from the exponential distribution with expectation equal to 30, and censoring variable from an exponential distribution resulting in $30\%$ of censoring.  In Fig.~\ref{fig:cmp_Cox_exact} we compare the $p$-values computed from the exact permutational test and the Cox likelihood ratio test with the empirical $p$-values (obtained repeating the experiment 10000 times) for the first distribution, while in Fig.~\ref{fig:cmp_Cox_expect} we compare the $p$-values computed from the exact permutational test and the Cox likelihood ratio test with the empirical $p$-values (obtained repeating the experiment 10000 times) 
for the second distribution. In both cases, the $p$-values (restricted to $p$-values $\le 0.01$) from the exact permutational distribution have higher R coefficients than the $p$-values from the Cox likelihood ratio test when compared to the empirical $p$-values (considering the $-log_{10}$ $p$-values in order to compute the R coefficient).

\begin{figure}[htbp]
\centering
                \subfloat[][]{
             \includegraphics[width=0.5\textwidth]{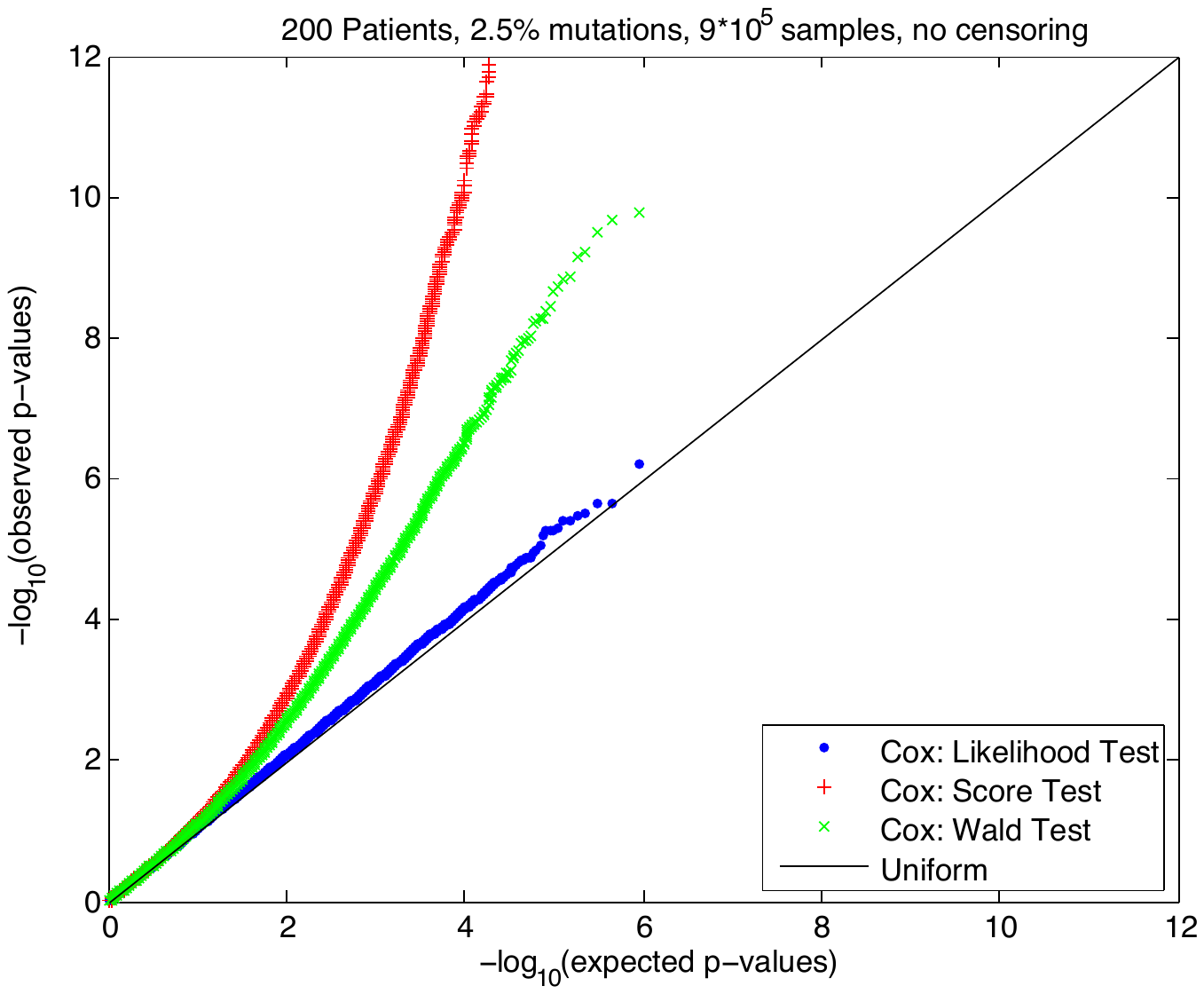}
              \label{fig:Cox_asymptotic}
                }
                \\
		 \subfloat[][]{
		 \includegraphics[width=0.4375\textwidth]{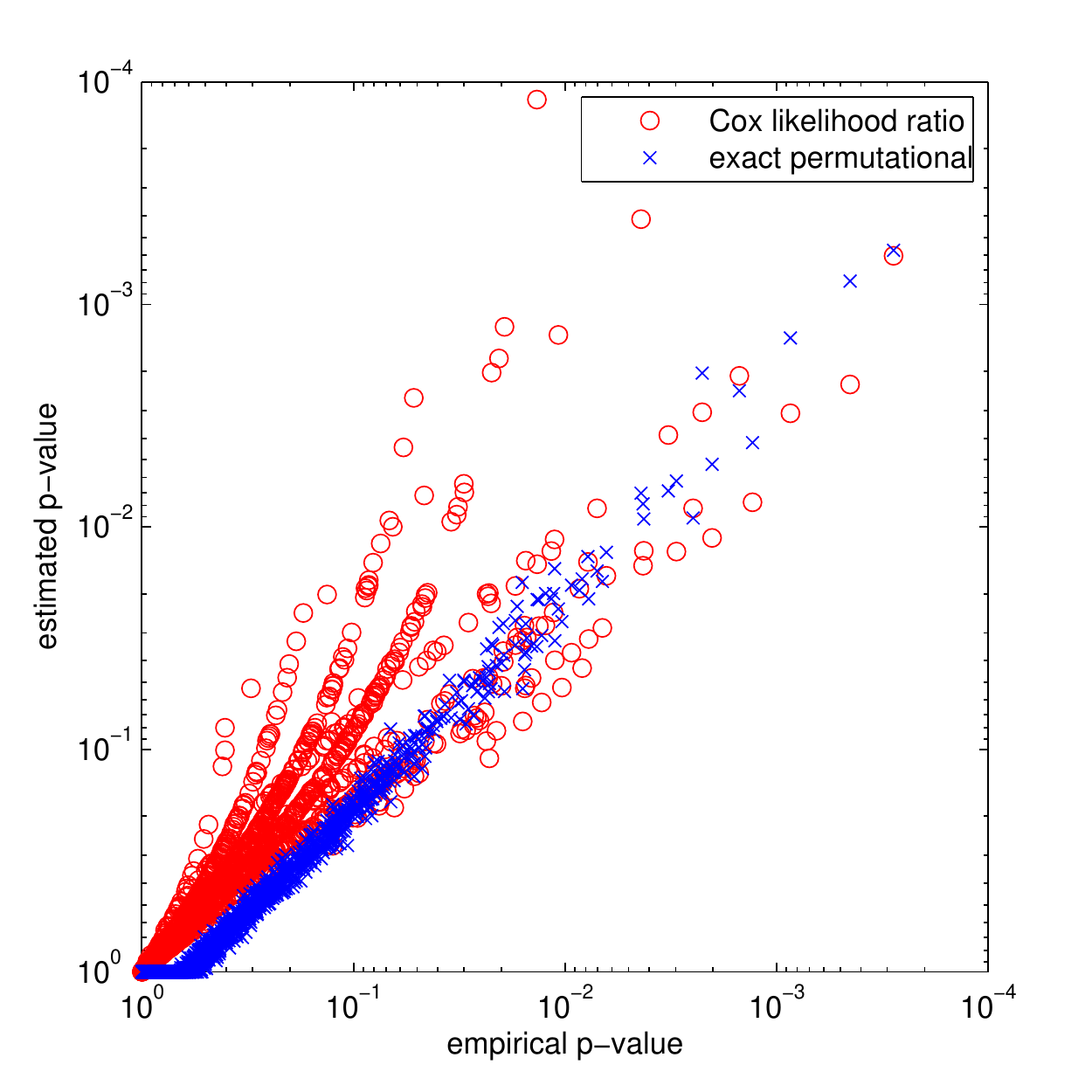}
		\label{fig:cmp_Cox_exact} 
                }
                \quad
                \subfloat[][]{
		 \includegraphics[width=0.4375\textwidth]{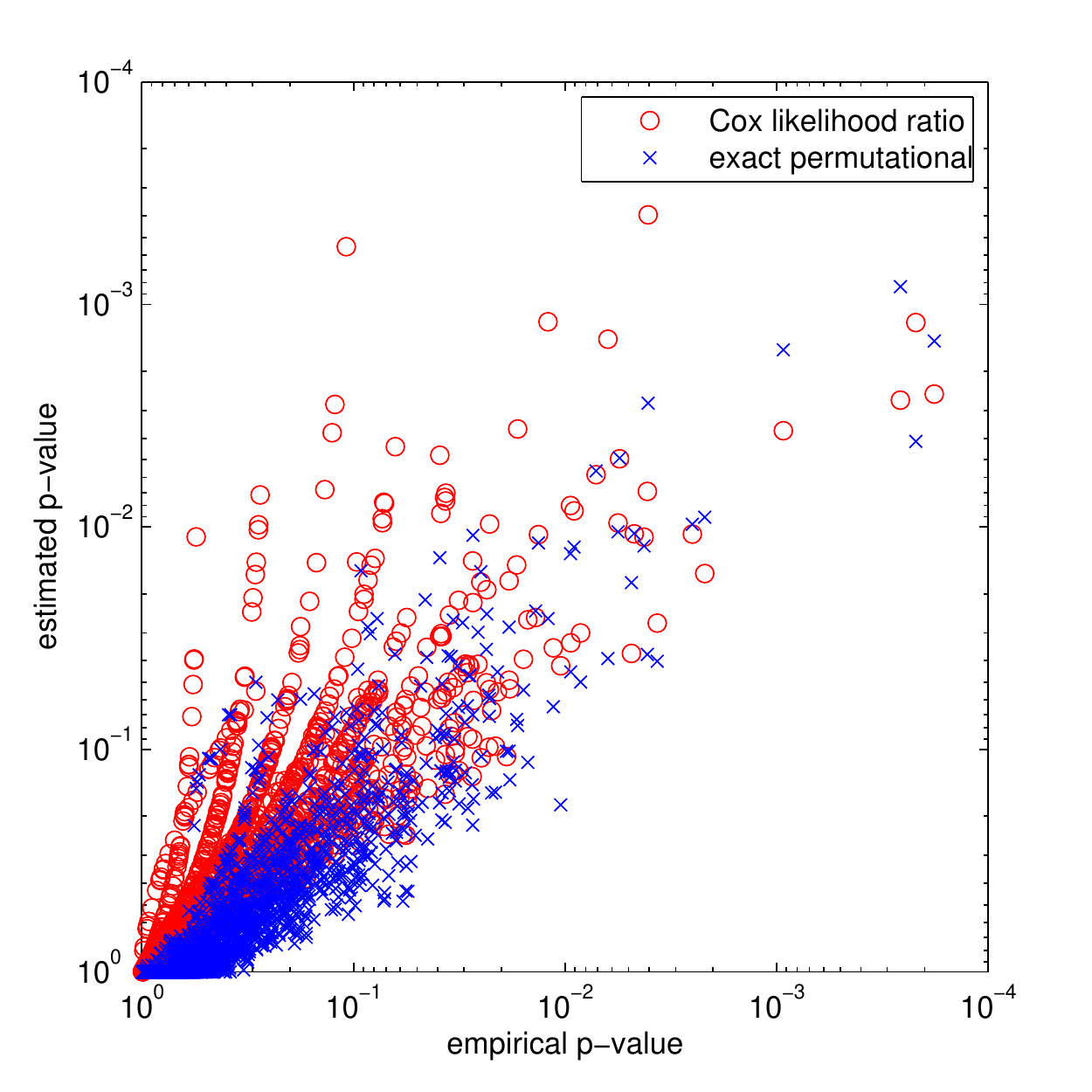}
		\label{fig:cmp_Cox_expect} 
                }
\caption{
Comparison of the $p$-values from asymptotic approximations for the Cox Proportional-Hazard model and the uniform distribution, and comparison of the $p$-values from exact permutational tests and the Cox likelihood ratio test with the empirical $p$-values for two different null distributions. (a) Distribution of $p$-values obtained using the asymptotic approximation for the Cox Proportional-Hazard model and the distribution of $p$-values for the uniform distribution. Generated considering $9\times10^5$ instances with $n=100$ total samples, $n_1=5$ samples in the small population and same survival distribution for all patients (no censoring). 
(b) Comparison of Cox likelihood ratio $p$-values, exact permutational $p$-values, and empirical $p$-values for $n=100, n_1=5\%n$, and $30\%$ censoring. Each point represents an instance of survival data. ({c}) Comparison of Cox likelihood ratio $p$-values, exact permutational $p$-values, and empirical $p$-values for $n=100$, expectation($n_1$)=$5\%n$, and $30\%$ censoring. Each point represents an instance of mutations and survival data. The R coefficients comparing the $-log_{10}$ exact $p$-values to the $-log_{10}$ empirical $p$-values are the following: in Fig.~(b), permutational $= 0.96$, Cox likelihood ratio $= 0.70$; in Fig~(c), permutational $= 0.72$, Cox likelihood ratio $= 0.46$. For both distributions the difference between R coefficients is significant $(p<10^{-3})$.}
\label{fig:cmp_empirical}
\end{figure}

The Cox proportional-hazards model is often used to correct for other variables that may be correlated to survival, like age, gender, or tumor stage. While the multivariate case is not the focus of this work, 
in this scenario a common rule of thumb~\cite{Concato:1995fk,Peduzzi:1995uq,Peduzzi:1996kx} states that Cox models should be used with a minimum of 10 outcome events per predictor variable to obtain reliable results. This limits its applicability to moderately frequent events even in large genomic studies.  For example, 
if we include three predictor variables in the model in addition to the mutation status of a gene, then
only two of our seven cancer datasets (LUSC and UCEC) have more than 3 genes (7 and 8, respectively) that have the minimum recommended number of mutations. 
Extensions of the exact test presented here might prove useful in such settings of a small number of events per predictor variable.  In particular, a stratified log-rank test using an exact distribution is a promising alternative.

\newpage
\bibliographystyle{plain}
\bibliography{survival}  

\end{document}